
\documentclass[10pt,journal,compsoc]{IEEEtran}
%
% If IEEEtran.cls has not been installed into the LaTeX system files,
% manually specify the path to it like:
% \documentclass[10pt,journal,compsoc]{../sty/IEEEtran}

% Some very useful LaTeX packages include:
% (uncomment the ones you want to load)

% *** MISC UTILITY PACKAGES ***
%
% \usepackage{ifpdf}
% Heiko Oberdiek's ifpdf.sty is very useful if you need conditional
% compilation based on whether the output is pdf or dvi.
% usage:
% \ifpdf
%   % pdf code
% \else
%   % dvi code
% \fi
% The latest version of ifpdf.sty can be obtained from:
% http://www.ctan.org/pkg/ifpdf
% Also, note that IEEEtran.cls V1.7 and later provides a builtin
% \ifCLASSINFOpdf conditional that works the same way.
% When switching from latex to pdflatex and vice-versa, the compiler may
% have to be run twice to clear warning/error messages.

% *** CITATION PACKAGES ***
%
\ifCLASSOPTIONcompsoc
%   % IEEE Computer Society needs nocompress option
%   % requires cite.sty v4.0 or later (November 2003)
  \usepackage[nocompress]{cite}
\else
%   % normal IEEE
  \usepackage{cite}
\fi
% cite.sty was written by Donald Arseneau
% V1.6 and later of IEEEtran pre-defines the format of the cite.sty package
% \cite{} output to follow that of the IEEE. Loading the cite package will
% result in citation numbers being automatically sorted and properly
% "compressed/ranged". e.g., [1], [9], [2], [7], [5], [6] without using
% cite.sty will become [1], [2], [5]--[7], [9] using cite.sty. cite.sty's
% \cite will automatically add leading space, if needed. Use cite.sty's
% noadjust option (cite.sty V3.8 and later) if you want to turn this off
% such as if a citation ever needs to be enclosed in parenthesis.
% cite.sty is already installed on most LaTeX systems. Be sure and use
% version 5.0 (2009-03-20) and later if using hyperref.sty.
% The latest version can be obtained at:
% http://www.ctan.org/pkg/cite
% The documentation is contained in the cite.sty file itself.
%
% Note that some packages require special options to format as the Computer
% Society requires. In particular, Computer Society  papers do not use
% compressed citation ranges as is done in typical IEEE papers
% (e.g., [1]-[4]). Instead, they list every citation separately in order
% (e.g., [1], [2], [3], [4]). To get the latter we need to load the cite
% package with the nocompress option which is supported by cite.sty v4.0
% and later. Note also the use of a CLASSOPTION conditional provided by
% IEEEtran.cls V1.7 and later.

% *** GRAPHICS RELATED PACKAGES ***
%
\ifCLASSINFOpdf
  \usepackage[pdftex]{graphicx}
  % declare the path(s) where your graphic files are
  \graphicspath{{./}}
  % and their extensions so you won't have to specify these with
  % every instance of \includegraphics
  \DeclareGraphicsExtensions{.pdf,.jpeg,.png}
\else
  % or other class option (dvipsone, dvipdf, if not using dvips). graphicx
  % will default to the driver specified in the system graphics.cfg if no
  % driver is specified.
  % \usepackage[dvips]{graphicx}
  % declare the path(s) where your graphic files are
  % \graphicspath{{../eps/}}
  % and their extensions so you won't have to specify these with
  % every instance of \includegraphics
  % \DeclareGraphicsExtensions{.eps}
\fi
\usepackage{url}
% url.sty was written by Donald Arseneau. It provides better support for
% handling and breaking URLs. url.sty is already installed on most LaTeX
% systems. The latest version and documentation can be obtained at:
% http://www.ctan.org/pkg/url
% Basically, \url{my_url_here}.

% *** Do not adjust lengths that control margins, column widths, etc. ***
% *** Do not use packages that alter fonts (such as pslatex).         ***
% There should be no need to do such things with IEEEtran.cls V1.6 and later.
% (Unless specifically asked to do so by the journal or conference you plan
% to submit to, of course. )

% insert by authors
\usepackage{amsmath}
\usepackage{amssymb}
\usepackage{booktabs}
\usepackage{enumitem}
\usepackage{bm}
\usepackage{epsfig}
\usepackage{float} 
\usepackage{makecell}
\usepackage{colortbl}
\usepackage{mathtools}
\usepackage{extarrows}
\usepackage{algorithmic}
\usepackage{balance}
\usepackage{multirow}
\usepackage[percent]{overpic}

\usepackage{multirow}
\usepackage{comment}

\usepackage[utf8]{inputenc} % allow utf-8 input
\usepackage[T1]{fontenc}    % use 8-bit T1 fonts
\usepackage{booktabs}       % professional-quality tables
\usepackage{amsfonts}       % blackboard math symbols
\usepackage{nicefrac}       % compact symbols for 1/2, etc.
\usepackage{microtype}      % microtypography
\usepackage{pifont}
\usepackage{wrapfig}
\usepackage{graphicx}
\usepackage{subfig}
\usepackage{adjustbox}
\usepackage[pagebackref=true,breaklinks=true,bookmarks=false,citecolor=green,linkcolor=blue,colorlinks]{hyperref}

\definecolor{cGrey}{RGB}{212,214,212}

\usepackage[dvipsnames]{xcolor}

\usepackage{amsthm}

\newtheorem{corollary}{Corollary}

\newtheorem{assumption}{Assumption}

% correct bad hyphenation here
\hyphenation{op-tical net-works semi-conduc-tor}

\begin{document}
%
% paper title
% Titles are generally capitalized except for words such as a, an, and, as,
% at, but, by, for, in, nor, of, on, or, the, to and up, which are usually
% not capitalized unless they are the first or last word of the title.
% Linebreaks \\ can be used within to get better formatting as desired.
% Do not put math or special symbols in the title.
\title{$S^2$-Transformer for Mask-Aware Hyperspectral\\ Image Reconstruction}
%
%
% author names and IEEE memberships
% note positions of commas and nonbreaking spaces ( ~ ) LaTeX will not break
% a structure at a ~ so this keeps an author's name from being broken across
% two lines.
% use \thanks{} to gain access to the first footnote area
% a separate \thanks must be used for each paragraph as LaTeX2e's \thanks
% was not built to handle multiple paragraphs
%
%
%\IEEEcompsocitemizethanks is a special \thanks that produces the bulleted
% lists the Computer Society journals use for "first footnote" author
% affiliations. Use \IEEEcompsocthanksitem which works much like \item
% for each affiliation group. When not in compsoc mode,
% \IEEEcompsocitemizethanks becomes like \thanks and
% \IEEEcompsocthanksitem becomes a line break with idention. This
% facilitates dual compilation, although admittedly the differences in the
% desired content of \author between the different types of papers makes a
% one-size-fits-all approach a daunting prospect. For instance, compsoc 
% journal papers have the author affiliations above the "Manuscript
% received ..."  text while in non-compsoc journals this is reversed. Sigh.

\author{
        Jiamian~Wang,
        Kunpeng~Li,
        Yulun~Zhang,
        Xin~Yuan,~\IEEEmembership{Senior~Member,~IEEE},
        and~Zhiqiang~Tao
        % <-this % stops a space
\IEEEcompsocitemizethanks{
\IEEEcompsocthanksitem Jiamian Wang and Zhiqiang Tao are with the School of Information, Rochester Institute of Technology, Rochester,
NY 14623 USA. E-mail: \{jw4905, zxtics\}@g.rit.edu.
\IEEEcompsocthanksitem Kunpeng Li is with Meta GenAI, CA, USA. E-mail: kunpengli@meta.com
\IEEEcompsocthanksitem Yulun Zhang is with the Shanghai Jiao Tong University, Shanghai, China. Email: yulun100@gmail.com 
\IEEEcompsocthanksitem Xin Yuan is with Westlake University, Hangzhou, China. E-mail: xyuan@westlake.edu.cn 
}% <-this % stops an unwanted space
\thanks{(Corresponding authors: Kunpeng Li \& Zhiqiang Tao)}
\thanks{Manuscript updated March 12, 2025.}
}

\IEEEtitleabstractindextext{%
{
\begin{abstract}
Snapshot compressive imaging (SCI) surges as a novel way of capturing hyperspectral images. It operates an optical encoder to compress the 3D data into a 2D measurement and adopts a software decoder for the signal reconstruction. Recently, a representative SCI set-up of coded aperture snapshot compressive imager (CASSI) with Transformer reconstruction backend remarks high-fidelity sensing performance. However, dominant spatial and spectral attention designs show limitations in hyperspectral modeling. The spatial attention values  describe the inter-pixel correlation but overlook the across-spectra variation within each pixel. The spectral attention size is unscalable to the token spatial size and thus bottlenecks information allocation.  Besides, CASSI entangles the spatial and spectral information into a 2D measurement, placing a barrier for information disentanglement and modeling.  In addition, CASSI blocks the light with a physical binary mask, yielding the masked data loss. To tackle above challenges, we propose a spatial-spectral ($S^2$-) Transformer implemented by a paralleled attention design and a mask-aware learning strategy. {Firstly, we systematically explore pros and cons of different spatial (-spectral) attention designs}, based on which we find performing both attentions in parallel well disentangles and models the blended information.   Secondly, the masked pixels induce higher prediction difficulty and should be treated differently from unmasked ones. We adaptively prioritize the loss penalty attributing to the mask structure by referring to the mask-encoded prediction as an uncertainty estimator. We theoretically discuss the distinct convergence tendencies between masked/unmasked regions of the proposed learning strategy. {Extensive experiments demonstrate that on average, the results of the proposed method are superior over the state-of-the-art methods.}
We empirically visualize and reason the behaviour of spatial and spectral attentions, and comprehensively examine the impact of the mask-aware learning, both of which advances the physics-driven deep network design for the reconstruction with CASSI.
Code is available at \url{https://github.com/Jiamian-Wang/S2-transformer-HSI}.
\end{abstract}}

% Note that keywords are not normally used for peerreview papers.
\begin{IEEEkeywords}
Snapshot Compressive Imaging (SCI), Hyperspectral Image Reconstruction, Coded Aperture Snapshot Spectral Imaging (CASSI), Transformer, Interpretability.
\end{IEEEkeywords}}

% make the title area
\maketitle

% To allow for easy dual compilation without having to reenter the
% abstract/keywords data, the \IEEEtitleabstractindextext text will
% not be used in maketitle, but will appear (i.e., to be "transported")
% here as \IEEEdisplaynontitleabstractindextext when the compsoc 
% or transmag modes are not selected <OR> if conference mode is selected 
% - because all conference papers position the abstract like regular
% papers do.
\IEEEdisplaynontitleabstractindextext
% \IEEEdisplaynontitleabstractindextext has no effect when using
% compsoc or transmag under a non-conference mode.

% For peer review papers, you can put extra information on the cover
% page as needed:
% \ifCLASSOPTIONpeerreview
% \begin{center} \bfseries EDICS Category: 3-BBND \end{center}
% \fi
%
% For peerreview papers, this IEEEtran command inserts a page break and
% creates the second title. It will be ignored for other modes.
\IEEEpeerreviewmaketitle

\IEEEraisesectionheading{\section{Introduction}\label{sec:introduction}}
\IEEEPARstart{H}yperspectral {images record pixels of the scene across a wide range of spectrum, which enable not only high spatial granularity~\cite{suo2023recent,wen2021change,zheng2020foreground}, but also fine wavelength resolutions~\cite{jia2021survey,wang2022comprehensive,herold2003spectral,xu2021multiple,xu2023luojia}}. Due to their expressiveness, hyperspectral images are widely used in applications of biomedicine, remote sensing, astronomy~\cite{fu2018hyperspectral,fu2019hyperspectral,lu2014medical,suo2021computational}, \emph{etc}. For hyperspectral image capturing, the technology of snapshot compressive imaging (SCI) has come to the forefront in recent years, among which the  coded aperture snapshot compressing imager (CASSI)~\cite{gehm2007single,wagadarikar2008single} serves as a prevailing setup. It operates as an optical encoder, which compresses the hyperspectral signals into \textbf{2D} measurements and then requires a software decoder for the signal retrieval~\cite{yuan2021snapshot}. Recently, how to perform high-fidelity reconstruction draws lots of research attention~\cite{bioucas2007new,liu2018rank,ma2019deep,Bioucas-Dias2007TwIST,figueiredo2007gradient,wang2016adaptive,Meng20ECCV_TSAnet,Wang_2020_CVPR,zheng2021deep}, among which, Transformer-based models achieve the best performance~\cite{cai2022mask,cai2022coarse}.

{However, dominant multi-head self-attention designs (spatial-, spectral-) in Transformers show limitations in CASSI reconstruction, respectively. Firstly, spatial attention assigns each pixel with a \textbf{3D} token ($1\times1\times N$), which collects the information from all spectra. The \textbf{2D} attention matrix can only describe the inter-pixel (token) correlation, but fails to reflect the spectral variations within tokens. Despite the fact that the multi-head mechanism splits channels into groups and empowers the representation diversity, it is inflexible and fails to adapt to the complex spectral variation in practice. As shown in Fig.~\ref{fig: coverfig}, content from two different wavelengths ($498.0$nm and $614.4$nm) is similar in \texttt{region 1}, but drastically varies in \texttt{region 2}. Secondly, spectral attention takes each channel as a token and computes a \textbf{2D} attention matrix. Notably, no matter the spatial size of the tokens, the pairwise token correlation remains a scalar value (as highlighted in the right of Fig.~\ref{fig: coverfig}), leading to the dimensionality of the attention matrix remaining invariant to the token resolution. This property implies a bottleneck in allocating visual details in attention calculation. Finally, different from the traditional hyperspectral imaging tasks that directly handle the \textbf{3D} hyperspectral data, CASSI imposes a novel challenge by entangling the information from different wavelengths into a \textbf{2D} measurement (implemented by a disperser) as shown in Fig.~\ref{fig: cassi}. How to arrange the spatial and spectral attention for proper signal disentangling remains under-explored. }

{Besides the information entangling, we observe that CASSI blocks the light with a physical binary coded aperture (mask) for the signal encoding (as shown in Fig.~\ref{fig: cassi}), yielding a \textit{masked data loss}, which to the best knowledge, is generally overlooked by the literature -- the commonly-used MSE (RMSE) loss treats all of the pixels equally, hardly accounting for the different behaviours underlying masked and unmasked regions. To this end, we explicitly consider the \textit{masked data loss} from the data uncertainty perspective~\cite{kendall2017uncertainties}. Since the signals in masked areas are physically blocked, the data clues underlying the masked pixels are missed, which indicates a higher uncertainty of masked pixels than the unmasked ones. However, there lacks a way to distinguish this novel optical-induced uncertainty from the inherent texture-induced aleatoric uncertainty~\cite{kendall2017uncertainties}. This can be solved by approximating an encoded signal ($\mathbf{F}'$ in Fig.~\ref{fig: cassi}) as an  uncertainty estimator during the optimization on-the-fly.  }

\begin{figure}[tp]
\includegraphics[width=0.47\textwidth]{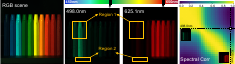}
\caption{ Hyperspectral image characteristics. Spatial correlations vary from adjacent wavelengths (\emph{e.g.}, \texttt{neighbor A}, \texttt{B}) to distant bands (\texttt{region 1}). The bottom-left correlation matrix upon multiple sampled wavelengths (\emph{i.e.}, 28$\times$28) defines a unique spectral distribution of a hyperspectral image. Spatial and spectral properties are inter-dependent.
}
\vspace{-5mm}
\label{fig: coverfig}
\end{figure}

{In this study, we propose a spatial-spectral ($S^2$-) Transformer with a mask-aware learning strategy. 
Firstly, we systematically investigate the standard spatial (spectral) multi-head self-attention designs along with their combinations under a unified and scalable network structure. We introduce a \textbf{parallel spatial-spectral attention} structure as our final proposal, based on which we empirically reveal the distinct functionalities of both attentions in the signal modeling procedure.
Secondly, we propose a mask-aware learning strategy to explicitly consider the optics-induced pixel-wise reconstruction difficulties, which assumes that pixels with higher ``uncertainty'' (regression difficulty) should be prioritized for the loss penalty~\cite{kendall2017uncertainties,ning2021uncertainty,sapkota2024distributionally}. Unlike the existing aleatoric uncertainty~\cite{lee2019gram,gilton2021deep}, we approximate the encoded signal on-the-fly and take it as an uncertainty estimator.
We then adaptively penalize the pixel-wise reconstruction weighted by the uncertainty for a fine-grained reconstruction. 
{We summarize the contributions as follows:}
{\begin{itemize}
    \item By observing the optical encoding procedure of SCI, this work identifies two challenges of \textit{information entanglement} and \textit{masked data loss} that impede a high-fidelity reconstruction, bringing new insights into designing SCI reconstruction networks from a physical perspective.
\item  This work systematically studies a group of spatial/spectral self-attention architectures to address the information disentanglement pertaining to SCI, based on which we propose an $S^2$-Transformer with parallel spatial-spectral attention. The proposed method, to our best knowledge, is the first to thoroughly investigate different attention behaviours across both spatial and spectral dimensions in reconstructing SCI data.
\item The proposed $S^2$-Transformer method introduces a mask-aware learning strategy to mitigate the \textit{masked data loss} rooted in the physical mask encoding process of SCI. Both theoretical analysis and empirical evidence demonstrate better reconstruction quality in masked regions, boosting the overall performance perceptually and quantitatively. Besides, experiments on diverse competitive reconstruction backbones suggest the benefit of mask-aware learning for SCI systems.  
\item Extensive experiments demonstrate that the proposed method not only outperforms the state-of-the-art SCI methods on simulated and real HSI data, but also shows potential 
in handling various noise cases, as well as geological remotely sensed data. Our observations and induction provide insights for the future Transformer architecture design in snapshot
compressive imaging.
\end{itemize}}}

\begin{figure}[t]
\includegraphics[width=0.47\textwidth]{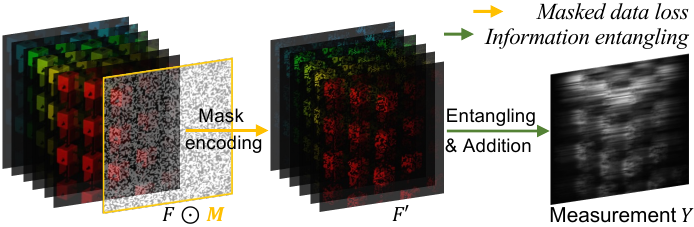}
\caption{CASSI pipeline (best viewed in color).}
\vspace{-3mm}
\label{fig: cassi}
\end{figure}

The rest of this paper is organized as follows. Section~\ref{sec: related works} introduces the related works in detail. Section~\ref{sec: method} develops the proposed $S^2$-Transformer and theoretically evaluates the mask-aware learning strategy. Extensive experiments of performance under different scenarios, ablation studies, and visual grounded evidence are demonstrated in Section~\ref{sec: experiment}. Section~\ref{sec: conclusion} concludes this paper. 

\section{Related Work}\label{sec: related works}
Inspired by the compressed sensing (CS) theories~\cite{candes2006robust,donoho2006compressed}, a series of HSI technologies such as occlusion mask~\cite{Cao11PAMI}, spatial light modulator~\cite{Yuan15JSTSP} and digital-micromirror-device~\cite{Wu11OL_DMD} have also been used for hyperspectral imaging. Some other systems use multiple-shots~\cite{kittle2010multiframe}, dual-channel~\cite{huang2021deep,wang2015dual,Wang15CVPR,Wang18TPAMI} and high-order information~\cite{arguello2013higher} to improve the performance.
In comparison, coded aperture snapshot
spectral imager (CASSI) uses a coded aperture and a prism to implement the spectral modulation, which serves as one of the most popular optical designs due to its concise setup, short acquisition time, low-cost consumption, and low-power usage~\cite{yuan2021snapshot}.

CASSI plays a role of optical encoder by compressively tweaking and storing the hyperspectral signal into 2D measurements. Correspondingly, it requires software-based algorithms for the signal retrieval. Existing works tackle this ill-posed reconstruction problem from different perspectives.  Regularization-based optimization algorithms~\cite{Bioucas-Dias2007TwIST,figueiredo2007gradient,wang2016adaptive} adopt various priors to confine the signal space, such as sparsity~\cite{wang2016adaptive}, and total variation (TV)~\cite{yuan2016generalized}, etc. Among them, DeSCI~\cite{liu2018rank} estimates the low-rank matrix structure of similar patch groups and leads to excellent performance. However, these methods generally can provide limited reconstruction performance and sometimes suffer from the unstable convergence and long running time.

Different from the regularization-based algorithms, deep unfolding methods ~\cite{Wang_2020_CVPR,wang2019hyperspectral,Ma19ICCV,fu2021bidirectional,zhang2022herosnet} employ the deep neural networks as denoisers in the widely-used optimization solver such as alternating direction method of multipliers (ADMM)~\cite{boyd2011distributed}.
Among them, the~\cite{wang2019hyperspectral} adopts a prior network to jointly learn the local coherence and dynamic characteristics of hyperspectral data, while~\cite{wang2020dnu} integrates the local and non-local correlation exploitation in the learnable module. 
Despite the strong interpretability and end-to-end learning property of the unfolding methods, the computational complexity and efficiency degrades with more stages of CNN models introduced. Another way of combining the optimization algorithm with deep networks is the Plug-and-Play (PnP) frameworks~\cite{chan2016plug,qiao2020snapshot,zheng2021deep}, where pretrained deep networks are incorporated as denoisers for a better efficiency, robustness, and performance trade-off. However, pretrained deep denoisers without re-training limits the performance of the framework. {Also, the proper determination of pretrained models remains underexplored.}

Inspired by strong representation ability of convolutional neural networks, 
deep CNN model-based methods~\cite{meng2020snapshot,Miao19ICCV,wang2019hyperspectral,Wang19TIP} have been introduced for a high-fidelity reconstruction. 
Among them, $\lambda$-net~\cite{Miao19ICCV} incorporates a generative adversarial learning framework. TSA-Net~\cite{Meng20ECCV_TSAnet} firstly introduces the self-attention in both spatial and spectral domain. However, it heavily relays on the spatial locality of convolution for the down-scaled attention map calculation, leading to limited long-range relationship modeling capacity. DGSMP~\cite{huang2021deep} formulates the reconstruction as a MAP problem and effectively learns the prior, achieving better performance.  
BIRNAT~\cite{cheng2022recurrent} adopts bidirectional RNNs for a follow-up channel retrieval, exploiting the sequential property of the hyperspectral data for the first time.   Besides, HDNet~\cite{hu2022hdnet} mainly regularizes the reconstruction in the frequency domain upon discrete Fourier transform (DFT). It performs the spatial-spectral learning with convolutional structure for a better feature extraction, but barely discloses the spatial and spectral characteristics of the hyperspectral images underlying the model design principles.
{More recently, Transformer architectures~\cite{vaswani2017attention,dosovitskiy2021an} becomes an emerging option for the hyperspectral image reconstruction. Specifically, MST~\cite{cai2022mask} models the long-range dependencies across different spectral channels. CST~\cite{cai2022coarse} selectively computes the self-attention within informative spatial windows and groups semantically-similar tokens into buckets regardless of the pixel spatial location, which sets the state-of-the-art performance. The proposed method differs from the above methods in certain aspects. Firstly, this work observes an information disentanglement and masked data loss from a physical scope. There lacks a systematic overview of the prevailing attention mechanisms for the task of SCI despite the novel Transformer design. The proposed method bridges this gap.}

There has been abundant work exploiting the power of spatial-spectral attention in hyperspectral imaging. For hyperspectral image classification, 
representative works~\cite{tang2020hyperspectral,li2020joint,peng2022spatial} abstract the determinative spatial-spectral features for a precise classification. In hyperspectral denoising, SQAD~\cite{pan2022sqad} computes the spatial and spectral correlations among adjacent regions (spectra), benefiting the intrinsic noise distribution modeling. In hyperspectral super-resolution, SSPSR~\cite{jiang2020learning} develops a spatial residual module and spectral attention residual module upon the CNN network for better representation learning. For snapshot compressive imaging (SCI), MST~\cite{cai2022mask} devises a spectral attention Transformer for the reconstruction. GAP-CCoT~\cite{wang2022snapshot} introduces spatial and channel attention in the unfolding framework. To our best knowledge, few works observe and analyze the different behaviour of the spatial and spectral attentions in processing the hyperspectral data. By comparison,  the proposed method not only exploits informative spatial-spectral clues in hyperspectral images, but also pursues better information disentangling upon the 2D measurement by systematically discussing different spatial-spectral attention designs. We provide extensive evidences to better explain the  different behaviours of spatial and spectral attention, encouraging the future endeavours.

Different from the prior efforts, in this work, our proposed method is inspired by observing the physical encoding process of the CASSI. We attribute the reconstruction difficulty into two types of data loss and present the solutions respectively. For the \textit{entangled data loss}, we systematically explore both spatial and spectral attentions under a unified structure. We uncover the distinct functionalities of spatial and spectral attention under the best-performed attention arrangement. {For the masked data loss, different from existing  works~\cite{cai2022mask,hu2022hdnet,wang2021new,wang2022modeling,wang2023cooperative} that mainly adopt $\ell_1$ or $\ell_2$ loss for reconstruction, we identify and take advantage of the optical-induced data uncertainty in a pixel-wise manner by adaptively emphasizing the masked pixels during training. We theoretically and empirically present the convergence tendencies of the proposed learning strategy upon both mask and unmasked regions. Notably, mask-aware learning brings consistent performance boost when applied on different reconstruction backbones. It is worth noting that adopting physical mask to encode signal widely exists in snapshot compressive imaging systems beyond CASSI, such as Hyperspectral compressive sensing imager~\cite{yang2015compressive}, random modulation imaging~\cite{gehm2007single},  CACTI~\cite{wang2022spatial,wu2023adaptive}. Besides, (2) mask encoding has been shown theoretically associated with the performance of SCI syetems~\cite{zhao2023theoretical}. Orthogonal to previous mask-oriented works, we hope this work can inspire future works from the perspective of the loss function design.}

Notably, the most recent hyperspectral reconstruction techniques are focusing on visible wavelengths such as 400 - 700 nm. This is mainly due to the limited available public data. For example, the real data that can be used to evaluate different algorithms for CASSI reconstruction was captured and released to the public in 2020~\cite{Meng20ECCV_TSAnet}. At this time, we believe that finding an efficient algorithm is important for the visible wavelengths.
Plus, from the optics perspective, it is practical and important to first design SCI systems in the visible bandwidth and then further explore SCI in more challenging light conditions.

\begin{figure*}[htp] 
\centering 
\includegraphics[width=0.96\textwidth]{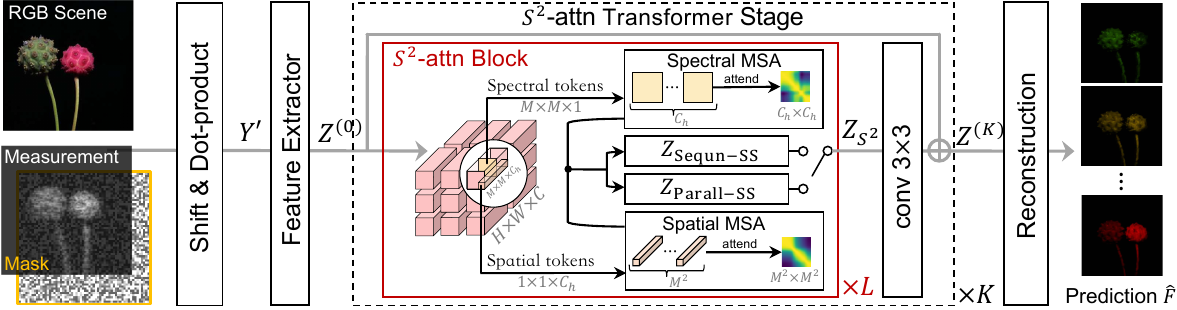} %\vspace{-2mm}
\caption{ Overview of the spatial-spectral ($S^2$-) Transformer. The network takes the 2D measurement $\mathbf{Y}$ with the mask $\mathbf{M}$ as input and retrieves the hyperspectral image $\widehat{\mathbf{F}}$. It mainly contains $K$ stages, where each consists of $L$ $S^2$-attn blocks. Both sequential ($\mathbf{Z}_{\texttt{Sequn-SS}}$) and paralleled ($\mathbf{Z}_{\texttt{Parall-SS}}$)  blocks employ spatial and spectral multi-head self-attention (MSA).
We perform window partition to the embedding before MSA. Visualization takes one window in a head as an example.}
\vspace{-3mm}
\label{fig: framework} 
\end{figure*}

\section{Method}\label{sec: method}
We firstly give a preliminary knowledge of hyperspectral imaging. We also uncover the potential challenges that may set the bottleneck for the reconstruction performance. Followed by, we propose an $S^2$-Transformer architecture with mask-aware learning strategy as a corresponding solution. 

\subsection{Preliminary Knowledge}\label{subsec: CASSI System} 
The CASSI-based hyperspectral imaging process consists of an optical encoder and a software decoder. Let $\mathbf{F}\in\mathbb{R}^{H \times W \times N_{\lambda}}$ represent the hyperspectral cube with $N_{\lambda}$ discrate wavelengths (spectral channels), $\mathbf{M}\in\mathbb{R}^{H \times W}$ denotes the physical coded aperture (mask). As shown in Fig.~\ref{fig: cassi}, the whole compression procedure could be simplified as two steps. CASSI firstly encodes the signal by
\begin{equation}\label{eq: CASSI mask modulate}
    \mathbf{F}' = \mathbf{F} \odot \mathbf{M},
\end{equation}
where $\odot$ denotes a pixel-wise multiplication 
with  broadcasting, 
and $\mathbf{F}'$ denotes a ``sponge'' cube encoded by the mask. Note that the visual information of certain pixels will be erased according to the mask pattern, \emph{i.e.}, where $\mathbf{M}_{ij}$$=$$0$, or partially disrupted by the noisy mask values, \emph{i.e.}, $\mathbf{M}_{ij}$$\in$$(0,1)$, both of which lead to the reconstruction challenge of \textit{masked data loss}.  {After that,  CASSI tiles $\mathbf{F}'$ by $y$-axis-shearing}, specifically, $\mathbf{F}'(u,v,n_{\lambda}) \rightarrow \mathbf{F}'(u,v+d(\lambda-\lambda^*), n_{\lambda})$, where $\lambda$ indicates the wavelength of the $n_{\lambda}$-th spectral channel. The $\lambda^*$ denotes the pre-defined anchor wavelength, and $d(\cdot)$ defines the shifting principle. We conduct a two-pixel shift for each spectral channel following~\cite{Meng20ECCV_TSAnet}.  CASSI finally produces the measurement $\mathbf{Y} \in \mathbb{R}^{H \times (W+d(N_{\lambda}-1))}$ by 
\begin{equation}\label{eq: 3d_to_2d}
    \mathbf{Y} = \textstyle \sum_{n_{\lambda}=1}^{N_{\lambda}}\mathbf{F}'(:,:,n_{\lambda})+\mathbf{\Omega},
\end{equation}
where $\mathbf{\Omega}$ denotes the measurement noise. Notably, the pixel-wise addition by Eq.~\eqref{eq: 3d_to_2d} imposes the second reconstruction challenge of \textit{entangled data loss} as one needs to precisely distinguish the spatial details of a specific wavelength from another one given the single 2D measurement.

From a physical encoding perspective, a high-fidelity reconstruction is largely about properly tackling the two-fold data loss.  
For the \textit{entangled data loss}, we trace back to the nature of the hyperspectral cube, and design a spatial-spectral attention ($S^2$-attn) mechanism accordingly in expectation to disassemble the spectral signals out of the measurement. We systematically discuss this part in Section~\ref{subsec: Spatial-Spectral Attention}. In Section~\ref{subsec: mask-driven loss}, we explicitly measure the pixel-wise difficulty-level owning to the \textit{masked data loss}, and propose a curriculum training strategy~\cite{bengio2009curriculum,wang2021survey}, mask-aware learning, without additional training cost (\emph{i.e.}, time, parameters) introduced.

\begin{figure*}[ht] 
\centering 
\includegraphics[width=.9\textwidth]{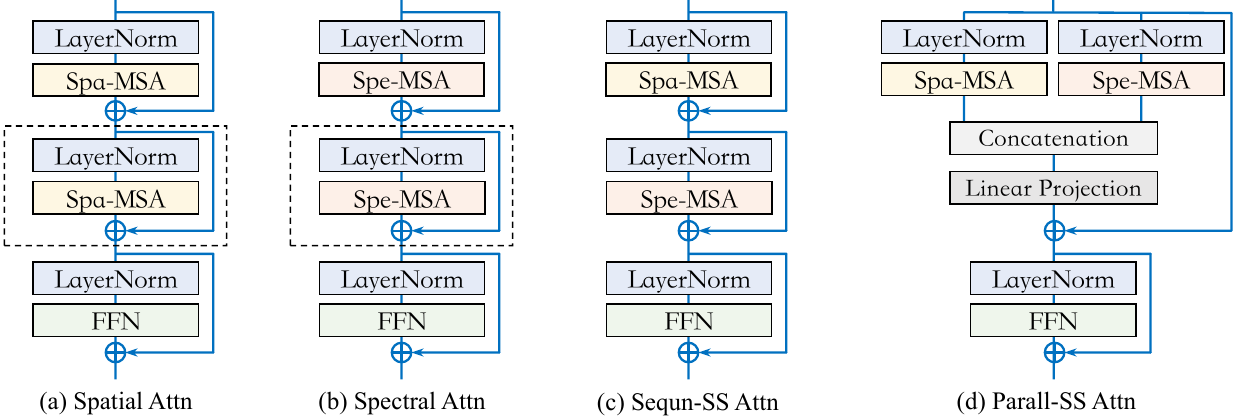} %\vspace{-2mm}
\caption{Different attention block designs, which are readily wrapped by off-the-shelf components, \emph{i.e.}, layer normalization (\texttt{LN}), and multi-head self-attention (\texttt{MSA}). Single-typed attentions (a, b) are  enhanced by additional \texttt{LN}-\texttt{MSA} (in dashed boxes) for a comparable complexity. All the attention blocks takes the feature embedding upon the window partition. The proposed $S^2$-Transformer adopts the \texttt{Parall-SS Attn}. }
\vspace{-3mm}
\label{fig: ss_attn} 
\end{figure*}

\subsection{Overall Architecture}\label{subsec: Overall Architecture}
{The proposed spatial-spectral ($S^2$-) Transformer takes the \textbf{3D} data as input and produces the hyperspectral data cube. Following the merit of \cite{meng2020gap,Meng20ECCV_TSAnet,wang2021new}, we initialize the network input $\mathbf{Y}'$$\in$$\mathbb{R}^{H \times W \times N_{\lambda}}$ by measurement $\mathbf{Y}$ via a channel-wise shift operation and a pixel-wise production with mask, denoted by the \texttt{Shift \& Dot-product} module in Fig.~\ref{fig: framework}.}
\begin{equation}\label{eq: shift}
    \mathbf{Y}'(:,:,n_{\lambda})=\mathbf{Y}(:, d(\lambda-\lambda^*):d(\lambda-\lambda^*)+W) \odot \textbf{M},
\end{equation}
{where $\odot$ denotes a pixel-wise multiplication. The whole network is given by $\widehat{\mathbf{F}} = f(\theta; \mathbf{Y}')$, 
where $\theta$ denotes the learnable parameters and $\widehat{\mathbf{F}}$ represents the reconstruction result. As shown in Fig.~\ref{fig: framework}, the proposed $S^2$-Transformer is composed of three parts. (1) A feature extractor by a \texttt{CONV3$\times$3} layer, producing the embedding representation $\mathbf{Z^{(0)}} \in \mathbb{R}^{H \times W \times C}$}
\begin{equation}\label{eq: extractor}
    \mathbf{Z^{(0)}} = \texttt{CONV}(\mathbf{Y}').
\end{equation}
{Let $C$ denote the number of embedding channels, which should be large enough to provide redundancy for the spectrum correlation exploitation. (2) A reconstruction head (\texttt{Reconstruction} module in Fig.~\ref{fig: framework}) at the end of the network employs a \texttt{CONV3$\times$3} layer, which maps the embedded space to the hyperspectral domain. (3) $K$ consecutive $S^2$-attn Transformer stages. Each stage is composed of $L$ concatenated $S^2$-attn blocks governed by a residual connection. Let $\mathbf{Z}^{(k)} \in \mathbb{R}^{H \times W \times C}$ denote the output of the $k$-th stage, $1 \leq k \leq K$. The feed-forward function $f_{\texttt{S}}(\cdot)$ of $k$-th stage is }
\begin{equation}\label{eq: stage func}
    \mathbf{Z}^{(k)} = f_{\texttt{S}}(\mathbf{Z}^{(k-1)}) = \mathbf{Z}^{(k-1)} + \texttt{CONV}(f_{\texttt{L}}(\mathbf{Z}^{(k-1)})),
\end{equation}
{where $\mathbf{Z}^{(k-1)}$ denotes the output embedding of $(k-1)$-th stage. $f_{\texttt{L}}(\cdot)$ denotes the mapping of $L$ concatenated $S^2$-attn blocks, which can be further expanded as   $f_{\texttt{L}}(\cdot)$$=$$f_{\texttt{B}}(...f_{\texttt{B}}(\cdot)...)$. We use $f_{\texttt{B}}(\cdot)$ to denote the mapping of a single $S^2$-attn block. Notably, we expend the  $\mathbf{Z}^{(k)}$ as $\mathbf{Z}^{(k,l)}$ to represent the $l$-th $S^2$-attn block within $k$-th stage, we have }
\begin{equation}\label{eq: blocks func}
\begin{aligned}
    \mathbf{Z}^{(k,L)} &= f_{\texttt{L}}(\mathbf{Z}^{(k,0)}), \\ 
    \mathbf{Z}^{(k,l)} &= f_{\texttt{B}}(\mathbf{Z}^{(k, l-1)}),
\end{aligned}
\end{equation}
{where $l=\{1,...,L\}$. Notably, we systematically devise four types of attention blocks of $f(\cdot)$ in the following. To avoid the quadratic complexity in the traditional self-attention calculation~\cite{vaswani2017attention}, we perform the window partition following~\cite{liang2021swinir,liu2021swin} toward the feature embedding for all types of attention mechanisms as shown in Fig.~\ref{fig: framework}.} 

{Hyperspectral images are characterized by  \textit{spatially and spectrally informative} clues, which can be leveraged for fine-grained signal retrieval upon the \textbf{2D} measurement (see Fig.~\ref{fig: coverfig}). Besides, both the spatial and spectral data clues are entangled and fused within the \textbf{2D} measurement. Despite the previous efforts of spatial and spectral modeling for hyperspectral data reconstruction, there still lacks a systematic exploitation to (1) uncover the behaviour of both attention types and (2) determine an appropriate information disentanglement upon the measurement.  In the following, we first discuss the single modalities (spatial, spectral attention designs). Then we propose joint modeling prototypes (sequential and paralleled attentions). We adopt the paralleled version for the proposed method.}

\subsection{Spatial \& Spectral Attention Blocks}\label{subsec: Spatial-Spectral Attention}
In this section, we give a brief introduction toward both spatial and spectral attentions. We  uncover their advantages and potential limitations underlying hyperspectral characteristics, respectively. Furthermore, we propose hybrid spatial-spectral attention structures and systematically discuss their behaviours toward the hyperspectral data modeling.

\textbf{Spatial Attention.} Since proposed by~\cite{vaswani2017attention} and developed in~\cite{dosovitskiy2021an}, the frequently-used spatial attention (\texttt{Spa}) has been evolved with key components of multi-head self-attention (MSA), layer normalization (LN)~\cite{ba2016layer}, and feed-forward network (FFN). Given the input of the size $H$$\times W$$\times C$,
the attention module partitions it into $\frac{HW}{M^2}$ windows, where $M$ is the window size. We adopt the multi-head self attention (MSA) module  for the attention computation. 
Let $T$ be the number of heads. Each head in MSA is allocated with $C_h$$=$$\lfloor\frac{C}{T}\rfloor$ channels.
For the annotation simplicity, we demonstrate the computation within a window for each head by feeding the attention modules with the input feature embedding $\mathbf{Z}_{\texttt{in}} \in \mathbb{R}^{M^2 \times C_h}$ accordingly. All the other windows among different heads follow the same computational procedure.
As shown in Fig.~\ref{fig: ss_attn} (a), 
the mapping is conducted by
\begin{equation}\label{eq: spa-msa}
\begin{aligned}
        \mathbf{Z}_{\texttt{Spa}} &= f_{\texttt{Spa-MSA}}(\texttt{LN}(\mathbf{Z}_{\texttt{in}})) + \mathbf{Z}_{\texttt{in}}, \\
        \mathbf{Z}_{\texttt{out}} &= f_{\texttt{FFN}}(\texttt{LN}(\mathbf{Z}_{\texttt{Spa}})) + \mathbf{Z}_{\texttt{Spa}},
        \end{aligned}
\end{equation}
where $f_{\texttt{FFN}}(\cdot)$ is instantiated by a \texttt{Linear}-\texttt{GELU}-\texttt{Linear}-\texttt{GELU} structure, and $f_{\texttt{Spa-MSA}}(\cdot)$ denotes a MSA module, yielding $\mathbf{Z}_{\texttt{Spa}}$. The $\mathbf{Z}_{\texttt{out}}$ denotes the final output of the spatial attention block.  We adopt $\mathbf{Q}$, $\mathbf{K}$, and $\mathbf{V}\in\mathbb{R}^{
M^2 \times C_h}$ to represent the query, key, and value, and compute them with the feature embedding $\mathbf{Z}_{\texttt{in}}$ and a linear projection by
\begin{equation}\label{eq: kqv}
    \mathbf{K}=\mathbf{W}^{\mathbf{K}}\mathbf{Z}_{\texttt{in}}, ~
    \mathbf{Q}=\mathbf{W}^{\mathbf{Q}}\mathbf{Z}_{\texttt{in}}, ~
    \mathbf{V}=\mathbf{W}^{\mathbf{V}}\mathbf{Z}_{\texttt{in}}, 
\end{equation}
where $\mathbf{W}^{\mathbf{K}}$, $\mathbf{W}^{\mathbf{Q}}$, and $\mathbf{W}^{\mathbf{V}}$ are learnable parameters. The output of spatial attention $\mathbf{Z}_{\texttt{Spa}}$ is given as
\begin{equation}\label{eq: spa attn}
\begin{aligned}
    \mathbf{A}_{\texttt{Spa}} &= \texttt{softmax}(\mathbf{K}\mathbf{Q}^{T}/\beta+\mathbf{B}), \\
    \mathbf{Z}_{\texttt{Spa}} &= \mathbf{A}_{\texttt{Spa}}\mathbf{V},
    \end{aligned}
\end{equation}
where $\beta$ is a learnable scaling factor for the overlarge values, initialized as $\sqrt{C_h}$, $\mathbf{B}$ represents a learnable position bias matrix following~\cite{bao2020unilmv2,hu2018relation}, and $\mathbf{A}_\texttt{Spa}$$\in$$\mathbb{R}^{
M^2 \times M^2}$ stands for self-attention matrix in a partitioned window. The outputs of $T$ heads will be concatenated afterward.

Given the spatial tokens of the size $1$$\times$$1$$\times C_h$, the spatial attention benefits the reconstruction by 1) taking well advantage of semantic clues of each pixel, 2) enabling interactions among neighbored channels within each head, which meets with the nature that spectral-adjacent contents are highly correlated (\texttt{neighbor A}, \texttt{B} in Fig.~\ref{fig: coverfig}). However, it shows limitations in describing the spectral variations of each token.

\textbf{Spectral Attention.} 
We leverage the self-attention in spectral domain (\texttt{Spe}) after the window partition. The attention module shares a similar architecture to Eq.~\eqref{eq: spa-msa}, but differs in multi-head self-attention (MSA) computation, \emph{i.e.}, $f_{\texttt{Spe-MSA}}(\cdot)$, for the spectral dependency modeling.
We compute query, key, and value with Eq.~\eqref{eq: kqv} and the spectral attention in each window is computed by
\begin{equation}\label{eq: spe attn}
\begin{aligned}
    \mathbf{A}_{\texttt{Spe}} &=\texttt{softmax}(\mathbf{K}^T\mathbf{Q}/\beta + \mathbf{B}), \\
    \mathbf{Z}_{\texttt{Spe}} &= \mathbf{V}\mathbf{A}_{\texttt{Spe}},
    \end{aligned}
\end{equation}
where $\mathbf{Z}_{\texttt{Spe}} \in \mathbb{R}^{M^2 \times C_h}$ refers to the output feature embedding of $f_{\texttt{Spe-MSA}}(\cdot)$ and $\mathbf{A}_{\texttt{Spe}}$ $\in \mathbb{R}^{C_h \times C_h}$ stands for 
the self-attention matrix of the  partitioned cube in each head. We scale the matrix multiplication by a learnable scalar $\beta$ and add $\mathbf{B}$ as a relative position bias matrix~\cite{shaw2018self}.

Computed with the spectral token of the size $M$$\times M$$\times 1$, 
the dimensionality of the spectral attention matrix $\mathbf{A}_{\texttt{Spe}}$ differs from the $\mathbf{A}_{\texttt{Spa}}$ of the size $M^2\times M^2$. By integrally considering the pixels in a channel,
the spectral attention can better abstract the inherent spectrum principle underlying discrete wavelengths. However, $\mathbf{A}_{\texttt{Spe}}$ is of $C_h \times C_h$, regardless of the token spatial size, \emph{i.e.}, $M\times M$. Such a resolution-agnostic property impedes the  abstraction ability of the attention when visual detail increases, but instead only allows limited visual clues abstraction for the channel-wise modeling.

Considering the limitations of both attention types, it would be inadequate to solely employ either one for a high-fidelity reconstruction. Thereby, we provide two types of hybrid structure, which not only exploit two-fold advantages, but also explore the spatial-spectral inter-dependencies.

\textbf{Sequential Spa-Spe Attention.} 
We perform sequential attention (\texttt{Sequn-SS}) by cascading the $f_{\texttt{Spa-MSA}}(\cdot)$, $f_{\texttt{Spe-MSA}}(\cdot)$, and the feed-forword network $f_{\texttt{FFN}}(\cdot)$. As shown in Fig.~\ref{fig: ss_attn} (c), the feed-forward pass upon the feature embedding $\mathbf{Z}_{\texttt{in}}$ is
\begin{equation}\label{eq: sequn-ss attn}
    \begin{aligned}
     &\mathbf{Z}_{\texttt{Spe}} = f_{\texttt{Spe-MSA}}(\texttt{LN}(\mathbf{Z}_{\texttt{in}})) + \mathbf{Z}_{\texttt{in}}, \\
    &\mathbf{Z}_{\texttt{Spa}} = f_{\texttt{Spa-MSA}}(\texttt{LN}(\mathbf{Z}_{\texttt{Spe}})) + \mathbf{Z}_{\texttt{Spe}},\\
    &\mathbf{Z}_{\texttt{Sequn-SS}} = f_{\texttt{FFN}}(\texttt{LN}(\mathbf{Z}_{\texttt{Spa}})) + \mathbf{Z}_{\texttt{Spa}},
    \end{aligned}
\end{equation}
where $f_{\texttt{Spa-MSA}}(\cdot)$ and $f_{\texttt{Spe-MSA}}(\cdot)$ are given by Eq.~\eqref{eq: spa attn} and Eq.~\eqref{eq: spe attn}, respectively. The \texttt{Sequn-SS} explores the data correlations from spatial and spectral perspectives, and allows interaction between both attention mechanisms. However, the behaviours of these two attention types are highly interrelated and either one lacks independence. Besides, we notice the potential superiority of \texttt{Sequn-SS} might attribute to a large model size and higher complexity in comparison to previous attention block types under the same optimization setting. Thus, we empower the aforementioned spatial and spectral attention block designs with additional MSA-LN modules
for a comparable model size and complexity, as shown by the dashed boxes in Fig.~\ref{fig: ss_attn} (a, b).

\begin{figure}[tp]
\includegraphics[width=0.48\textwidth]{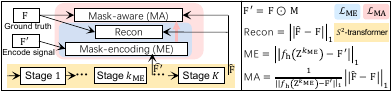}
\caption{Mask-aware learning pipeline. We start with the mask-encoding loss ($\mathcal{L}_{\texttt{ME}}$) to estimate the encoded/complete signals. Then the mask-aware loss ($\mathcal{L}_{\texttt{MA}}$) is adopted. The mask-encoding (ME) induces the mask-aware (MA) term.}
\vspace{-4mm}
\label{fig: mask_aware pipeline}
\end{figure} 

\textbf{Paralleled Spa-Spe Attention.}  
Besides the \texttt{Sequn-SS}, concurrently conducting spatial and spectral attention (\texttt{Parall-SS}) is another hybrid schema. The outputs of both attentions will be combined in a learnable mode. 
As shown in Fig.~\ref{fig: ss_attn} (d), given the feature embedding $\mathbf{Z}_{\texttt{in}}$, we have
\begin{equation}\label{eq: parall-ss attn}
    \begin{aligned}
    &\mathbf{Z}_{\texttt{cat}} = \mathbf{W}^{\texttt{cat}}[f_{\texttt{Spe-MSA}}(\texttt{LN}(\mathbf{Z}_{\texttt{in}})), f_{\texttt{Spe-MSA}}(\texttt{LN}(\mathbf{Z}_{\texttt{in}}))] + \mathbf{Z}_{\texttt{in}}, \\
    &\mathbf{Z}_{\texttt{Parall-SS}} = f_{\texttt{FFN}}(\texttt{LN}(\mathbf{Z}_{\texttt{cat}})) + \mathbf{Z}_{\texttt{cat}},
    \end{aligned}
\end{equation}
where $[\cdot]$ denotes the concatenation and  $\mathbf{W}^{\texttt{cat}}$ performs the linear projection. The underlying advantages of \texttt{Parall-SS} are both spatial and spectral attentions take effect upon the same input, during which their mutual interference existing in \texttt{Sequn-SS} is minimized. Plus, the learnable combination is more expressive in feature fusion. As a result, the proposed \texttt{Parall-SS} enjoys a structural superiority with negligible increase in parameters, in comparison to the other structures.

In summary, this section discusses the fors and againsts for spatial/spectral attentions adapting to the hyperspectral characterises, based on which we further propose two block designs with sequential and paralleled integration schema. By comparison, the paralleled fusion empirically allows a better modeling ability (see Section~\ref{subsec: spatial-spectral attn} for more analysis).

\subsection{Mask-aware Learning}\label{subsec: mask-driven loss}

Previous works have well explored different types of learning objectives for the HSI reconstruction, \emph{e.g.}, $\mathcal{L}_1$ loss~\cite{huang2021deep}, RMSE~\cite{Meng20ECCV_TSAnet}, Spectrum Constancy~\cite{cai2022mask}, and perceptual loss~\cite{meng2021perception}, \emph{etc}. 
By taking advantage of the semantic representations or spectral correlations, they enable promising texture reconstruction and perceptual quality. However, there still exist content-irrelevant artifacts in predictions and unexpected reconstruction difficulty in smooth areas (compared by (a) and (b) in Fig.~\ref{fig: mask_aware}), {since existing learning objectives treat all pixels equally and overlook the additional data loss, especially in masked areas. Masked pixels are different from the unmasked ones from a physical perspective of view -- the light intensity on the masked area is blocked by the coded aperture, leading to an attenuated visual clue and breaking the local texture smoothness. Correctly predicting the masked pixels plays an important role in the final performance qualitatively and perceptually. To this end, we conjecture a novel optics-induced uncertainty and recognize the masked pixels with higher uncertainty, while the unmasked regions potentially allow a more promising reconstruction by inferring from the well-preserved signal.}

Following this intuition, we propose a mask-aware learning strategy, as shown in Fig.~\ref{fig: mask_aware pipeline}. The main idea is to first distinguish between the masked and unmasked regions, implemented by fitting the encoded signal as $\mathbf{F}'=\mathbf{F}\odot\mathbf{M}$. Then we adaptively emphasize masked regions with a relatively higher loss penalty. The penalization degree is determined by referring to the latest reconstruction difficulty of $\mathbf{F}'$ throughout the training. To achieve this goal, we first pre-train the model with a mask-encoding loss ($\mathcal{L}_{\texttt{ME}}$) by
\begin{equation}\label{eq: loss_me}
    \mathcal{L}_{\texttt{ME}} = \alpha \underbrace{||f_{h}(\mathbf{Z}^{k_\texttt{ME}}) - \mathbf{F}'||_1}_{\rm \texttt{ME}} + \underbrace{||\widehat{\mathbf{F}} - \mathbf{F}||_1}_{\rm \texttt{Recon}},
\end{equation}
\noindent where the mask-encoding (\texttt{ME}) term explores the pixel-wise reconstruction difficulty relative to the encoded signal $\mathbf{F}'$ with early $S^2$-attn stages, \emph{i.e.}, the first $k_\texttt{ME}$ stages ($k_\texttt{ME}<K$). The $f_{h}(\cdot)$ is implemented by an \texttt{LN-CONV} structure. Meanwhile, we keep the original reconstruction term (\texttt{Recon}) to train the late blocks and stabilize the training. We use $\alpha$ to balance these two terms. In Fig.~\ref{fig: mask_aware} (c), we visualize the estimation of the encoded signal, which is corrupted from a mask pattern as expected. 
By inferring from this signal,
we propose a mask-aware (\texttt{MA}) term by taking advantage of the \texttt{ME} term, leading to the mask-aware loss ($\mathcal{L_{\texttt{MA}}}$) as
\begin{equation}\label{eq: loss_mu}
    \mathcal{L}_{\texttt{MA}} = \mathcal{L_{\texttt{ME}}} + \underbrace{\dfrac{\beta}{||f_{h}(\mathbf{Z}^{k_\texttt{ME}}) - \mathbf{F}'||_1}||\widehat{\mathbf{F}} - \mathbf{F}||_1}_{\texttt{MA}},
\end{equation}
where we attenuate the $\alpha$ inside  $\mathcal{L_{\texttt{ME}}}$ after pre-training and emphasize the masked areas by determining a $\beta$ value. Note that, the \texttt{ME} term serves as the weight of the \texttt{MA} term. In the following, we will theoretically discuss the impact of the proposed mask-aware loss to the unmasked pixels and reveal a prioritization effect on the masked pixels.

\begin{figure}[tp]
\includegraphics[width=0.48\textwidth]{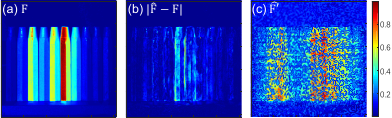}
\vspace{-1mm}
\caption{Visual evidence underlying mask-aware learning. (a) Reference. (b) Reconstruction difficulty by the recent method, MST~\cite{cai2022mask}. (c) Estimation of encoded signal with mask-encoding loss $\mathcal{L}_{\texttt{ME}}$. We select 2D spectral channels from $\mathbf{F}$,$|\widehat{\mathbf{F}}$$-$$\mathbf{F}|$, and $\widehat{\mathbf{F}}'$$\in$$\mathbb{R}^{H\times W \times N_\lambda}$ for visualization in (a$\sim$c). }
\vspace{-4mm}
\label{fig: mask_aware}
\end{figure}

On one hand, for the extreme case of unmasked areas, we consider a very small spatial region of the real mask so that we always have $\mathbf{M}_{ij}$$\rightarrow$$1$. Without loss of generality, we accordingly consider a feature map of $\mathbf{F}$ to simplify the notations, yet the result can be easily extended to \textbf{3D} feature maps. Correspondingly, we have  $\mathbf{F}_{ij} \in [0,1]$ and the encoded signal wherein $\mathbf{F}'_{ij}$$=$$\mathbf{M}_{ij}\odot \mathbf{F}_{ij}$. Since the $\widehat{\mathbf{F}}$ overrides the embedding representation underlying $f_h(\mathbf{Z}^{k_\texttt{ME}})$, we simplify their relationship as $\widehat{\mathbf{F}}$$=$$\tau(f_h(\mathbf{Z}^{k_\texttt{ME}}))$, where $\tau(\cdot)$ represents the cascaded $S^2$-attn stages. 
Considering their residual structures, we have $\tau(\mathbf{t})$$=$$g(\mathbf{t})+\mathbf{t}$ by simplifying $f_h(\mathbf{Z}^{k_\texttt{ME}})$ as $\mathbf{t}$. The mask-aware loss $\mathcal{L}_{\texttt{MA}}$ can be rewritten as
\begin{equation}\label{eq: loss_ma_res}
    \mathcal{L}_{\texttt{MA}} = \mathcal{L_{\texttt{ME}}} + \dfrac{\beta||g(\mathbf{t}) + \mathbf{t} - \mathbf{F}||_1}{||\mathbf{t} - \mathbf{M}\odot \mathbf{F}||_1},\quad   \mathbf{t}=f_h(\mathbf{Z}^{k_\texttt{ME}}),
\end{equation}
{where $g(\mathbf{t})$ represents the residual mapping and $\mathbf{t}$ is produced by the identity mapping. In the following, we finalize the induction of $\mathcal{L}_{\texttt{MA}}$ with Eq.~(14)  by providing an assumption and drawing a corollary.
\begin{assumption}\label{th: ap-mask}
Given a network $g_\phi(x)$ with a near-zero input, \emph{i.e.}, $x\rightarrow0$, a small value $\epsilon$ can upper-bound the network output, \emph{i.e.}, $|g_\phi(x)|\le \epsilon, \epsilon\rightarrow0$, in a small spatial region.
\end{assumption}}

\begin{table*}[t]
\footnotesize
\caption{PSNR (dB) values by different methods on the benchmark simulation dataset.}
\label{Tab: psnr_single_trn}
\centering
\resizebox{.95\textwidth}{!}{
\setlength{\tabcolsep}{0.7mm}
\centering
\begin{tabular}{l|cccccccccc|c} 
	\toprule
	Methods & Scene1 & Scene2 & Scene3 & Scene4 & Scene5 & Scene6 & Scene7 & Scene8 & Scene9 & Scene10 & \emph{Avg.}\\
	\midrule

	GAP-TV~\cite{yuan2016generalized} & 26.82 & 22.89 & 26.31 & 30.65 & 23.64 & 21.85 & 23.76 &21.98 & 22.63 & 23.10 &24.36 \\
	
	DeSCI~\cite{liu2018rank} &27.13 & 23.04 & 26.62 & 34.96 & 23.94 & 22.38 & 24.45 & 22.03 & 24.56 & 23.59 & 25.27 \\
	
	TSA-Net~\cite{Meng20ECCV_TSAnet} & 32.03 & 31.00 & 32.25 & 39.19 & 29.39 & 31.44 & 30.32 & 29.35 & 30.01 & 29.59 & 31.46 \\
	
	DGSMP~\cite{huang2021deep} & 33.26 & 32.09 & 33.06 & 40.54 & 28.86 & 33.08 & 30.74 & 31.55 & 31.66 & 31.44 & 32.63 \\
	
	SRN~\cite{wang2021new} &34.96 & 35.46 & 36.18 & 41.60 & 32.70 & 34.70 & 33.83 & 32.88 & 35.09 & 32.31 & 35.07 \\
	
	HDNet~\cite{hu2022hdnet} & 35.14 & 35.67 & 36.03 & 42.30 & 32.69 & 34.50 & 33.67 & 32.48 & 34.89 & 32.38 & 34.97 \\
	
	MST~\cite{cai2022mask} & 35.40 & 35.87 & 36.51 & 42.27 & 32.77 & 34.80 & 33.66 & 32.67 & 35.39 & 32.50 & 35.18 \\

        CST~\cite{cai2022coarse} & 35.89 & 36.82 & \textbf{38.20} & 42.38 & 33.16 & 35.71 & 34.87 & 34.30 & 36.44 & 33.03 & 36.08 \\
	
    \midrule
	$S^2$-Transformer & \textbf{36.17} & \textbf{37.57} & 37.29 & \textbf{42.96} & \textbf{34.40} & \textbf{36.44} & \textbf{35.41} & \textbf{34.50} & \textbf{36.54} & \textbf{33.57} & \textbf{36.48} \\
	\bottomrule
\end{tabular}}
\end{table*}

\begin{table*}[t]
\footnotesize
\caption{SSIM values by different methods on the benchmark simulation dataset.}
\label{Tab: ssim_single_trn}
\centering
\resizebox{.95\textwidth}{!}{
\setlength{\tabcolsep}{0.7mm}
\centering
\begin{tabular}{l|cccccccccc|c} 
	\toprule
	Methods & Scene1 & Scene2 & Scene3 & Scene4 & Scene5 & Scene6 & Scene7 & Scene8 & Scene9 & Scene10 & \emph{Avg.}\\
	\midrule

	GAP-TV~\cite{yuan2016generalized} & 0.7544 & 0.6103 & 0.8024 & 0.8522 & 0.7033 & 0.6625 & 0.6881 & 0.6547 & 0.6815 & 0.5839 & 0.6993 \\

	DeSCI~\cite{liu2018rank} & 0.7479 & 0.6198 & 0.8182 & 0.8966 & 0.7057 & 0.6834 & 0.7433 & 0.6725 & 0.7320 & 0.5874 & 0.7207 \\

	TSA-Net~\cite{Meng20ECCV_TSAnet} & 0.8920 & 0.8583 & 0.9145 & 0.9528 & 0.8835 & 0.9076 & 0.8782 & 0.8884 & 0.8901 & 0.8740 & 0.8939 \\

	DGSMP~\cite{huang2021deep} & 0.9152 & 0.8977 & 0.9251 & 0.9636 & 0.8820 & 0.9372 & 0.8860 & 0.9234 & 0.9110 & 0.9247 & 0.9166 \\

	SRN~\cite{wang2021new} & 0.9345 & 0.9373 & 0.9476 & 0.9703 & 0.9444 & 0.9512 & 0.9241 & 0.9443 & 0.9414 & 0.9348 & 0.9430 \\
	
	HDNet~\cite{hu2022hdnet} & 0.9352 & 0.9404 & 0.9434 & 0.9694 & 0.9460 & 0.9518 & 0.9263 & 0.9406 & 0.9415 & 0.9365 & 0.9431 \\
	
	MST~\cite{cai2022mask} & 0.9405 & 0.9440 & 0.9525 & 0.9734 & 0.9471 & 0.9553 & 0.9254 & 0.9479 & 0.9491 & 0.9408 & 0.9476 \\

        CST~\cite{cai2022coarse} & \textbf{0.9494} & 0.9546 & \textbf{0.9623} & 0.9752 & 0.9544 & 0.9631 & 0.9451 & 0.9608 & 0.9566 & 0.9450 & 0.9566 \\

    \midrule
	$S^2$-Transformer & 0.9490 & \textbf{0.9582} & 0.9567 & \textbf{0.9754} & \textbf{0.9596} & \textbf{0.9654} & \textbf{0.9461} & \textbf{0.9625} & \textbf{0.9592} & \textbf{0.9517} & \textbf{0.9584}\\
	\bottomrule
\end{tabular}}
\vspace{-2mm}
\end{table*}

\begin{figure*}[tp] 
\centering 
\includegraphics[width=0.95\textwidth]{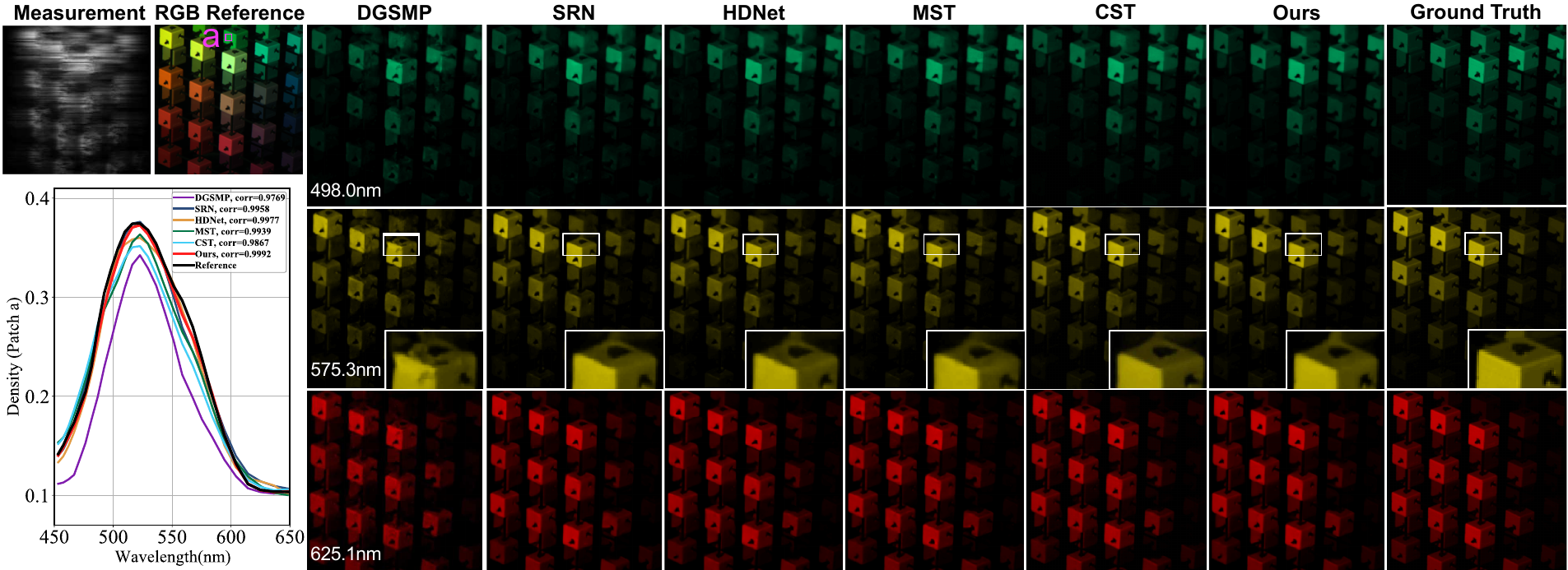}
\vspace{-2mm}
\caption{Reconstruction results for a simulation data. Five state-of-the-art methods and the proposed method (second to the right) are presented on 3 out of 28 wavelengths. The RGB reference is shown to demonstrate the color (top-left). The density-vs-wavelength curves (bottom-left) corresponding to the chosen patch (\texttt{patch a}) demonstrates the \textbf{spectral fidelity}.} 
\label{Fig: simu_result}
\vspace{-2mm}
\end{figure*}

{\begin{corollary}\label{th: mask}
With adequate iterations of stochastic gradient descent and a proper optimization schedule, the \texttt{MA} term could converge to $\sum_{ij}\pm\nabla_{\mathbf{t}_{ij}}|g(\mathbf{M}_{ij}\mathbf{F}_{ij})+\mathbf{t}_{ij}-\mathbf{F}_{ij}|$.  
\end{corollary}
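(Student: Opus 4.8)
The plan is to read Corollary~\ref{th: mask} as a statement about the effective gradient that the \texttt{MA} term supplies to stochastic gradient descent, since it is this gradient---rather than the scalar loss value---that steers the update of $\mathbf{t}=f_h(\mathbf{Z}^{k_\texttt{ME}})$. I would start from the rewritten form in~\eqref{eq: loss_ma_res} and abbreviate the numerator $N(\mathbf{t})=\|g(\mathbf{t})+\mathbf{t}-\mathbf{F}\|_1$ and the denominator $D(\mathbf{t})=\|\mathbf{t}-\mathbf{M}\odot\mathbf{F}\|_1$, so that the term is $\beta N/D$. The quotient rule then gives $\nabla_{\mathbf{t}}(\beta N/D)=\frac{\beta}{D}\nabla_{\mathbf{t}}N-\frac{\beta N}{D^{2}}\nabla_{\mathbf{t}}D$, and the whole argument reduces to controlling the two summands as training proceeds.

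The first key step is to exploit the pre-training phase. The \texttt{ME} term in $\mathcal{L}_{\texttt{ME}}$ is exactly $D(\mathbf{t})$, so after adequate iterations $\mathbf{t}$ sits at a (subgradient) minimizer of $D$, i.e. $\mathbf{t}\to\mathbf{M}\odot\mathbf{F}$ and $0\in\partial_{\mathbf{t}}D$. I would use this to discard the second summand $\frac{\beta N}{D^{2}}\nabla_{\mathbf{t}}D$, whose directional factor vanishes in the subgradient sense at the minimizer, leaving the \texttt{MA} gradient proportional to $\frac{\beta}{D}\nabla_{\mathbf{t}}N$. The surviving scalar $\beta/D>0$ plays the role of a per-region weight and is absorbed into the magnitude of the claimed expression.

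Next I would localize to the small region where $\mathbf{M}_{ij}\to1$ and substitute the converged identity path $\mathbf{t}_{ij}\approx\mathbf{M}_{ij}\mathbf{F}_{ij}$ into the argument of $g$ inside $N$. Assumption~\ref{th: ap-mask} supplies the missing control: with non-negative input the residual map obeys $g_\phi(\mathbf{t})\le\epsilon\to0^{+}$, so $g$ is both small and effectively frozen, which lets me replace $g(\mathbf{t}_{ij})$ by the constant $g(\mathbf{M}_{ij}\mathbf{F}_{ij})$ and drop the $g'$ contribution when differentiating. Writing $\nabla_{\mathbf{t}}N$ component-wise, the subgradient of the $\ell_1$ norm yields a per-pixel sign $\mathrm{sign}(g(\mathbf{M}_{ij}\mathbf{F}_{ij})+\mathbf{t}_{ij}-\mathbf{F}_{ij})$, which I encode as the $\pm$; summing over pixels then gives exactly $\sum_{ij}\pm\nabla_{\mathbf{t}_{ij}}|g(\mathbf{M}_{ij}\mathbf{F}_{ij})+\mathbf{t}_{ij}-\mathbf{F}_{ij}|$, as claimed.

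The hardest part will be making the denominator rigorous: as $D\to0$ the factor $\beta/D$ and the ratio $N/D$ both threaten to blow up (indeed $N\to0$ as well in the unmasked patch, giving a $0/0$ limit), so the \emph{proper optimization schedule} must be invoked to keep $D$ bounded away from zero---e.g. through the attenuation of $\alpha$ and the scheduling of $\beta$ noted after~\eqref{eq: loss_mu}---while still letting $\nabla_{\mathbf{t}}D$ become negligible. Closely related is the interchange of the training limit with the (sub)gradient and the handling of the $\ell_1$ kinks, where I would argue on the differentiable pieces and pass to subgradients at the non-smooth points. Assumption~\ref{th: ap-mask} is precisely what legitimizes the frozen-$g$ approximation, so verifying that its regime ($\mathbf{t}\ge0$, $\epsilon\to0^{+}$) genuinely holds on the targeted near-unmasked region is the remaining point to secure.
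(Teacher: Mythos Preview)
Your proposal misreads what the corollary is claiming. The paper's statement is about the \emph{scalar value} of the \texttt{MA} term itself, not about the gradient it supplies to SGD. The appearance of $\nabla_{\mathbf{t}_{ij}}$ on the right-hand side arises because the authors apply L'H\^opital's rule (which they call ``Bernoulli's rule'') to the $0/0$ indeterminate form of the ratio $N/D$; the derivative symbol is an artifact of that limit computation, not an assertion about $\nabla_{\mathbf{t}}(\beta N/D)$. Their argument runs: under $\mathbf{t}_{ij}\to\mathbf{M}_{ij}\mathbf{F}_{ij}$ and Assumption~\ref{th: ap-mask} with $\mathbf{M}_{ij}\to1$, both $N=\sum_{ij}|g(\mathbf{t}_{ij})+\mathbf{t}_{ij}-\mathbf{F}_{ij}|\to0$ and $D=\sum_{ij}|\mathbf{t}_{ij}-\mathbf{M}_{ij}\mathbf{F}_{ij}|\to0$; then $\lim N/D=\lim(\nabla_{\mathbf{t}}N)/(\nabla_{\mathbf{t}}D)$, and since $\nabla_{\mathbf{t}_{ij}}|\mathbf{t}_{ij}-\mathbf{M}_{ij}\mathbf{F}_{ij}|=\pm1$, the denominator collapses to a sign, giving exactly the claimed expression.

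Your quotient-rule route therefore proves something else. Even on its own terms it has a gap: you discard $\tfrac{\beta N}{D^{2}}\nabla_{\mathbf{t}}D$ by invoking $0\in\partial_{\mathbf{t}}D$ at the minimizer, but the $\ell_1$ subdifferential at the kink is the whole interval $[-1,1]$, so the selected subgradient need not be the zero element, and meanwhile $N/D^{2}$ can blow up since $N,D\to0$ simultaneously. The ``proper optimization schedule'' cannot both keep $D$ bounded away from zero (to control $\beta/D$) and drive $\mathbf{t}\to\mathbf{M}\odot\mathbf{F}$ (to justify the substitution and the use of Assumption~\ref{th: ap-mask}); you need the latter, and the paper embraces the resulting $0/0$ rather than avoiding it. Finally, what you obtain is $\tfrac{\beta}{D}\nabla_{\mathbf{t}}N$ with a divergent prefactor, whereas the corollary asserts a finite limit equal to $\sum_{ij}\pm\nabla_{\mathbf{t}_{ij}}|g(\mathbf{M}_{ij}\mathbf{F}_{ij})+\mathbf{t}_{ij}-\mathbf{F}_{ij}|$---the $\pm$ comes from the denominator's derivative, not from absorbing a scalar weight.
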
}
\begin{proof}
We consider the situation where mask-encoding loss ($\mathcal{L}_{\texttt{ME}}$) yields a good approximation of encoded signal $\widehat{\mathbf{F}}$, where $\mathbf{t}_{ij} \rightarrow \mathbf{M}_{ij}\mathbf{F}_{ij}$. According to Eq.~\eqref{eq: loss_ma_res},  the \texttt{MA} becomes
\begin{equation*}
    |g(\mathbf{t}_{ij})+\mathbf{t}_{ij}-\mathbf{F}_{ij}| \rightarrow |g(\mathbf{M}_{ij}\mathbf{F}_{ij})-(1-\mathbf{M}_{ij})\mathbf{F}_{ij}|.
\end{equation*}
{According to Assumption~\ref{th: ap-mask}, the residual $g(\mathbf{M}_{ij}\mathbf{F}_{ij})$ learns a small value bounded by $\epsilon$, \emph{i.e.}, $|g(\mathbf{M}_{ij}\mathbf{F}_{ij})|<\epsilon, \epsilon\rightarrow 0$ upon enough SGD steps. Thereby,
we can find }
\begin{equation*}
    |g(\mathbf{M}_{ij}\mathbf{F}_{ij})-(1-\mathbf{M}_{ij})\mathbf{F}_{ij}|\rightarrow0,
\end{equation*}
where the term $(1-\mathbf{M}_{ij})\mathbf{F}_{ij}$ is omitted with  $\mathbf{M}_{ij}\rightarrow1$ in the small spatial region. Accordingly, we have the numerator
\begin{equation*}
    \sum_{ij}|g(\mathbf{M}_{ij}\mathbf{F}_{ij})-(1-\mathbf{M}_{ij})\mathbf{F}_{ij}|\rightarrow0.
\end{equation*}
Similarly, let $\mathbf{t}_{ij} \rightarrow \mathbf{M}_{ij}\mathbf{F}_{ij}$, the denominator becomes
\begin{equation*}
    \sum_{ij}|\mathbf{t}_{ij}-\mathbf{M}_{ij}\mathbf{F}_{ij}| \rightarrow 0.
\end{equation*}
Based on the previous induction, we find the indeterminacy form of both numerator and the denominator under the condition of $\mathbf{t}_{ij} \rightarrow \mathbf{M}_{ij}\mathbf{F}_{ij}$. Notably, both numerator and the denominator are differentiable on an open interval of $\mathbf{t}_{ij}$ except at $\mathbf{M}_{ij}\mathbf{F}_{ij}$, in which the denominator enables a non-zero derivation, \emph{i.e.}, $\sum_{ij}\nabla_{\mathbf{t}_{ij}}|\mathbf{t}_{ij}-\mathbf{M}_{ij}\mathbf{F}_{ij}|\ne0$ for all $\mathbf{t}_{ij}$ in an open interval with $\mathbf{t}_{ij} \ne \mathbf{M}_{ij}\mathbf{F}_{ij}$. Based on these conditions, 
we then exploit the convergence of the \texttt{MA} term upon Bernoulli's rule
\begin{equation}\label{eq: proof}
    \begin{aligned}   &\lim\limits_{\mathbf{t}_{ij}\to\mathbf{M}_{ij}\mathbf{F}_{ij}}\dfrac{\sum_{ij}|g(\mathbf{t}_{ij})+\mathbf{t}_{ij}-\mathbf{F}_{ij}|}{\sum_{ij}|\mathbf{t}_{ij}-\mathbf{M}_{ij}\mathbf{F}_{ij}|} \\
    =& \lim\limits_{\mathbf{t}_{ij}\to\mathbf{M}_{ij}\mathbf{F}_{ij}}\dfrac{\sum_{ij}\nabla_{\mathbf{t}_{ij}}|g(\mathbf{t}_{ij})+\mathbf{t}_{ij}-\mathbf{F}_{ij}|}{\sum_{ij}\nabla_{\mathbf{t}_{ij}}|\mathbf{t}_{ij}-\mathbf{M}_{ij}\mathbf{F}_{ij}|}\\
    =&\sum_{ij}\pm\nabla_{\mathbf{t}_{ij}}|g(\mathbf{M}_{ij}\mathbf{F}_{ij})+\mathbf{t}_{ij}-\mathbf{F}_{ij}|,
    \end{aligned}
\end{equation}
where $\nabla_{\mathbf{t}_{ij}}|\mathbf{t}_{ij}-\mathbf{M}_{ij}\mathbf{F}_{ij}|=\pm1$ adapting to the value of $\mathbf{t}_{ij}$. Thus, we complete the proof.
\end{proof}

\begin{figure*}[t] 
\centering 
\includegraphics[width=0.97\textwidth]{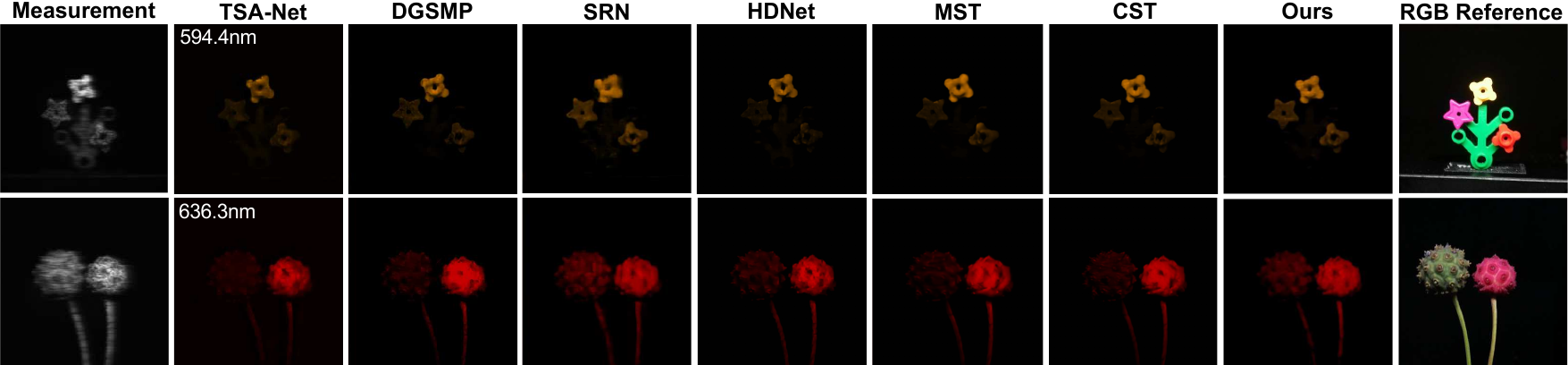}
\caption{Reconstruction results on real-captured measurements. The proposed method retrieves more contents (\emph{e.g.}, the smallest hole at center of the yellow flower in top row) \textbf{Zoom in for a better visualization}.} \label{Fig: real_result}
\vspace{-2mm}
\end{figure*}

\noindent \emph{Remark}. 
Corollary~\ref{th: mask} uncovers a constant convergence tendency of the proposed \texttt{MA} term within a small spatial regions by satisfying the Bernulli's rule.  
Therefore, the $\mathcal{L}_{\texttt{MA}}$ becomes
\begin{equation*}\label{eq: m=1_Lma}
\begin{aligned}
    \mathcal{L}_{\texttt{MA}}=&\alpha||f_{h}(\mathbf{Z}^{k_\texttt{ME}}) - \mathbf{F}'||_1+||\widehat{\mathbf{F}} - \mathbf{F}||_1 \\
    &+ \sum_{ij}\pm\nabla_{\mathbf{t}_{ij}}|g(\mathbf{M}_{ij}\mathbf{F}_{ij})+\mathbf{t}_{ij}-\mathbf{F}_{ij}|,
\end{aligned}
\end{equation*}
{which indicates that in the \textit{extreme case} of totally unmasked regions, \emph{i.e.}, $\mathbf{M}_{ij}\rightarrow1$,   the proposed \texttt{MA} term (1) hardly penalize the unmasked pixels and (2) is inoffensive to the reconstruction. Notably, the above induction may not applicable to the heavily noisy areas in practice. In Fig.~\ref{Fig: loss MA}, we empirically demonstrate that the \texttt{MA} term converges on unmasked regions (\emph{i.e.}, $\mathbf{M}_{ij}>0.5$), benefiting the reconstruction. 
On the other hand, given a real mask value $\mathbf{M}_{ij}\rightarrow0$ in a small spatial region and ground truth pixel value $\mathbf{F}_{ij}\in[0,1]$, we have encoded signal $\mathbf{F}'_{ij}\rightarrow0$. The mask-aware loss  could be simplified as }
\begin{equation*}\label{eq: m=0Lma}
    \mathcal{L}_{\texttt{MA}}=||\widehat{\mathbf{F}}-\mathbf{F}||_1+\alpha||f_{h}(\mathbf{Z}^{k_\texttt{ME}})||_1+\dfrac{\beta\sum_{ij}|\widehat{\mathbf{F}}_{ij} - \mathbf{F}_{ij}|}{\sum_{ij}|f_{h}(\mathbf{Z}^{k_\texttt{ME}}_{ij})|}, 
\end{equation*}
{where \texttt{ME}, \texttt{MA}, and \texttt{Recon} take effect for minimizing the overall learning objective of $\mathcal{L}_{\texttt{MA}}$. Notably, a smaller $||f_{h}(\mathbf{Z}^{k_\texttt{ME}}_{ij})||_1$ yields a larger weight of $\tfrac{\beta}{||f_{h}(\mathbf{Z}^{k_\texttt{ME}}_{ij})||_1}$. The $\sum_{ij}|\widehat{\mathbf{F}}_{ij} - \mathbf{F}_{ij}|$ needs to be further minimized to compensate for the \texttt{MA} value amplification.} Therefore, all of the underlying masked pixels are prioritized during the reconstruction.

In summary, we theoretically present the effectiveness of the proposed mask-aware learning strategy. It adaptively enhance the reconstruction penalty for the masked pixels, while equally retrieves the unmasked pixels without discrimination, under mild conditions. Note that the mask-encoding term facilitates the mask-aware term in $\mathcal{L}_{\texttt{MA}}$, thus the pre-training upon $\mathcal{L}_{\texttt{ME}}$ is necessary for the proposed strategy.

\section{Experiment}\label{sec: experiment}
We do extensive experiments for the proposed method. Specifically, Section~\ref{subsec: setting} introduces the experimental settings. Section~\ref{subsec: performance} quantitatively and perceptually present the reconstruction performance on both simulation and real data. In Section~\ref{subsec: spatial-spectral attn}, we discuss the proposed $S^2$-attn blocks and reason the superiority of the \texttt{Parall-SS} version by exploiting the underlying interpretability. In Section~\ref{subsec: mask-aware learning}, we empirically analysis the mask-aware learning strategy. { Besides, we analysis the performance of the proposed method under challenging scenarios in Section~\ref{subsec: challenge}. We include a comparison experiment using geological remotely sensed data of AVIRIS in Section~\ref{sec: AVIRIS}. }

\subsection{Experimental Settings}
\label{subsec: setting}

\begin{table}[tp] 
\footnotesize
\caption{Model size and complexity of different methods.}
\vspace{-1.58mm}
\label{Tab: complexity}
\centering
\resizebox{.48\textwidth}{!}{
\centering
\begin{tabular}{l|rrrrrr} 
            	\toprule
            	Methods & PSNR & SSIM & \#params (M) & FLOPs (G)   \\
            	\midrule
            	TSA-Net~\cite{Meng20ECCV_TSAnet} & 31.46 & 0.8939 & 44.25 & 110.06  \\
            	DGSMP~\cite{huang2021deep}       & 32.63 & 0.9166 & 3.76  & 646.65  \\
            	SRN~\cite{wang2021new}           & 35.07 & 0.9430 & \textbf{1.25}  & 81.84   \\
            	HDNet~\cite{hu2022hdnet}         & 34.97 & 0.9431 & 2.37  & 154.76  \\
            	MST~\cite{cai2022mask}           & 35.18 & 0.9476 & 2.03  & 28.15   \\
                    CST~\cite{cai2022coarse} & 36.08 & 0.9566 & 3.00 & 40.10 \\
            	\midrule
            	$S^2$-Transformer                & \textbf{36.48} & \textbf{0.9584} & 1.80 & \textbf{27.21}\\
            	\bottomrule
            \end{tabular}}
\end{table}

\begin{table}[t]
\caption{Naturalness Image Quality Evaluator (NIQE) evaluation on real hyperspectral dataset by pupolar methods.}
\vspace{-1mm}
\label{Tab: real metric}
\centering
\resizebox{.48\textwidth}{!}{
\centering
\begin{tabular}{c|ccccc} 
	\toprule
	\multirow{2}{*}{Methods} & \multirow{2}{*}{MST} & \multirow{2}{*}{HDNet} & \multirow{2}{*}{CST} & \multirow{2}{*}{$S^2$-Transformer} & $S^2$-Transformer \\
        & & & &  & w/ $\mathcal{L}_\texttt{ME}$,$\mathcal{L}_\texttt{MA}$\\
	\midrule
	NIQE ($\downarrow$) & 6.9219 & 5.9207 & 6.5755 & 6.0950 & \textbf{5.8833} \\
	\bottomrule
\end{tabular}}\vspace{-0.5cm}
\end{table}

\begin{table}[tp] 
\caption{{Performance and complexity analysis of different attention blocks. All the other settings are kept the same.  The proposed $S^2$-Transformer adopts \texttt{Parall-SS}.} }
\label{Tab: structure ablation}
\centering
\resizebox{.48\textwidth}{!}{
\centering
\begin{tabular}{l|rrr|r} 
\toprule
\multirow{2}{*}{Types} &  \multirow{2}{*}{\texttt{SpaSpa}} & \multirow{2}{*}{\texttt{SpeSpe}} & \multirow{2}{*}{\texttt{Sequn-SS}} & $S^2$-Transformer \\
&&&&\texttt{Parall-SS} \\
\midrule
PSNR    & 36.01  & 35.35  & 36.06  & \textbf{36.48}  \\
SSIM    & 0.9563 & 0.9540 & 0.9565 & \textbf{0.9584} \\
FLOPs (G)   & 61.78   & 27.49 & 44.64 & \textbf{27.21} \\
\#params (M)   & 2.86    & 2.93 & 2.90  & \textbf{1.80}  \\
\bottomrule
\end{tabular}}
\vspace{-3mm}
\end{table}

\textbf{Dataset.}
Following previous methods~\cite{Meng20ECCV_TSAnet,cai2022mask,cai2022coarse}, we use the same dataset setting for network training and evaluation. Specifically, we adopt the CAVE~\cite{yasuma2010generalized} database for training, which provides 32 scenes with 512$\times$512 resolution and 400nm$\sim$700nm range of wavelength. The training samples are randomly cropped into 256$\times$256 patches for data augmentation. The total 28 spectral channels are determined by spectral interpolation. The testing dataset is composed of ten 256$\times$256$\times$28 hyperspectral images from the KAIST~\cite{choi2017high} dataset. In simulation, we treat the 3D hyperspectral images as ground truths and compute the 2D measurements following Eq.~\eqref{eq: CASSI mask modulate} and Eq.~\eqref{eq: 3d_to_2d}.  
For the real hyperspectral image acquisition, we feed the 660$\times$714 real-captured measurements by~\cite{Meng20ECCV_TSAnet} into the pre-trained model, generating the 660$\times$660$\times$28 hyperspectral images accordingly. We inject the Gaussian noise to the measurements during training to simulate the real measurement noise $\mathbf{\Omega}$ in the collected data.

\begin{figure*}[ht]
\footnotesize
\centering
\resizebox{0.97\textwidth}{!}{
\begin{tabular}{cc}
\begin{adjustbox}{valign=t}
\begin{tabular}{c}
\begin{adjustbox}{valign=t}
\begin{tabular}{ccccccccc}
\Huge\rotatebox{90}{\color{white}\textbf{AA}}& 
\begin{overpic}[width=0.289\textwidth]{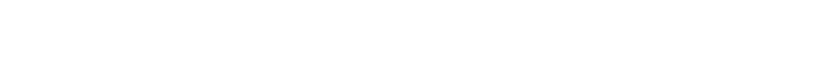}
\put(7,5){\Huge blk 25/chl 18}
\end{overpic}&
\begin{overpic}[width=0.289\textwidth]{white.pdf}
\put(7,5){\Huge blk 27/chl 18}
\end{overpic}&
\begin{overpic}[width=0.289\textwidth]{white.pdf}
\put(10,5){\Huge blk 19/chl 14}
\end{overpic}&
\begin{overpic}[width=0.289\textwidth]{white.pdf}
\put(7,5){\Huge blk 27/chl 29}
\end{overpic}&
\begin{overpic}[width=0.289\textwidth]{white.pdf}
\put(5,5){\Huge blk 25/chl 14}
\end{overpic}&
\begin{overpic}[width=0.289\textwidth]{white.pdf}
\put(8,5){\Huge blk 27/chl 16}
\end{overpic}&
\begin{overpic}[width=0.289\textwidth]{white.pdf}
\put(3,5){\Huge blk 28/chl 20}
\end{overpic}&
\begin{overpic}[width=0.289\textwidth]{white.pdf}
\put(3,5){\Huge blk 27/chl 20}
\end{overpic}
\end{tabular}
\end{adjustbox}
\vspace{-1mm}
\\
\begin{adjustbox}{valign=t}
\begin{tabular}{ccccccccc}
\Huge\rotatebox{90}{\textbf{\texttt{Spe-MSA}}}&
\begin{overpic}[width=0.289\textwidth]{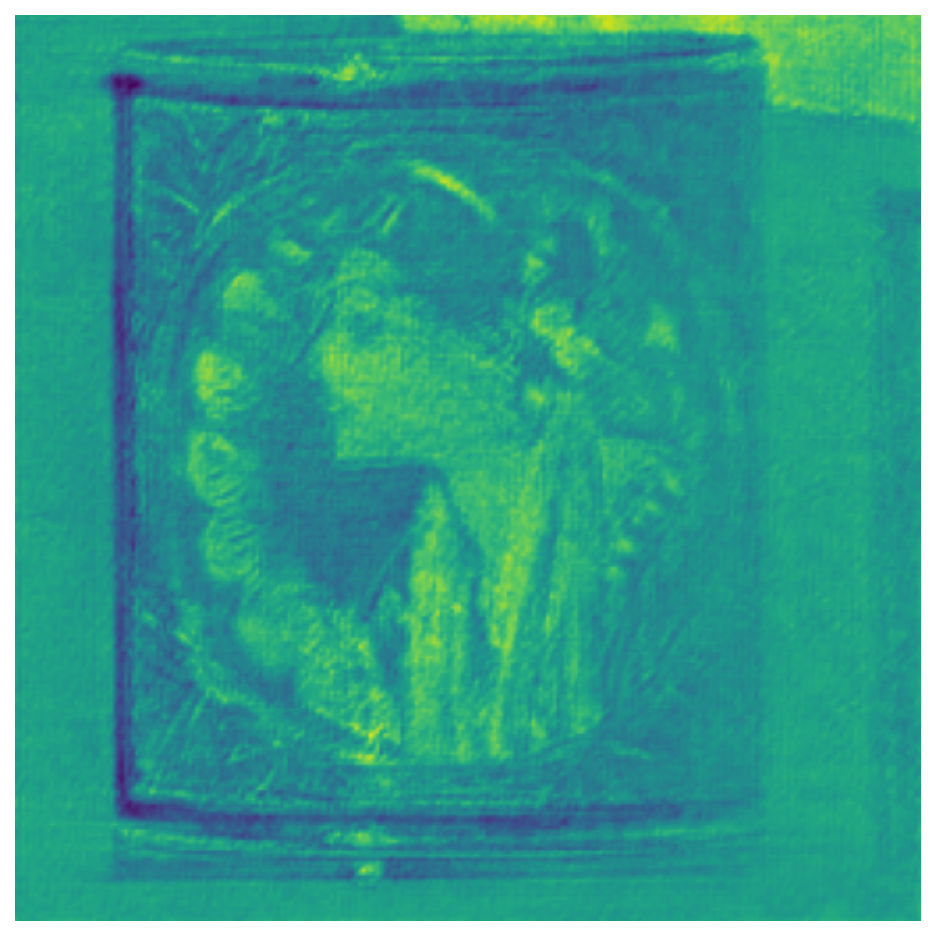}
\put(5,5){}
\end{overpic}&
\begin{overpic}[width=0.289\textwidth]{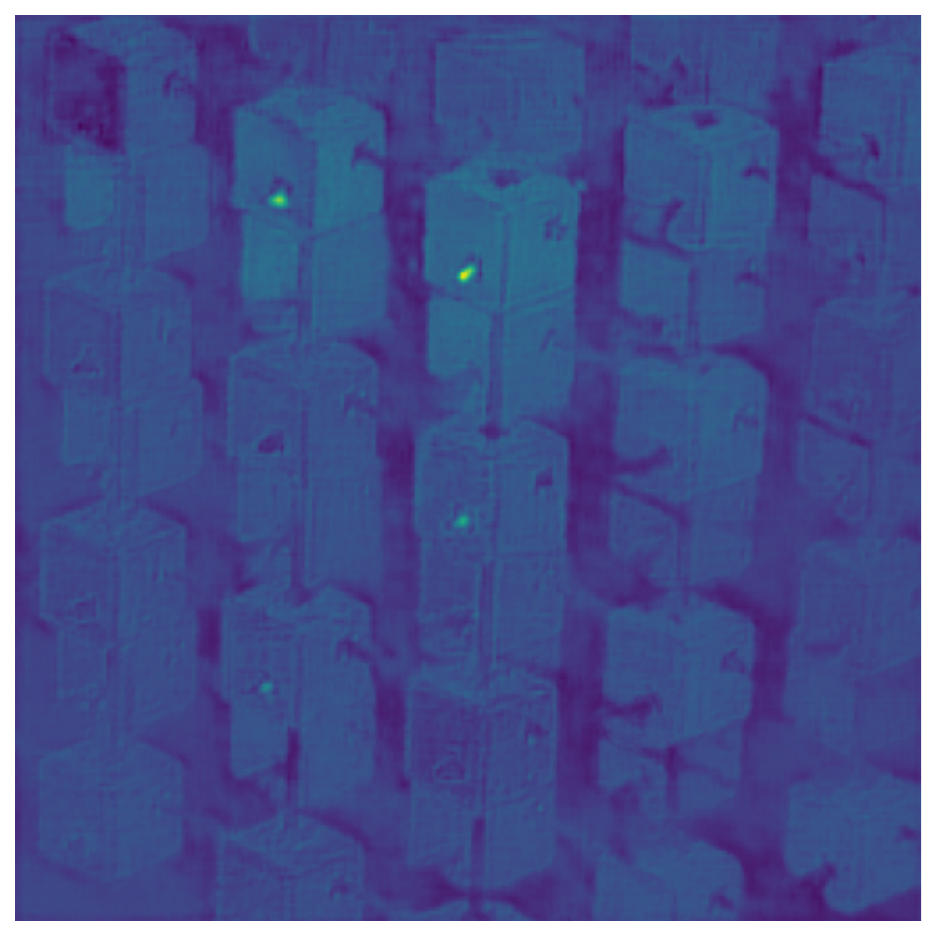}
\put(5,5){}
\end{overpic}&
\begin{overpic}[width=0.289\textwidth]{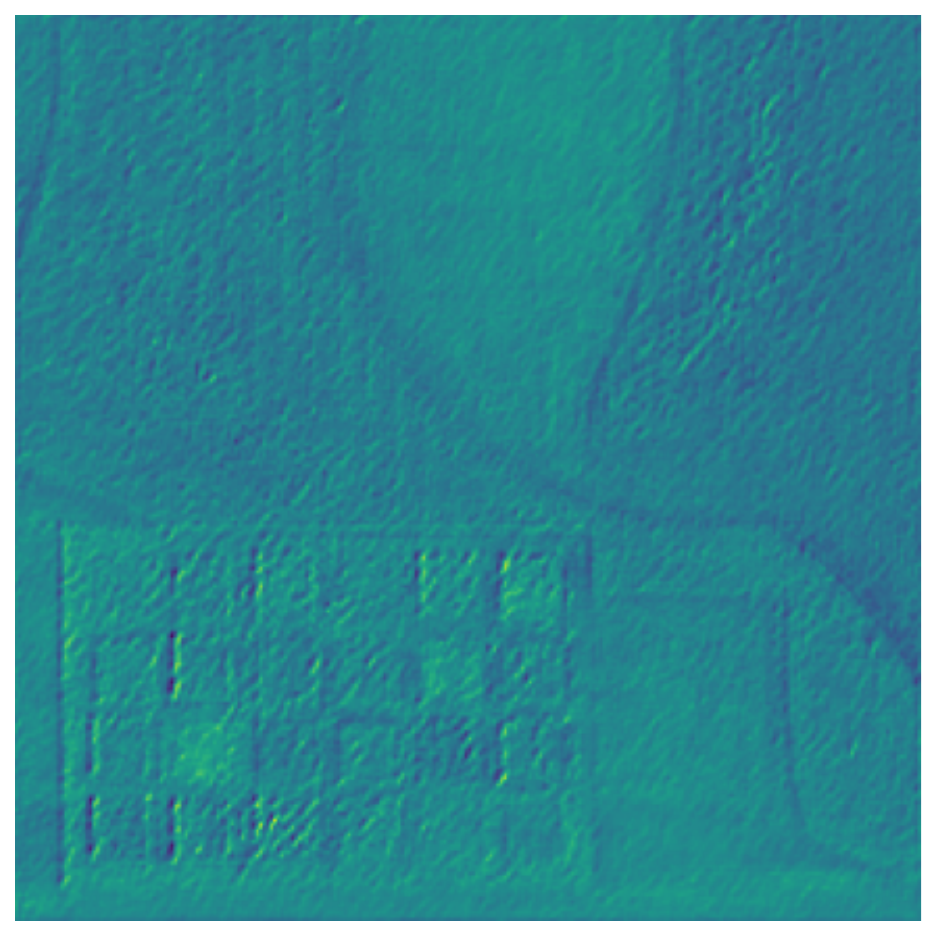}
\put(5,5){}
\end{overpic}&
\begin{overpic}[width=0.289\textwidth]{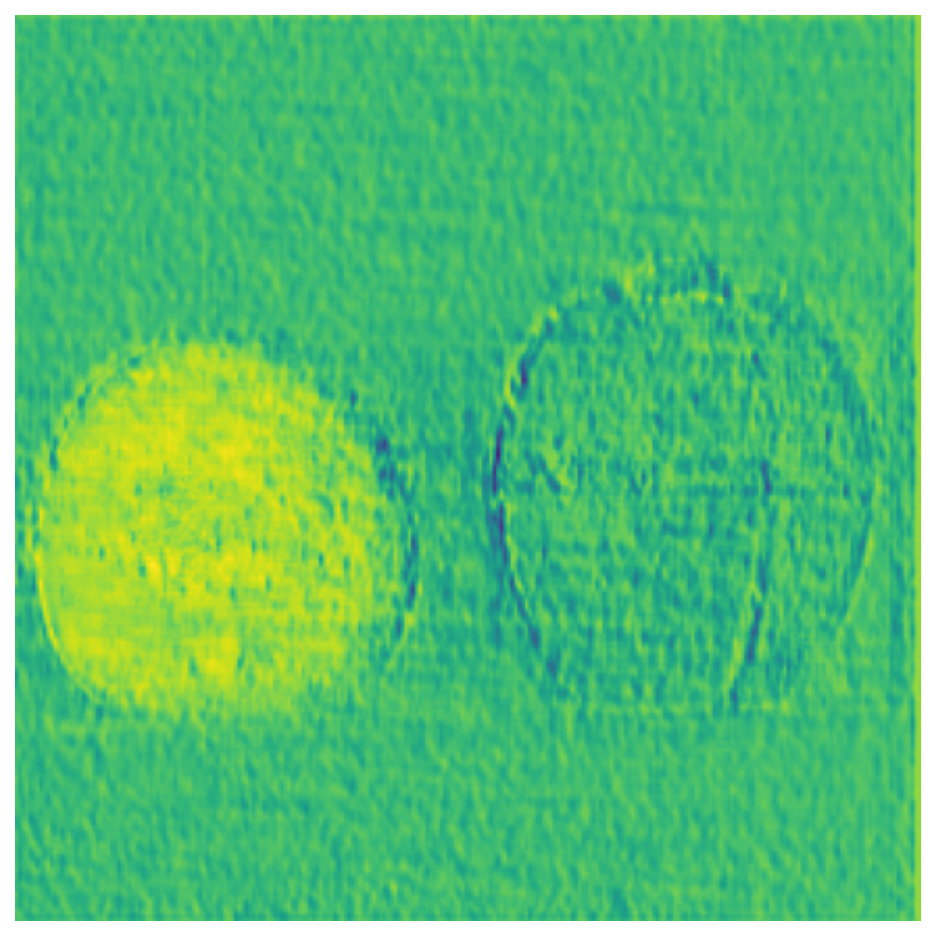}
\put(5,5){}
\end{overpic}&
\begin{overpic}[width=0.289\textwidth]{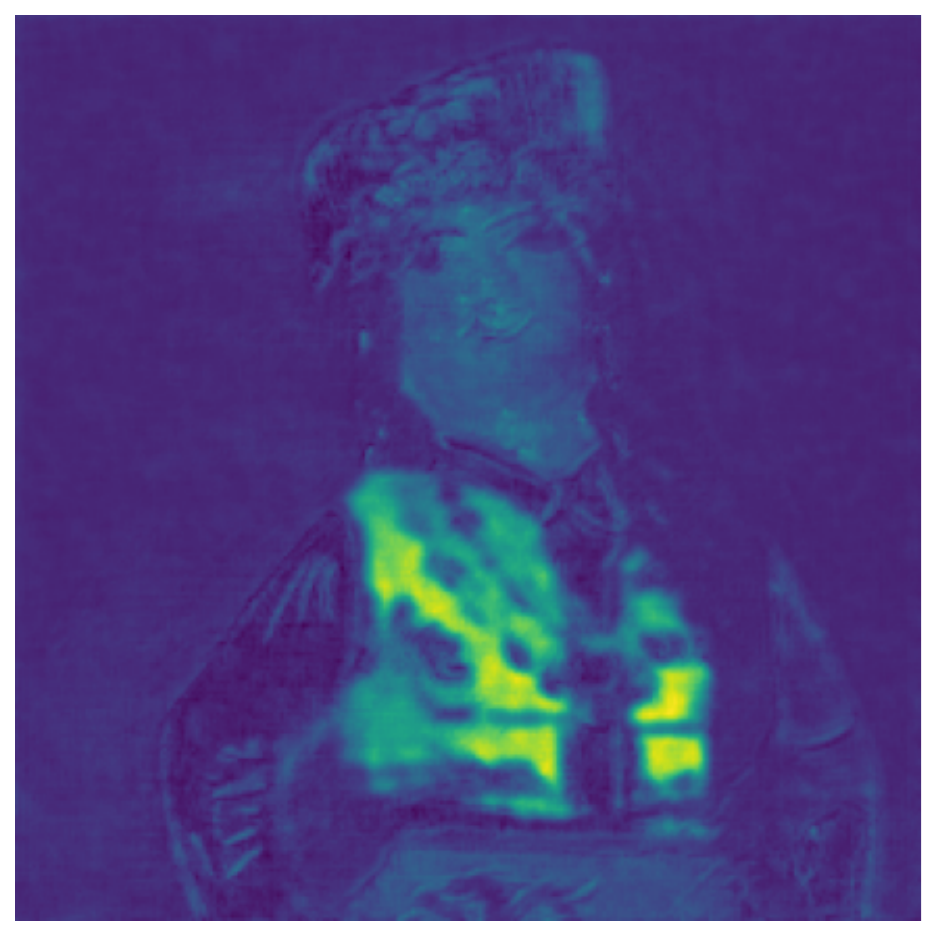}
\put(5,5){}
\end{overpic}&
\begin{overpic}[width=0.289\textwidth]{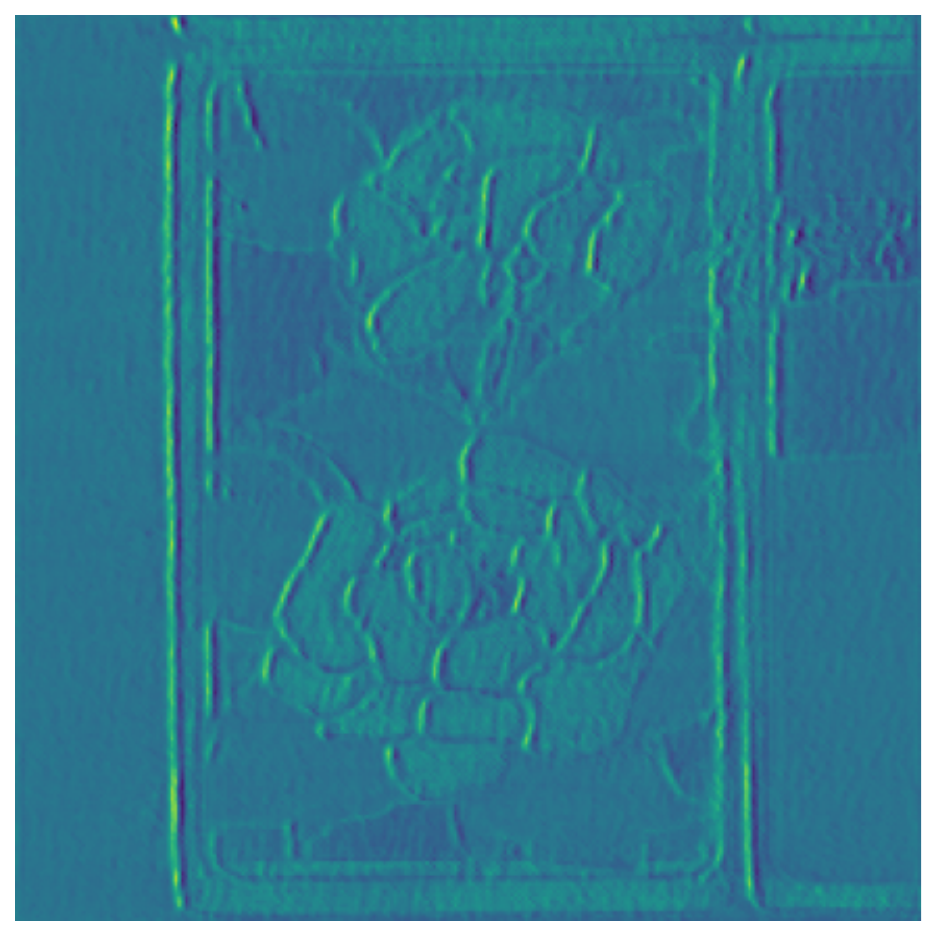}
\put(5,5){}
\end{overpic}&
\begin{overpic}[width=0.289\textwidth]{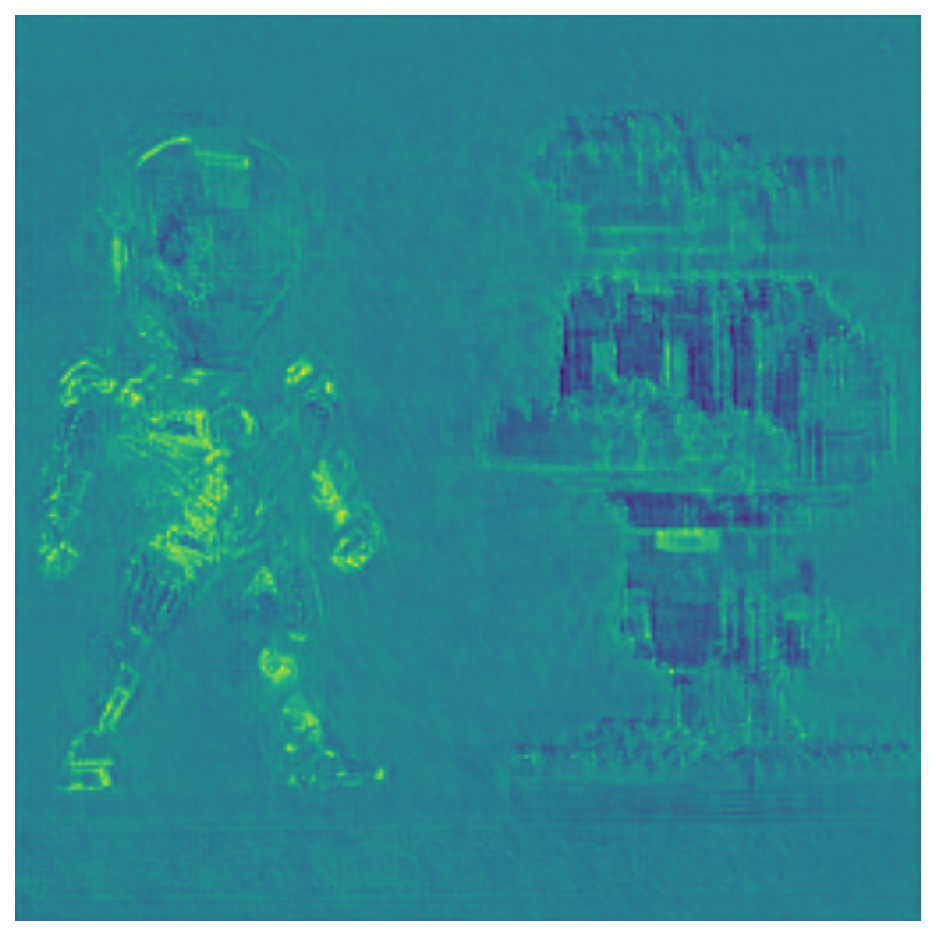}
\put(5,5){}
\end{overpic}&
\begin{overpic}[width=0.289\textwidth]{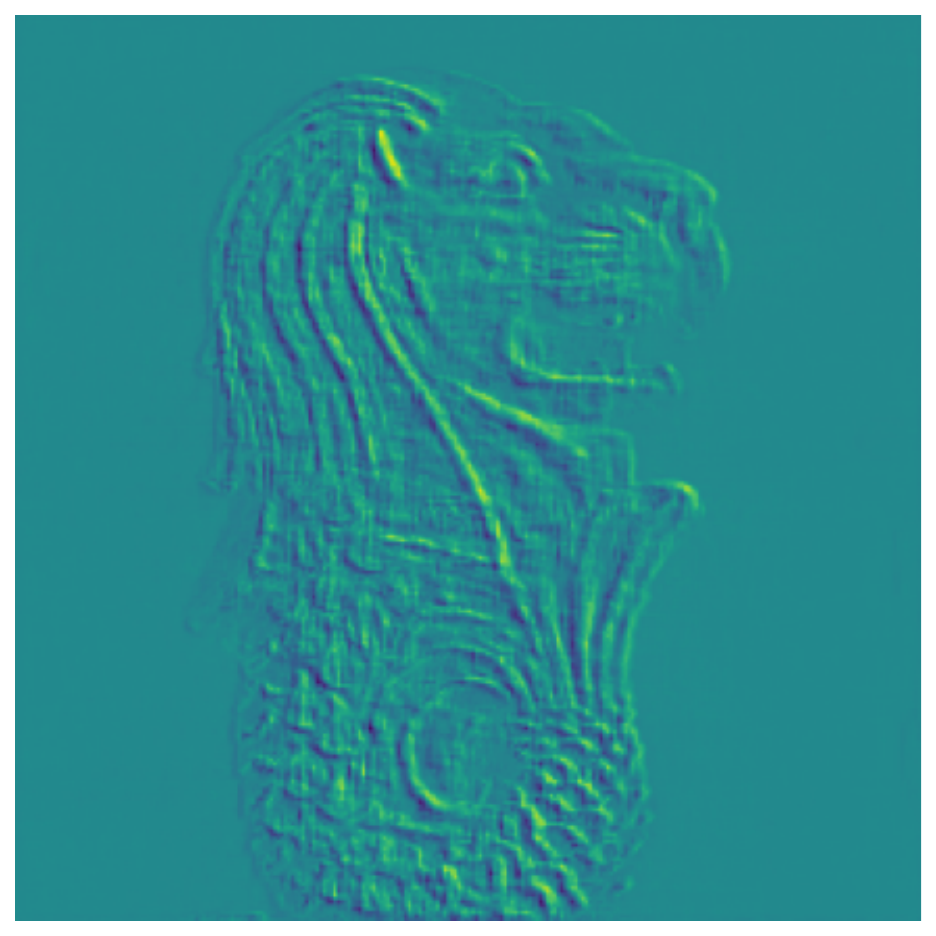}
\put(5,5){}
\end{overpic}
\end{tabular}
\end{adjustbox}
\\
\begin{adjustbox}{valign=t}
\begin{tabular}{ccccccccc}
\Huge\rotatebox{90}{\textbf{\texttt{Spa-MSA}}}&
\begin{overpic}[width=0.289\textwidth]{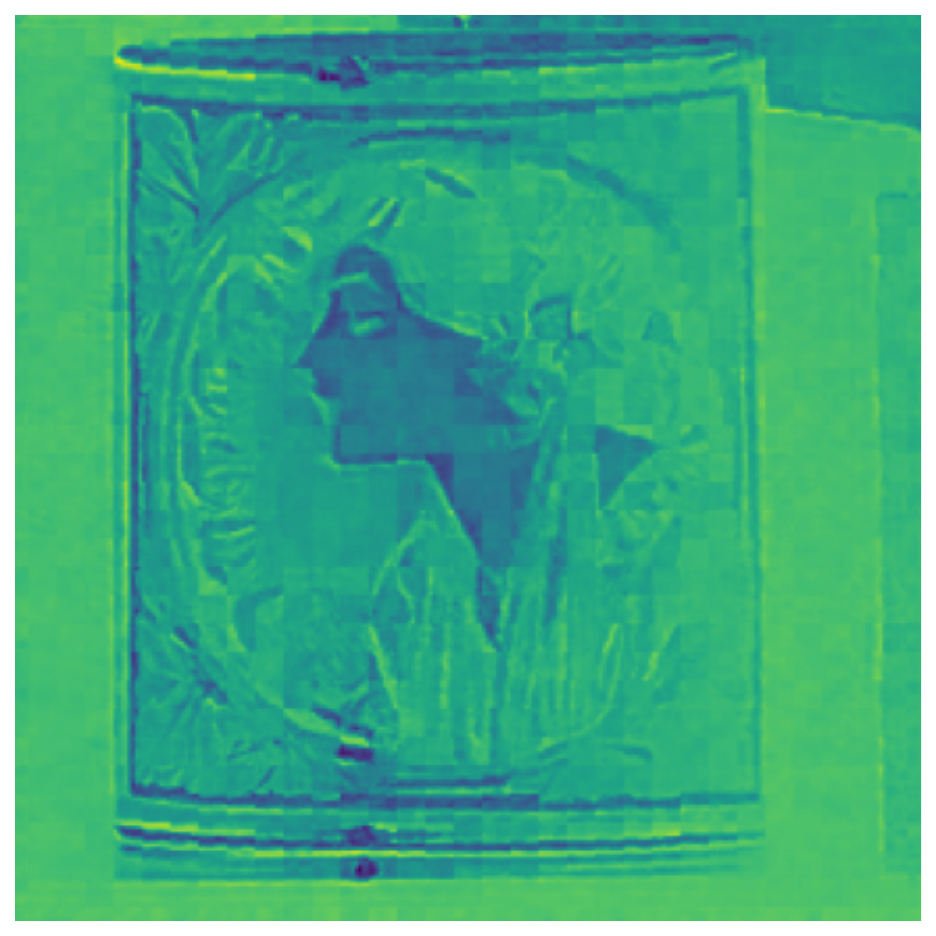}
\put(5,5){}
\end{overpic}&
\begin{overpic}[width=0.289\textwidth]{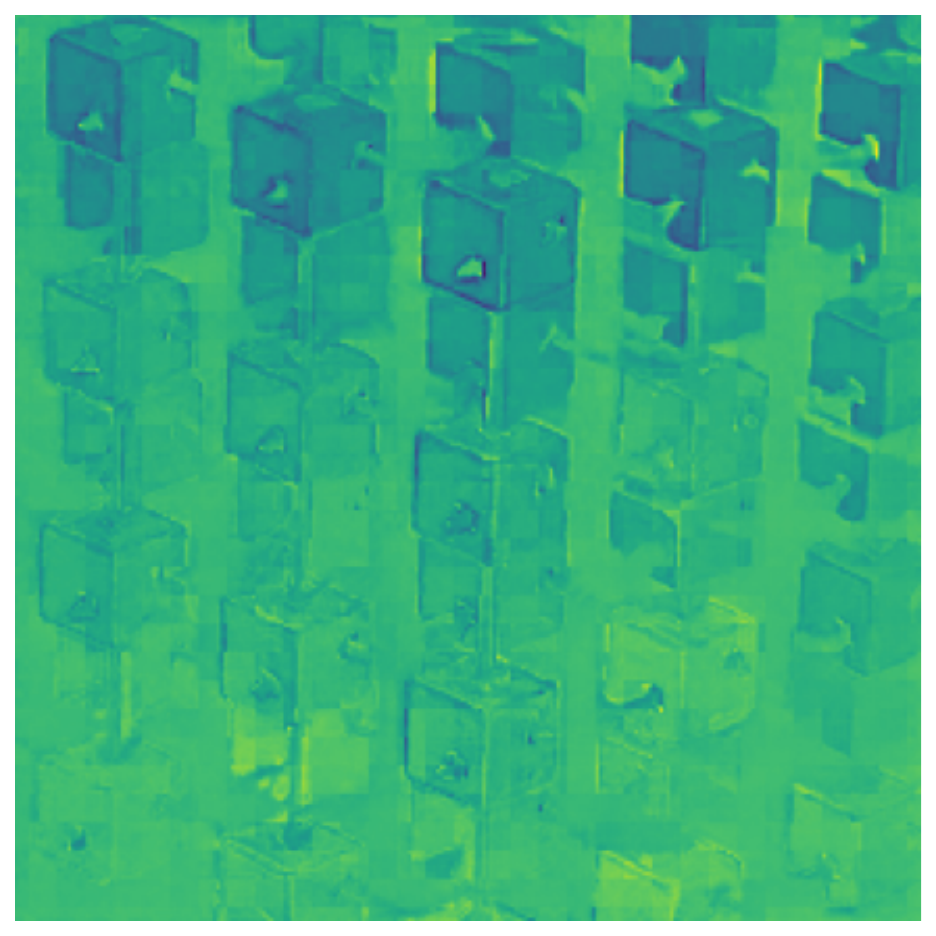}
\put(5,5){}
\end{overpic}&
\begin{overpic}[width=0.289\textwidth]{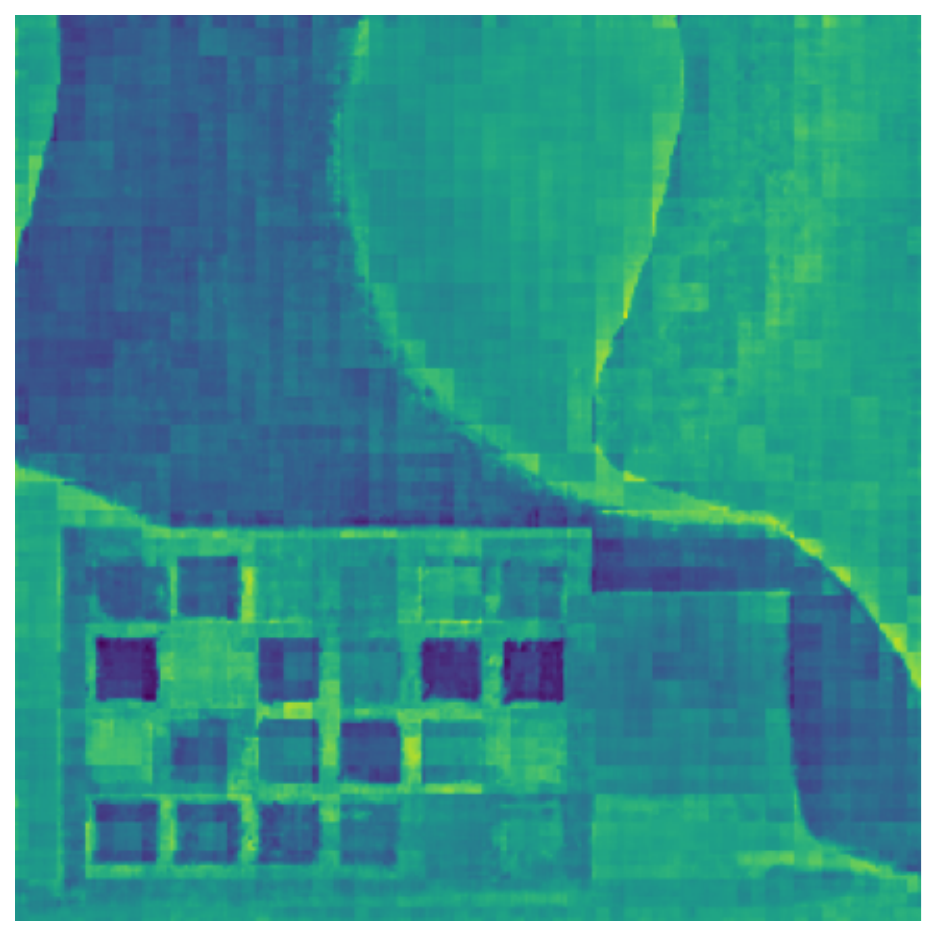}
\put(5,5){}
\end{overpic}&
\begin{overpic}[width=0.289\textwidth]{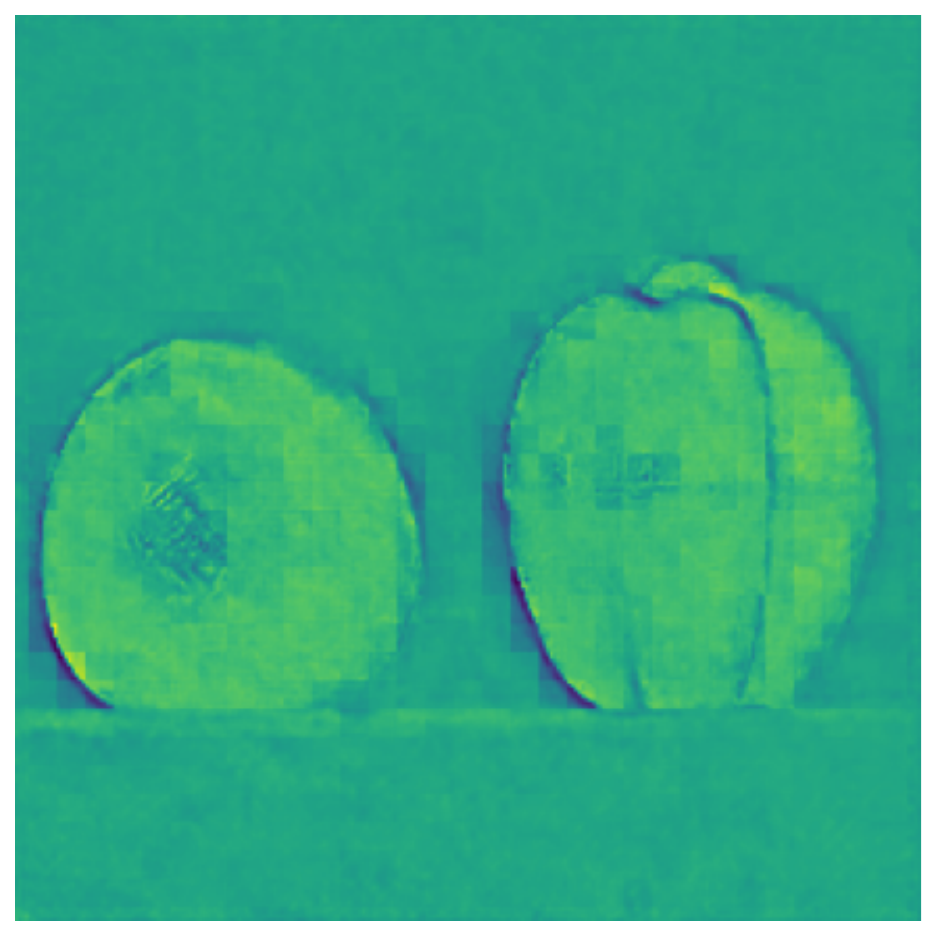}
\put(5,5){}
\end{overpic}&
\begin{overpic}[width=0.289\textwidth]{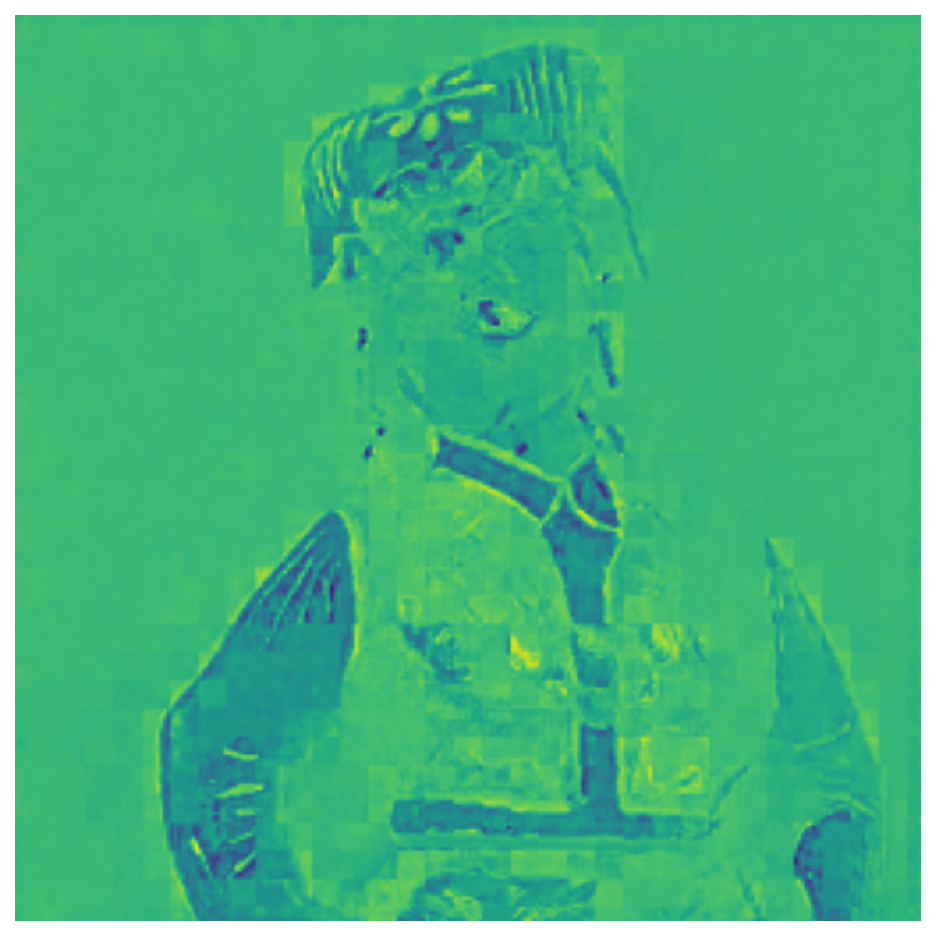}
\put(5,5){}
\end{overpic}&
\begin{overpic}[width=0.289\textwidth]{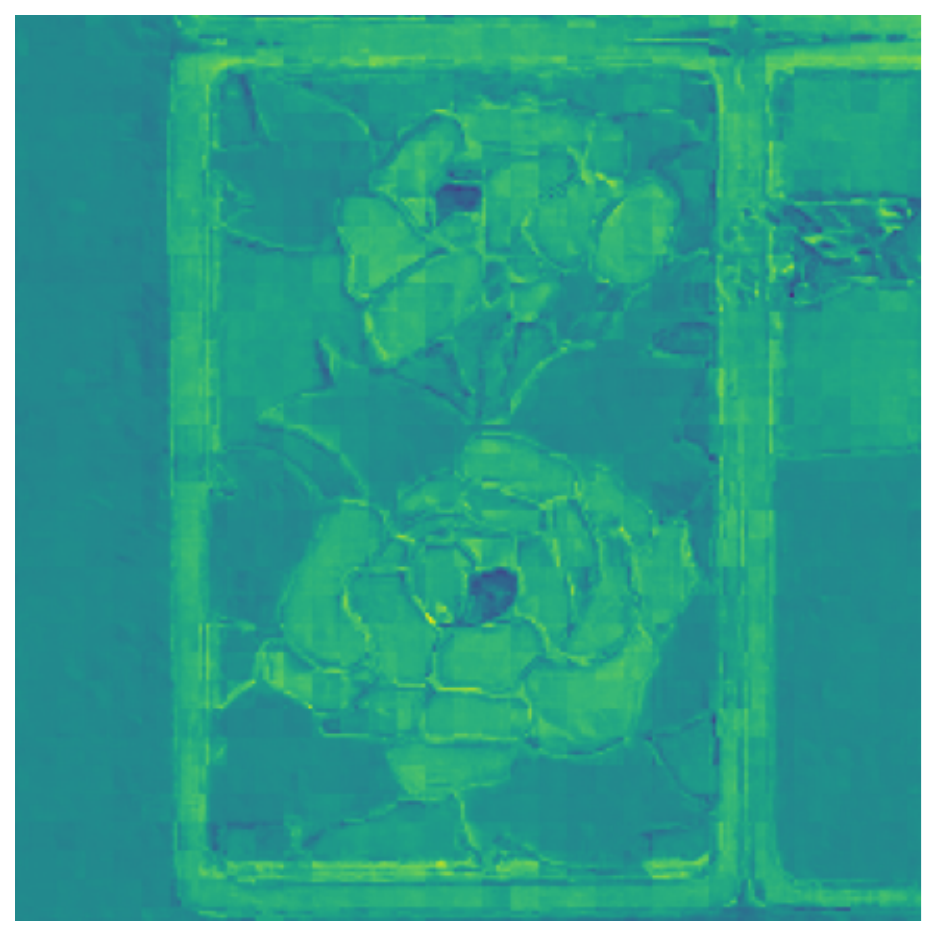}
\put(5,5){}
\end{overpic}&
\begin{overpic}[width=0.289\textwidth]{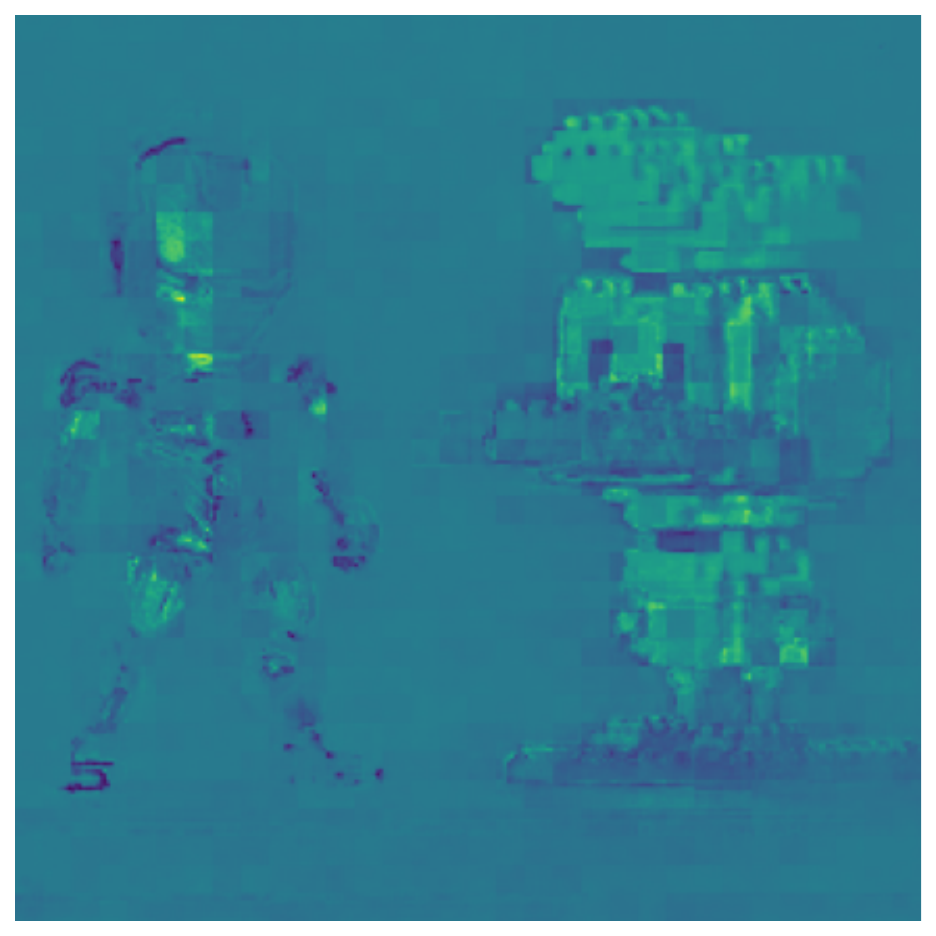}
\put(5,5){}
\end{overpic}&
\begin{overpic}[width=0.289\textwidth]{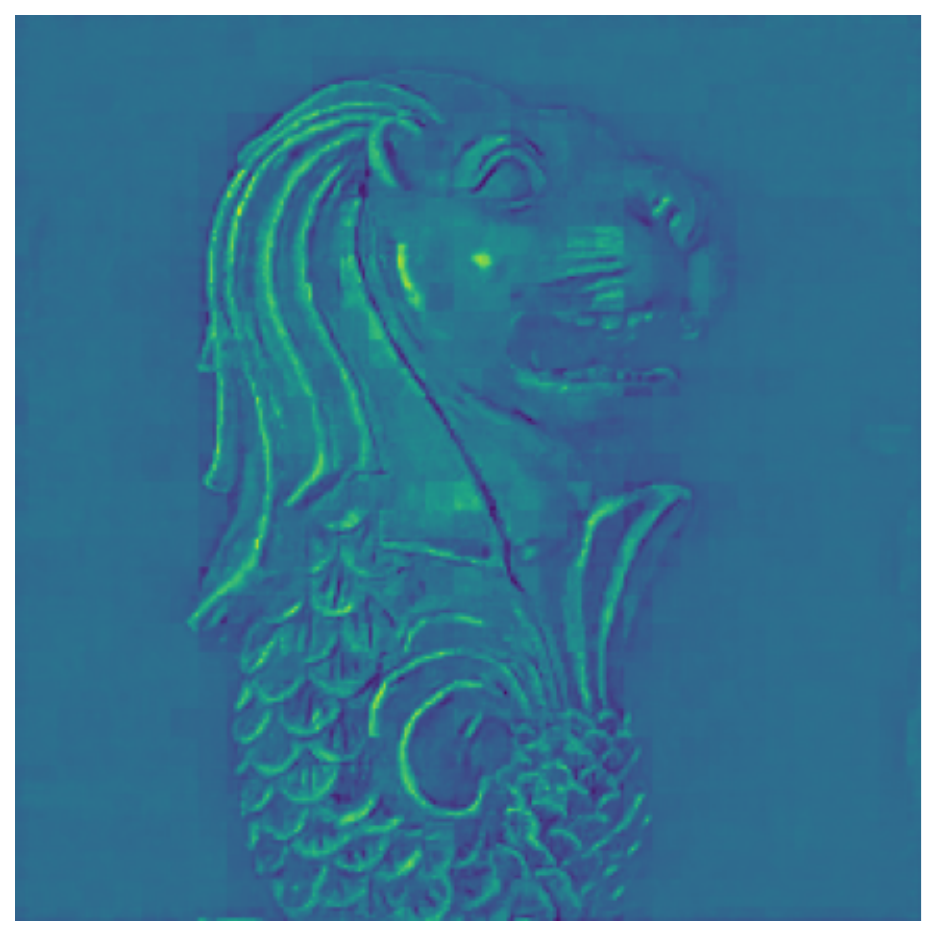}
\put(5,5){}
\end{overpic}
\end{tabular}
\end{adjustbox}
\\
\begin{adjustbox}{valign=t}
\begin{tabular}{ccccccccc}
\Huge\rotatebox{90}{\textbf{RGB}}&
\begin{overpic}[width=0.289\textwidth]{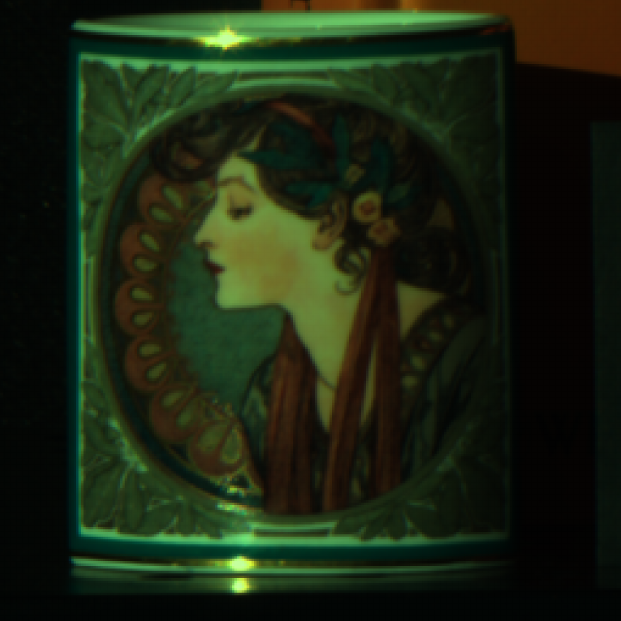}
\put(5,5){\color{white}\Huge Scene 1}
\end{overpic}&
\begin{overpic}[width=0.289\textwidth]{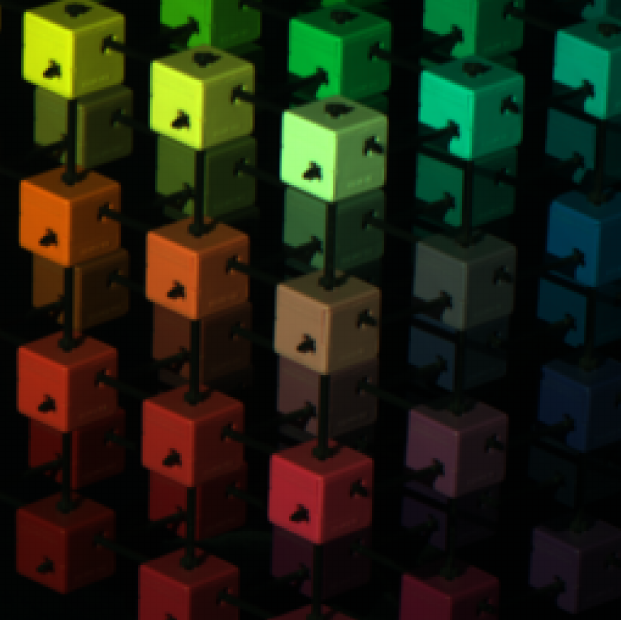}
\put(5,5){\color{white}\Huge Scene 2}
\end{overpic}&
\begin{overpic}[width=0.289\textwidth]{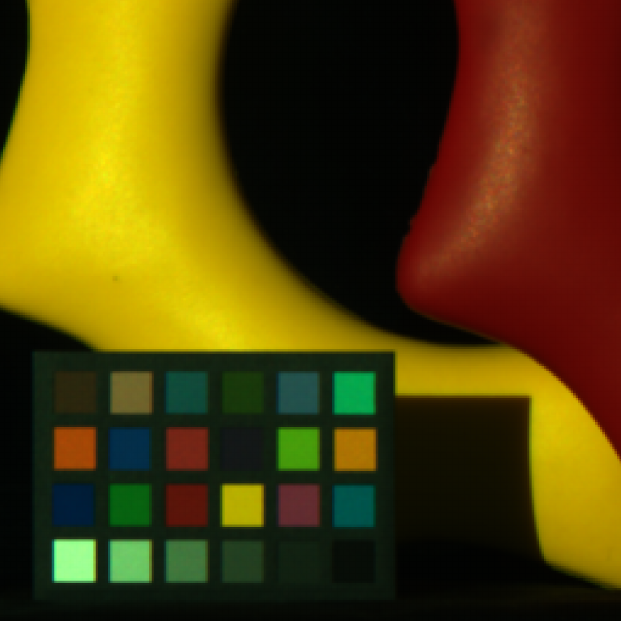}
\put(5,5){\color{white}\Huge Scene 3}
\end{overpic}&
\begin{overpic}[width=0.289\textwidth]{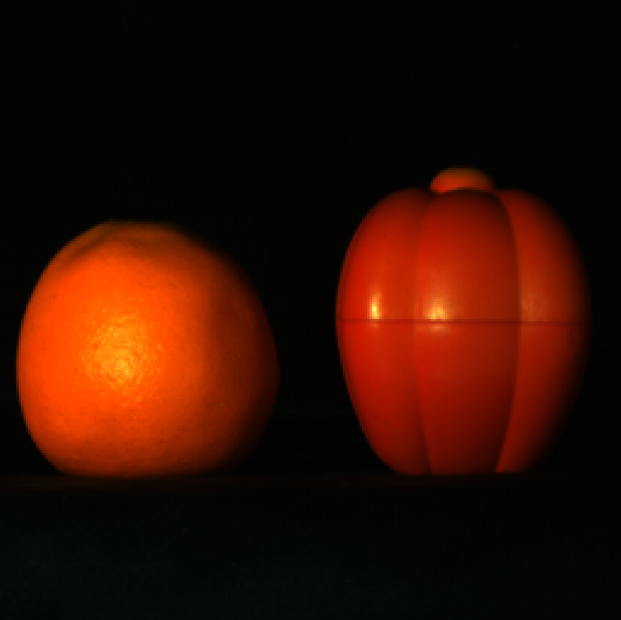}
\put(5,5){\color{white}\Huge Scene 5}
\end{overpic}
&
\begin{overpic}[width=0.289\textwidth]{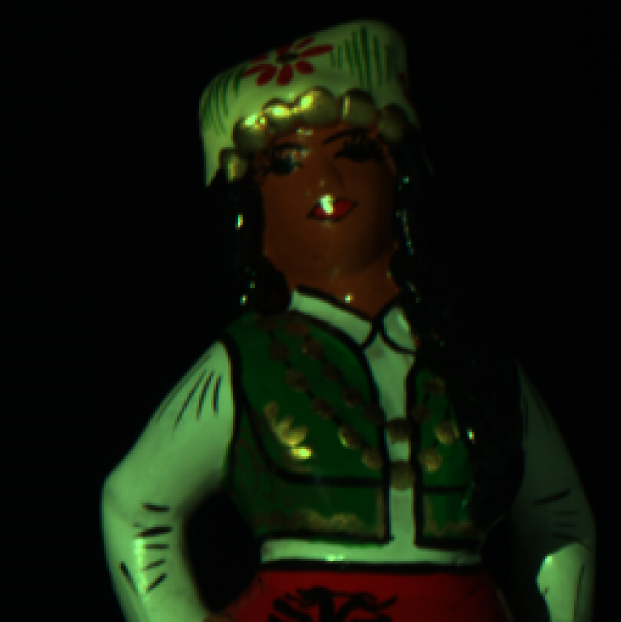}
\put(5,5){\color{white}\Huge Scene 6}
\end{overpic}&
\begin{overpic}[width=0.289\textwidth]{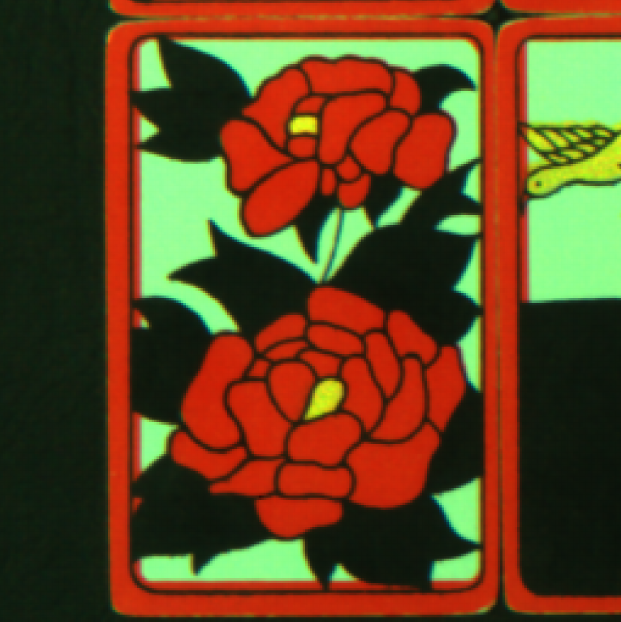}
\put(5,5){\color{white}\Huge Scene 7}
\end{overpic}&
\begin{overpic}[width=0.289\textwidth]{gt8}
\put(5,5){\color{white}\Huge Scene 8}
\end{overpic}&
\begin{overpic}[width=0.289\textwidth]{gt10}
\put(5,5){\color{white}\Huge Scene 10}
\end{overpic}
\end{tabular}
\end{adjustbox}
\end{tabular}
\end{adjustbox}
\hspace{-4mm}
\begin{adjustbox}{valign=t}
\begin{tabular}{c}
\includegraphics[width=0.076\textwidth]{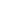}
\\
\includegraphics[width=0.1376\textwidth]{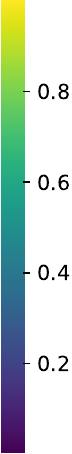}
\end{tabular}
\end{adjustbox}
\end{tabular}}
\vspace{-1mm}
\caption{{
Visualization of the feature embedding within {\textbf{\texttt{Sequn-SS}} blocks (\emph{i.e.}, \texttt{blk})}. Each column provides embedding representations from deep blocks (\emph{i.e.}, 19$\sim$28 out of 36) with selected feature channels (\emph{i.e.}, \texttt{chl} $i$, 1$\le$$i$$\le$$66$). \textit{Top} row: Embedding right after \texttt{Spe-MSA}, which works as an edge detector. \textit{Medium} row: Embedding right after \texttt{Spa-MSA}, which captures visual details of \texttt{Spa-MSA}. \textit{Bottom} row: RGB content.}
}
\label{fig: equn_embedding visal}
\vspace{-2mm}
\end{figure*}

\begin{figure*}[ht]
\footnotesize
\centering
\resizebox{0.97\textwidth}{!}{
\begin{tabular}{cc}
\begin{adjustbox}{valign=t}
\begin{tabular}{c}
\begin{adjustbox}{valign=t}
\begin{tabular}{ccccccccc}
\Huge\rotatebox{90}{\color{white}\textbf{AA}}& 
\begin{overpic}[width=0.289\textwidth]{white.pdf}
\put(7,5){\Huge blk 22/chl 31}
\end{overpic}&
\begin{overpic}[width=0.289\textwidth]{white.pdf}
\put(7,5){\Huge blk 22/chl 19}
\end{overpic}&
\begin{overpic}[width=0.289\textwidth]{white.pdf}
\put(10,5){\Huge blk 24/chl 4}
\end{overpic}&
\begin{overpic}[width=0.289\textwidth]{white}
\put(7,5){\Huge blk 23/chl 29}
\end{overpic}&
\begin{overpic}[width=0.289\textwidth]{white.pdf}
\put(5,5){\Huge blk 23/chl 14}
\end{overpic}&
\begin{overpic}[width=0.289\textwidth]{white.pdf}
\put(8,5){\Huge blk 21/chl 3}
\end{overpic}&
\begin{overpic}[width=0.289\textwidth]{white.pdf}
\put(3,5){\Huge blk 23/chl 20}
\end{overpic}&
\begin{overpic}[width=0.289\textwidth]{white.pdf}
\put(3,5){\Huge blk 23/chl 20}
\end{overpic}
\end{tabular}
\end{adjustbox}
\vspace{-1mm}
\\
\begin{adjustbox}{valign=t}
\begin{tabular}{ccccccccc}
\Huge\rotatebox{90}{\textbf{\texttt{Spe-MSA}}}&
\begin{overpic}[width=0.289\textwidth]{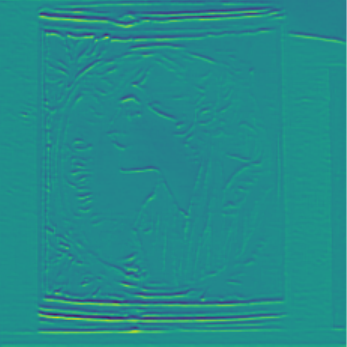}
\put(5,5){}
\end{overpic}&
\begin{overpic}[width=0.289\textwidth]{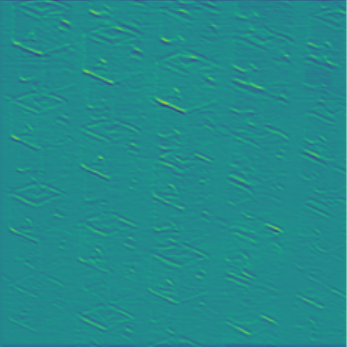}
\put(5,5){}
\end{overpic}&
\begin{overpic}[width=0.289\textwidth]{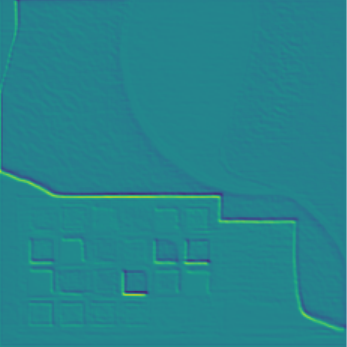}
\put(5,5){}
\end{overpic}&
\begin{overpic}[width=0.289\textwidth]{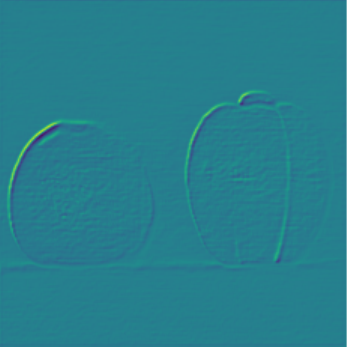}
\put(5,5){}
\end{overpic}&
\begin{overpic}[width=0.289\textwidth]{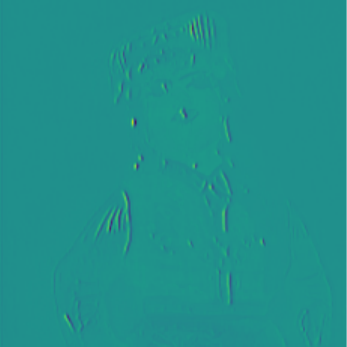}
\put(5,5){}
\end{overpic}&
\begin{overpic}[width=0.289\textwidth]{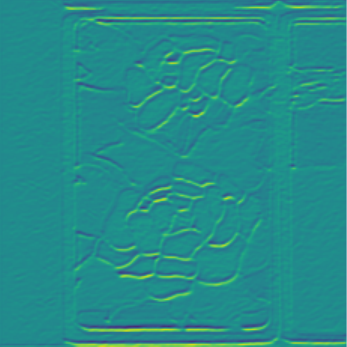}
\put(5,5){}
\end{overpic}&
\begin{overpic}[width=0.289\textwidth]{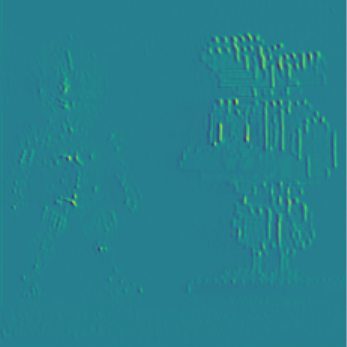}
\put(5,5){}
\end{overpic}&
\begin{overpic}[width=0.289\textwidth]{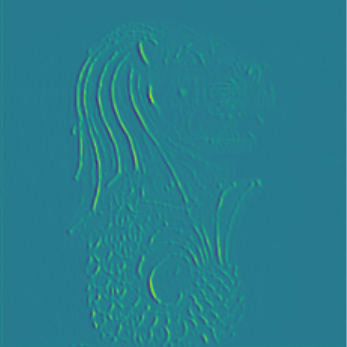}
\put(5,5){}
\end{overpic}
\end{tabular}
\end{adjustbox}
\\
\begin{adjustbox}{valign=t}
\begin{tabular}{ccccccccc}
\Huge\rotatebox{90}{\textbf{\texttt{Spa-MSA}}}&
\begin{overpic}[width=0.289\textwidth]{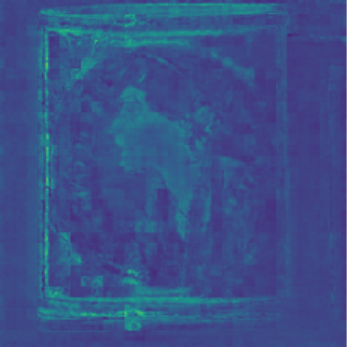}
\put(5,5){}
\end{overpic}&
\begin{overpic}[width=0.289\textwidth]{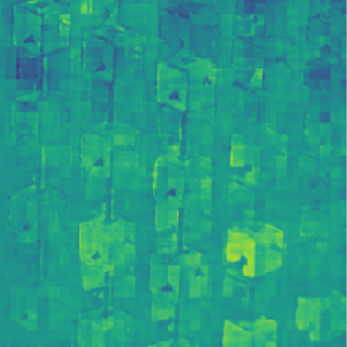}
\put(5,5){}
\end{overpic}&
\begin{overpic}[width=0.289\textwidth]{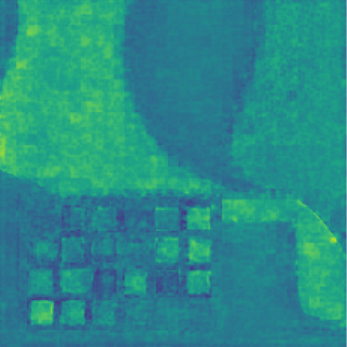}
\put(5,5){}
\end{overpic}&
\begin{overpic}[width=0.289\textwidth]{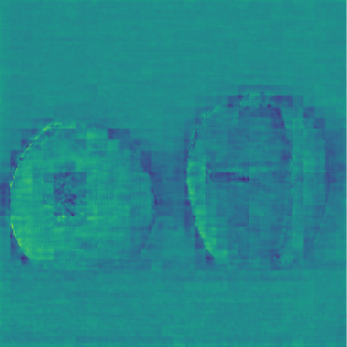}
\put(5,5){}
\end{overpic}&
\begin{overpic}[width=0.289\textwidth]{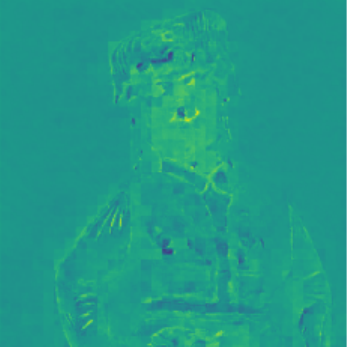}
\put(5,5){}
\end{overpic}&
\begin{overpic}[width=0.289\textwidth]{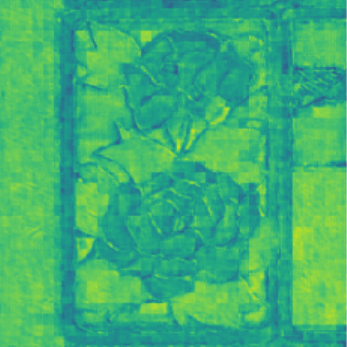}
\put(5,5){}
\end{overpic}&
\begin{overpic}[width=0.289\textwidth]{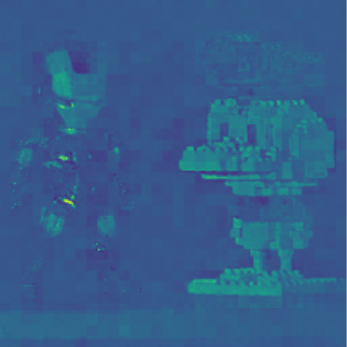}
\put(5,5){}
\end{overpic}&
\begin{overpic}[width=0.289\textwidth]{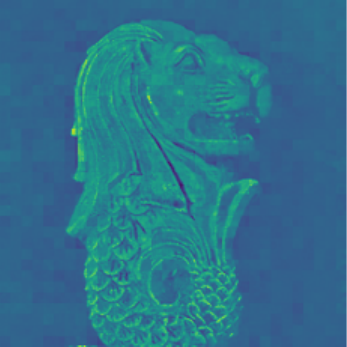}
\put(5,5){}
\end{overpic}
\end{tabular}
\end{adjustbox}
\\
\begin{adjustbox}{valign=t}
\begin{tabular}{ccccccccc}
\Huge\rotatebox{90}{\textbf{Fusion}}&
\begin{overpic}[width=0.289\textwidth]{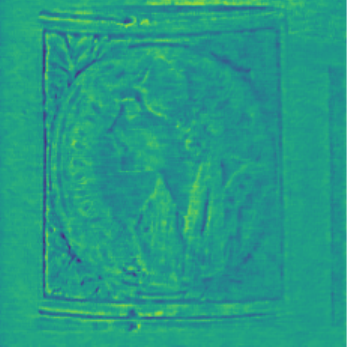}
\put(5,5){}
\end{overpic}&
\begin{overpic}[width=0.289\textwidth]{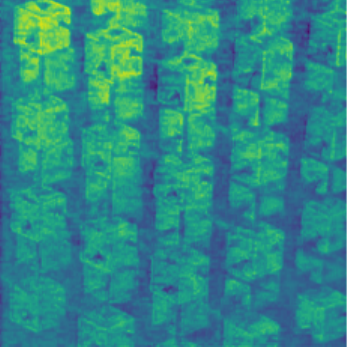}
\put(5,5){}
\end{overpic}&
\begin{overpic}[width=0.289\textwidth]{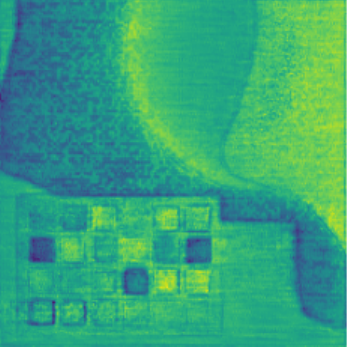}
\put(5,5){}
\end{overpic}&
\begin{overpic}[width=0.289\textwidth]{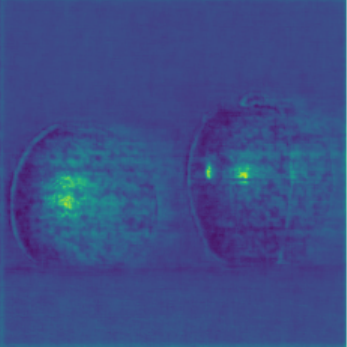}
\put(5,5){}
\end{overpic}&
\begin{overpic}[width=0.289\textwidth]{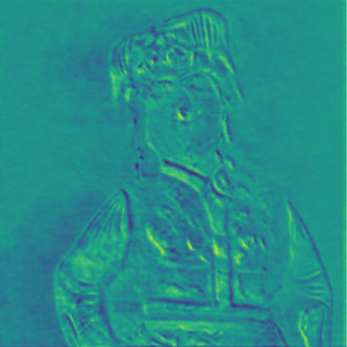}
\put(5,5){}
\end{overpic}&
\begin{overpic}[width=0.289\textwidth]{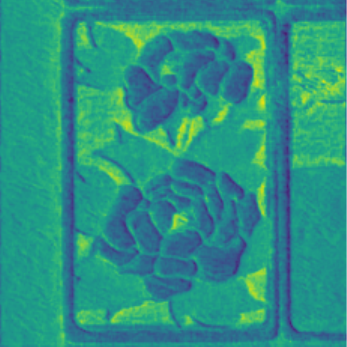}
\put(5,5){}
\end{overpic}&
\begin{overpic}[width=0.289\textwidth]{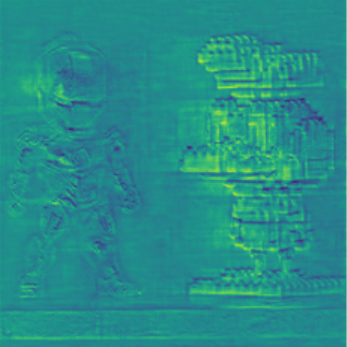}
\put(5,5){}
\end{overpic}&
\begin{overpic}[width=0.289\textwidth]{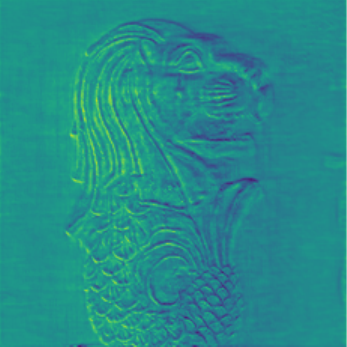}
\put(5,5){}
\end{overpic}
\end{tabular}
\end{adjustbox}
\\
\begin{adjustbox}{valign=t}
\begin{tabular}{ccccccccc}
\Huge\rotatebox{90}{\textbf{RGB}}&
\begin{overpic}[width=0.289\textwidth]{gt1}
\put(5,5){\color{white}\Huge Scene 1}
\end{overpic}&
\begin{overpic}[width=0.289\textwidth]{gt2}
\put(5,5){\color{white}\Huge Scene 2}
\end{overpic}&
\begin{overpic}[width=0.289\textwidth]{gt3}
\put(5,5){\color{white}\Huge Scene 3}
\end{overpic}&
\begin{overpic}[width=0.289\textwidth]{gt4}
\put(5,5){\color{white}\Huge Scene 5}
\end{overpic}
&
\begin{overpic}[width=0.289\textwidth]{gt6}
\put(5,5){\color{white}\Huge Scene 6}
\end{overpic}&
\begin{overpic}[width=0.289\textwidth]{gt7}
\put(5,5){\color{white}\Huge Scene 7}
\end{overpic}&
\begin{overpic}[width=0.289\textwidth]{gt8}
\put(5,5){\color{white}\Huge Scene 8}
\end{overpic}&
\begin{overpic}[width=0.289\textwidth]{gt10}
\put(5,5){\color{white}\Huge Scene 10}
\end{overpic}
\end{tabular}
\end{adjustbox}
\end{tabular}
\end{adjustbox}
\hspace{-4mm}
\begin{adjustbox}{valign=t}
\begin{tabular}{c}
\includegraphics[width=0.073\textwidth]{placeholder.png}
\\ 
\includegraphics[width=0.139\textwidth]{color_bar}
\end{tabular}
\end{adjustbox}
\end{tabular}
}
\vspace{-2mm}
\caption{
Visualization of the feature embedding within {\textbf{\texttt{Parall-SS}} blocks (\emph{i.e.}, \texttt{blk})}. Each column provides embedding representations from deep blocks (\emph{i.e.}, 22$\sim$24 out of 24) with selected feature channels (\emph{i.e.}, \texttt{chl} $i$, 1$\le$$i$$\le$$60$). \textit{Top} row: Embedding right after \texttt{Spe-MSA}, which works as an edge detector. \textit{Upper-medium} row: Embedding right after \texttt{Spa-MSA}, which captures visual details. \textit{Lower-medium} row: Fused embedding of \texttt{Spe-MSA} and \texttt{Spa-MSA}. \textit{Bottom} row: RGB content.}
\label{fig: edge_detector}
\vspace{-4mm}
\end{figure*}

\textbf{Implementation Details.}
The proposed $S^2$-Transformer contains $K$=4 stages, where each consists of $L$=6 $S^2$-attn blocks for a high-fidelity reconstruction performance. We let the number of embedding channels $C$ to be 60 and split them into $T$=6 heads in each $S^2$-attn block. We leverage window partitions to the feature embedding (\emph{i.e.}, window size $M$$=$$8$) and conduct the cyclic shifting following~\cite{liu2021swin}.
The model is trained for 300 epochs with Adam optimizer~\cite{kingma2014adam} ($\beta_1$=0.9, $\beta_2$=0.999). We set the batch size as 4. The initial learning rate is $4\times$$10^{-4}$ and halved every 50 epochs. Notably, the total amount of training epochs remains the same when employing the proposed mask-aware learning strategy. Specifically, the model is firstly pre-trained with $\mathcal{L}_{\texttt{ME}}$ for 150 epochs to get a promising approximation of the encoded signal $\widehat{\mathbf{F}}'$. Then we train the whole model with $\mathcal{L}_{\texttt{MA}}$ for the other 150 epochs. The mask-encoding (\texttt{ME}) weight $\alpha$ is  set as 1.5 in $\mathcal{L}_{\texttt{ME}}$ and then attenuated to 1.0 in $\mathcal{L}_{\texttt{MA}}$, and mask-aware term (\texttt{MA}) is weighted by $\beta$$=$$10$. We employ first half of the network to approximate the encoded signal, \emph{i.e.}, $k_{\texttt{ME}}$$=$$2$. Our experiments are conducted on NVIDIA RTX 3090 GPUs.

\textbf{Highlights on Mask-Aware Learning}. 
{Notably, we adopt the same dataset (detailed in ``Dataset'') and the same model structure (detailed in ``Implementation Details'') for both mask-encoding pre-training  stage and mask-aware learning stage. The only difference between two training stages lies in the loss function.  Specifically, (1) the model is firstly pre-trained with $\mathcal{L}_{\texttt{ME}}$ for $150$ epochs to get a promising approximation of the encoded signal $\widehat{\mathbf{F}}'$. 
Then we train the whole model with $\mathcal{L}_{\texttt{MA}}$ for the other $150$ epochs. (2) The mask-encoding (\texttt{ME}) weight $\alpha$ is  set as $1.5$ in $\mathcal{L}_{\texttt{ME}}$ and then attenuated to $1.0$ in $\mathcal{L}_{\texttt{MA}}$, and mask-aware term (\texttt{MA}) is weighted by $\beta$$=$$10$.
Besides, we keep the total amount of training epochs the same (\emph{i.e.}, $300$ epochs for $S^2$-Transformer) with or without the proposed mask-aware learning strategy.  
}

\begin{figure}[ht]
\footnotesize
\centering
\resizebox{\columnwidth}{!}{
\begin{tabular}{cc}
\begin{adjustbox}{valign=t}
\begin{tabular}{c}
\begin{adjustbox}{valign=t}
\begin{tabular}{cccccc}
\begin{overpic}[width=0.26\textwidth]{gt2}
\put(5,5){\color{white}\Huge Scene 2}
\end{overpic}&
\begin{overpic}[width=0.26\textwidth]{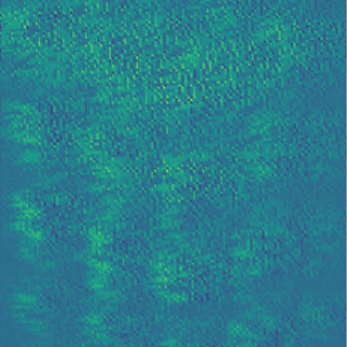}
\put(5,5){\color{white}\Huge blk 1}
\end{overpic}&
\begin{overpic}[width=0.26\textwidth]{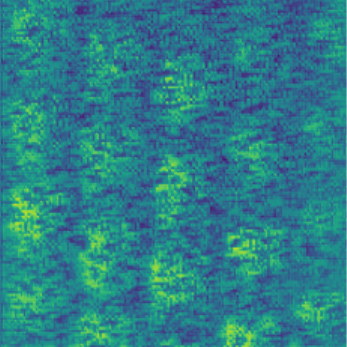}
\put(5,5){\color{white}\Huge blk 7}
\end{overpic}&
\begin{overpic}[width=0.26\textwidth]{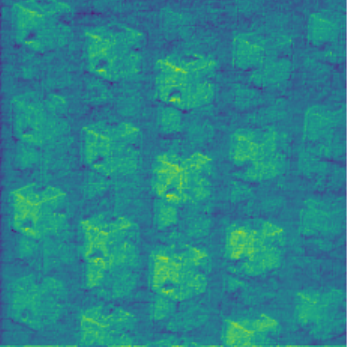}
\put(5,5){\color{white}\Huge blk 13}
\end{overpic}
&
\begin{overpic}[width=0.26\textwidth]{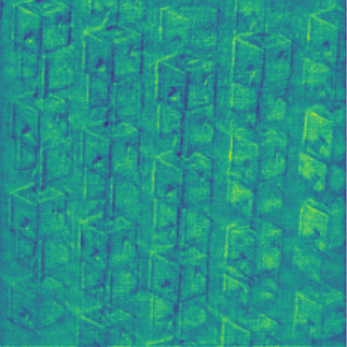}
\put(5,5){\color{white}\Huge blk 18}
\end{overpic}&
\begin{overpic}[width=0.26\textwidth]{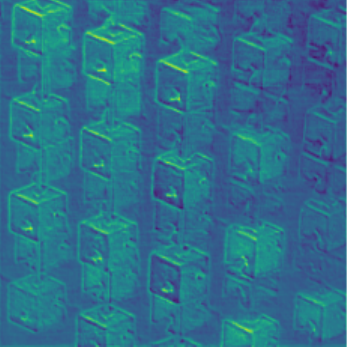}
\put(5,5){\color{white}\Huge blk 24}
\end{overpic}
\end{tabular}
\end{adjustbox}
\\
\begin{adjustbox}{valign=t}
\begin{tabular}{cccccc}
\begin{overpic}[width=0.26\textwidth]{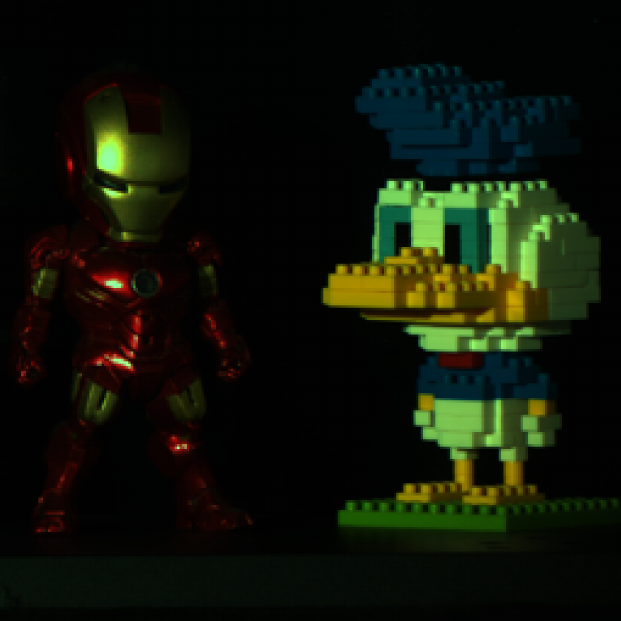}
\put(5,5){\color{white}\Huge Scene 8}
\end{overpic}&
\begin{overpic}[width=0.26\textwidth]{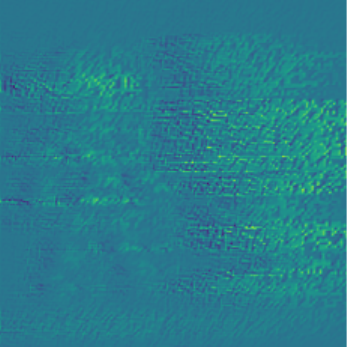}
\put(5,5){\color{white}\Huge blk 1}
\end{overpic}&
\begin{overpic}[width=0.26\textwidth]{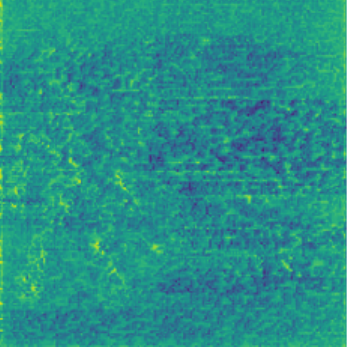}
\put(5,5){\color{white}\Huge blk 5}
\end{overpic}&
\begin{overpic}[width=0.26\textwidth]{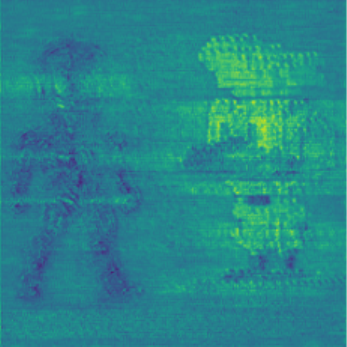}
\put(5,5){\color{white}\Huge blk 13}
\end{overpic}&
\begin{overpic}[width=0.26\textwidth]{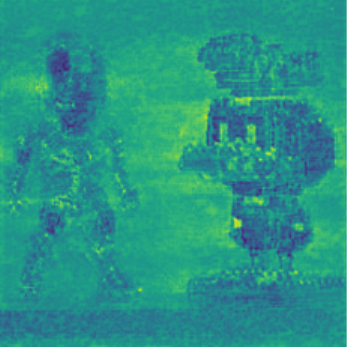}
\put(5,5){\color{white}\Huge blk 21}
\end{overpic}&
\begin{overpic}[width=0.26\textwidth]{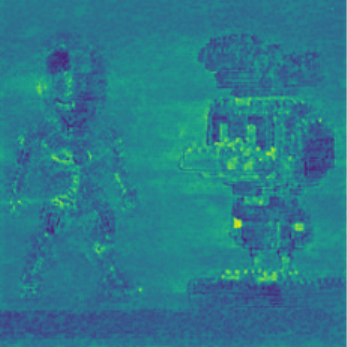}
\put(5,5){\color{white}\Huge blk 24}
\end{overpic}\\
\end{tabular}
\end{adjustbox}
\\
\begin{adjustbox}{valign=t}
\begin{tabular}{cccccc}
\begin{overpic}[width=0.26\textwidth]{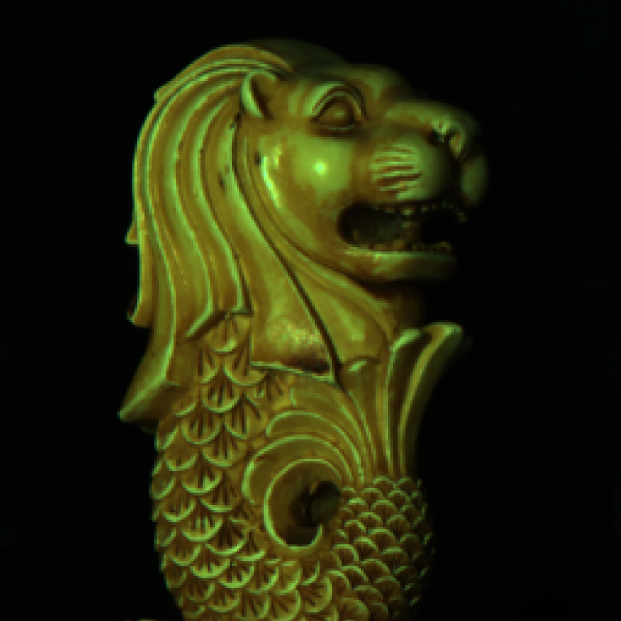}
\put(5,5){\color{white}\Huge Scene 10}
\end{overpic}&
\begin{overpic}[width=0.26\textwidth]{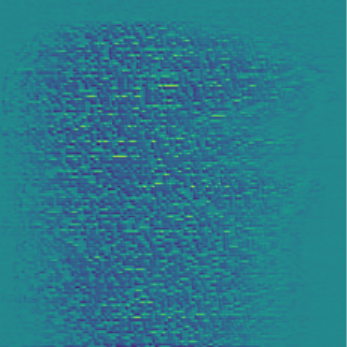}
\put(5,5){\color{white}\Huge blk 1}
\end{overpic}&
\begin{overpic}[width=0.26\textwidth]{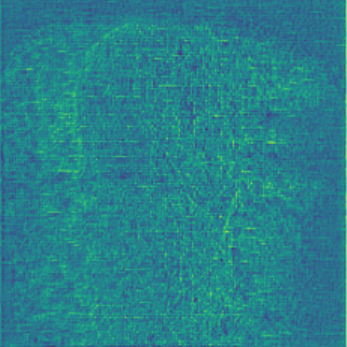}
\put(5,5){\color{white}\Huge blk 10}
\end{overpic}&
\begin{overpic}[width=0.26\textwidth]{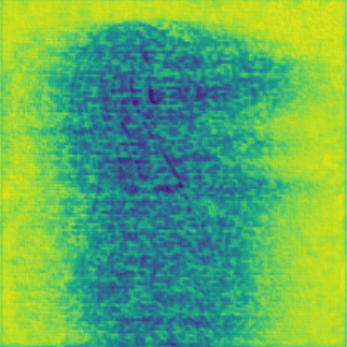}
\put(5,5){\color{white}\Huge blk 14}
\end{overpic}&
\begin{overpic}[width=0.26\textwidth]{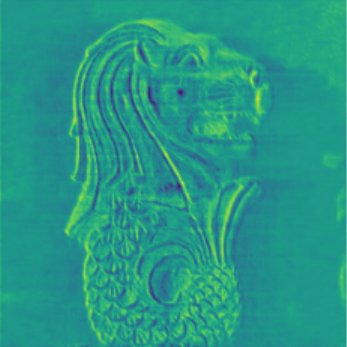}
\put(5,5){\color{white}\Huge blk 22}
\end{overpic}&
\begin{overpic}[width=0.26\textwidth]{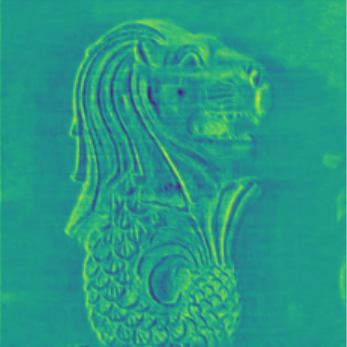}
\put(5,5){\color{white}\Huge blk 24}
\end{overpic}
\end{tabular}
\end{adjustbox}
\end{tabular}
\end{adjustbox}
\begin{adjustbox}{valign=t}
\begin{tabular}{c}
\includegraphics[width=0.125\textwidth]{color_bar}
\end{tabular}
\end{adjustbox}
\end{tabular}}
\caption{
Visualization of the fused embedding at different \texttt{Parall-SS} blocks (\emph{i.e.}, \texttt{B}). Embedding $Z$$\in$$\mathbb{R}^{H\times W \times C}$ gathers information from spatial and spectral attentions. Representations in deeper blocks (\emph{i.e.}, 24 in total) approaches the semantic meaning. We choose 37, 10, 59-th channels out of 60 for scene 2, 8, 10, respectively. }
\label{fig: fused_feature}
\vspace{-2mm}
\end{figure}

\begin{figure}[t]
\scriptsize
\centering
\resizebox{\columnwidth}{!}{
\begin{tabular}{cc}
\begin{adjustbox}{valign=t}
\begin{tabular}{c}
\begin{adjustbox}{valign=t}
\begin{tabular}{cccccc}
\includegraphics[width=0.26\textwidth]{gt1}  &
\includegraphics[width=0.26\textwidth]{gt2}  &
\includegraphics[width=0.26\textwidth]{gt3} &
\includegraphics[width=0.26\textwidth]{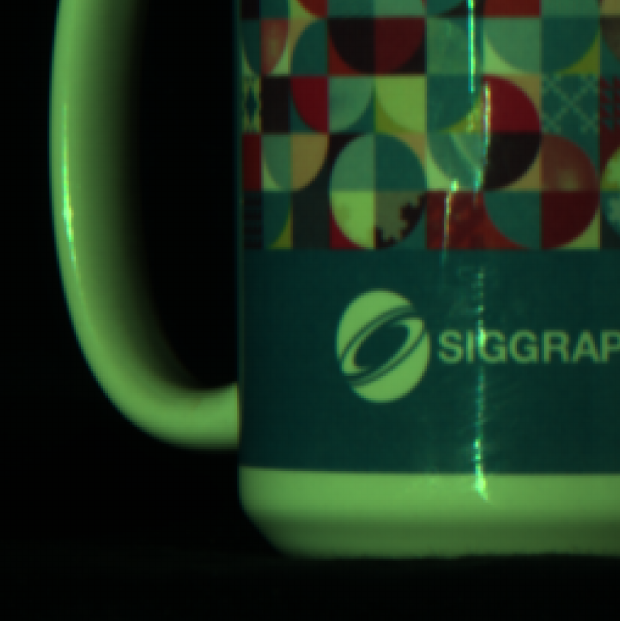}  &
\includegraphics[width=0.26\textwidth]{gt7}  &
\includegraphics[width=0.26\textwidth]{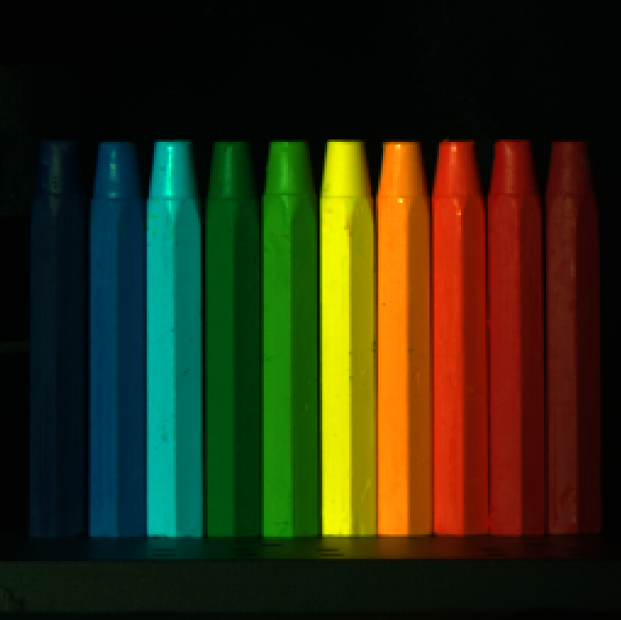}  
\end{tabular}
\end{adjustbox}
\\
\begin{adjustbox}{valign=t}
\begin{tabular}{cccccc}
\includegraphics[width=0.26\textwidth]{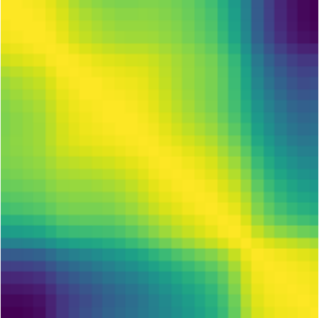}  &
\includegraphics[width=0.26\textwidth]{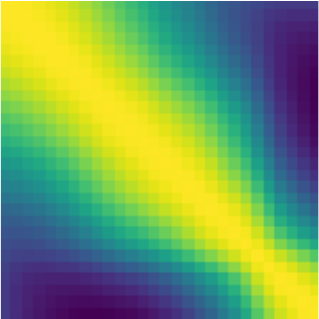}   &
\includegraphics[width=0.26\textwidth]{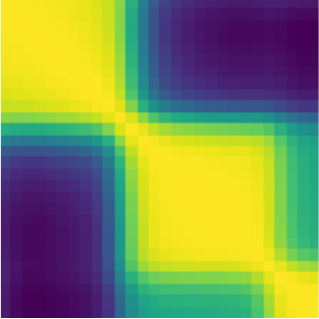}    &
\includegraphics[width=0.26\textwidth]{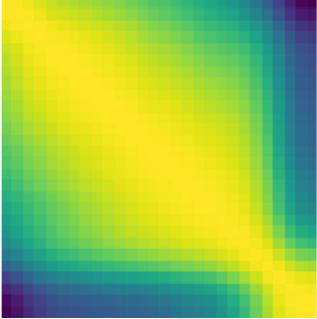}    &
\includegraphics[width=0.26\textwidth]{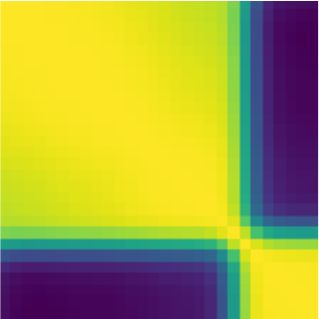} &
\includegraphics[width=0.26\textwidth]{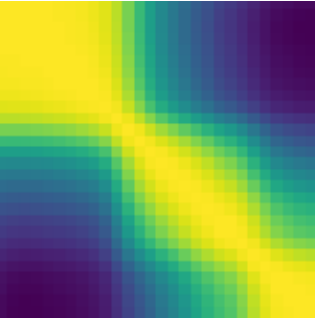} 
\end{tabular}
\end{adjustbox}
\\
\begin{adjustbox}{valign=t}
\begin{tabular}{cccccc}
\includegraphics[width=0.26\textwidth]{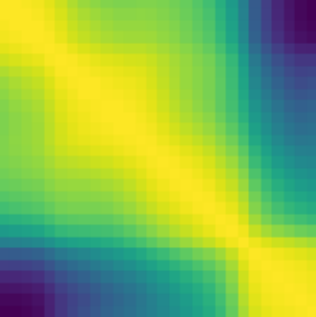}  &
\includegraphics[width=0.26\textwidth]{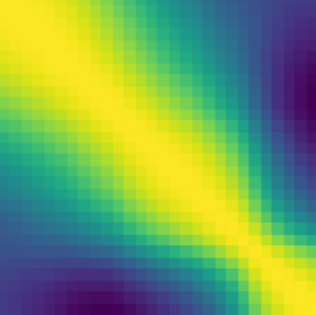}   &
\includegraphics[width=0.26\textwidth]{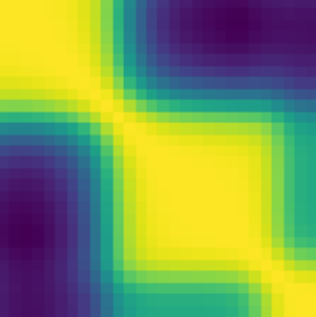}    &
\includegraphics[width=0.26\textwidth]{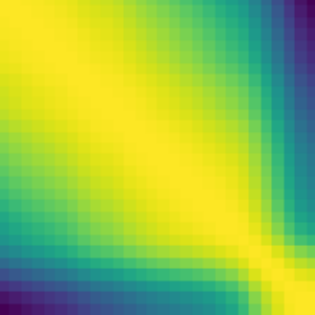}    &
\includegraphics[width=0.26\textwidth]{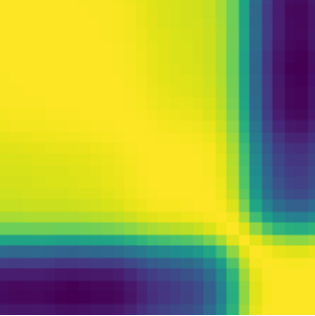} &
\includegraphics[width=0.26\textwidth]{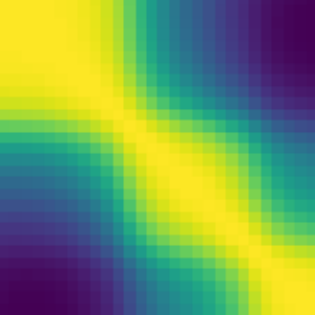}   
\end{tabular}
\end{adjustbox}
\end{tabular}
\end{adjustbox}
\hspace{-4mm}
\begin{adjustbox}{valign=t}
\begin{tabular}{c}
\includegraphics[width=0.125\textwidth]{color_bar}
\end{tabular}
\end{adjustbox}
\end{tabular}}
\vspace{-2mm}
\caption{
Visualization of the spectral correlation. \textit{Top}: RGB contents of the benchmark testing data. \textit{Middle}: spectral correlation coefficient matrices of the ground truth ($28$$\times$$28$). \textit{Bottom}: Corresponding matrices by the proposed method.}
\label{fig: spectral_corr}
\vspace{-4mm}
\end{figure}

\textbf{Compared Methods.}
We compare with eight state-of-the-art methods. 
Among them, 
DeSCI~\cite{liu2018rank}, and GAP-TV~\cite{yuan2016generalized} are model-based algorithms. 
The CNN-based methods include TSA-Net~\cite{Meng20ECCV_TSAnet}, SRN~\cite{wang2021new}, DGSMP~\cite{huang2021deep}, HDNet~\cite{hu2022hdnet}, MST~\cite{cai2022mask}, and CST~\cite{cai2022coarse}, in which MST and CST are recent Transformer-based network designs. Notably, we test and report the best performance of compared methods based on their open-sourced pre-trained models. The SRN is re-trained with the same training dataset as other methods for a fair comparison. 
Besides, we choose the best-performed variants of the compared methods, \emph{i.e.}, \texttt{SRN(v1)}, \texttt{MST-L}, and \texttt{CST-L-Plus}. For the simulation experiment, we conduct the quantitative comparison with the full-reference image quality assessment metrics, PSNR and SSIM~\cite{wang2004image}. For the real data evaluation, we adopt the no-reference image quality assessment metric, Naturalness Image Quality Evaluator (NIQE)~\cite{mittal2012making} due to the inaccessibility of the ground truth.

\subsection{HSI Reconstruction Performance}
\label{subsec: performance}

\textbf{Simulation}. 
We metrically evaluate the proposed $S^2$-Transformer with mask-aware learning strategy by comparing with other popular reconstruction methods. 
As shown in Table~\ref{Tab: psnr_single_trn} and Table~\ref{Tab: ssim_single_trn}, 
the proposed method outperforms CST~\cite{cai2022coarse} by 0.4dB/0.0018 in terms of PSNR/SSIM
and achieves 1.30dB/0.0108 improvement in contrast to MST~\cite{cai2022mask}.
We further analyze the computational complexity and the model size of different methods in Table~\ref{Tab: complexity}. Our $S^2$-Transformer requires the smallest computational overhead among Transformer-based methods in terms of the model size and the number of floating-point-operations (FLOPs), owning to the linear computational complexities of the window-based attention mechanisms. 
Moreover, we perceptually analyze the reconstruction performance of the proposed method.
In Fig.~\ref{Fig: simu_result}, we visualize the reconstruction results on three representative wavelengths. 
Our method enables more content retrieval on semantic areas and less distortions by the enlarged window. We also investigate the spectral fidelity of different methods by the density-wavelength curves on the informative area (\emph{e.g.}, \texttt{patch a} on the hyperspectral image). 
A higher curve correlation presents a better spectral modeling within the chosen narrowband.  Besides, we also noticed the proposed method may not improve over existing methods on challenging cases, \emph{e.g.}, $37.29$ ($S^2$-Transformer) \textit{v.s.} $38.20$ (CST) on Scene 3 in Table~\ref{Tab: psnr_single_trn}. There exists a complex spectra (various colors) in a small region where the proposed method improves partially. By comparison, CST selects and clusters the representative patches, allowing a better color detection in extreme cases. Note that the superiority of CST is at the cost of a much higher complexity (Table~\ref{Tab: complexity}). Overall, the proposed method achieves a better complexity-performance trade-off via
a well-founded attention design. Table~\ref{Tab: structure ablation} quantitatively demonstrates the better performance can be expected by exchanging
the attention type.

\begin{table}[h] 
\footnotesize
\caption{ Ablation study on mask-aware learning. Note that $\mathcal{L}_{\texttt{ME}}\rightarrow\mathcal{L}_{\texttt{MA}}$ is our full method that first pre-train the model with  $\mathcal{L}_{\texttt{ME}}$ and then employ $\mathcal{L}_{\texttt{MA}}$. We adopt the same dataset and the same model structure for the pre-training  ($\mathcal{L}_{\texttt{ME}}$) and mask-aware learning ($\mathcal{L}_{\texttt{MA}}$). All learning strategies takes the same amount of total training epochs for a fair comparison.}
\vspace{-1mm}
\label{tab: loss_ablation}
\centering
\resizebox{.48\textwidth}{!}{
\centering
\begin{tabular}{c|ccc|cc} 
    \hline
    \multirow{2}{*}{Learning strategies} & \multicolumn{3}{c|}{Loss terms} & \multirow{2}{*}{PSNR(dB)} & \multirow{2}{*}{SSIM} \\
    \cline{2-4}
    & \texttt{Recon} & \texttt{ME} & \texttt{MA}  & & \\
    \hline
    $\mathcal{L}_1$ (existing)  & \ding{51} & \ding{55}  & \ding{55} & 36.31 & 0.9569\\
    $\mathcal{L}_{\texttt{ME}}$ & \ding{51} & \ding{51} & \ding{55}  & 36.32 & 0.9569\\
    $\mathcal{L}_{\texttt{MA}}$ & \ding{51} & \ding{51} & \ding{51}  & 34.88 & 0.9496\\
    \hline
    $\mathcal{L}_{\texttt{ME}}$ $\rightarrow$ $\mathcal{L}_{\texttt{MA}}$ (full model) & \ding{51} & \ding{51} & \ding{51}  & \textbf{36.48} & \textbf{0.9584}\\
    \hline
\end{tabular}}
\vspace{-0.4cm}
\end{table}

\noindent\textbf{Real Measurement Retrieval}. We further evaluate the proposed method on real-captured measurements. In Fig.~\ref{Fig: real_result}, we visualize the reconstruction results of compared methods.
By observation, the Transformer-based methods provide the most visually pleasant results among all, but with different emphasizes. For example, the CST provides a better contrast while lose contents at dark area. The proposed method retrieves the most contents (\emph{e.g.}, the smallest hole at center of the yellow flower in top row). In Table~\ref{Tab: real metric}, we metrically compare the Transformer-based methods with the no-reference assessment, naturalness image quality evaluator (NIQE)~\cite{mittal2012making}. A smaller value indicates a better global retrieval performance of the proposed method.

\subsection{Spatial-spectral Attentions}
\label{subsec: spatial-spectral attn}

{\textbf{$S^2$-attn Analysis}.
We systematically investigate different types of spatial-spectral attentions and \textbf{propose the \texttt{Parall-SS} block as our method}.  For a comparable computational cost, we enhance the \texttt{Spa} and \texttt{Spe} blocks as shown by the dash boxes in Fig.~\ref{fig: ss_attn}. 
Table~5 reports the model size (\#params), computational complexity by floating point operations (FLOPs), and reconstruction performances. 
(1) \texttt{SpeSpe} block is the most limited among all architecture designs. This is because spectral attention only describes the embedding channel interplay, which heavily relies on the pre-determined \textit{wavelengths}. Spectral attention fails to abstract the fine-grained visual details in each spectral channel. Similar to the proposed \texttt{SpeSpe} attention, MST~\cite{cai2022mask} demonstrates a similar performance under a comparable complexity. (2)  The \texttt{SpaSpa} block outperforms the \texttt{SpeSpe} block. Since spatial attention computes the long-range dependencies in the spatial domain, which benefits the visually grounded semantics modeling and therefore allows a more granular reconstruction than the spectral attention. (3)  Next, \texttt{Sequn-SS} block observes a slight performance boost over the single spatial/spectral attention, which demonstrates the potential of joint modeling. However, since the spatial and spectral attentions are directly concatenated in each block, there lacks an interaction between these two modalities.  (4)  Finally, \texttt{Parall-SS} block yields the state-of-the-art result, \emph{i.e.}, $36.48$dB/$0.9584$ in terms of the PSNR/SSIM, specifically, the \texttt{Parall-SS} outperforms the \texttt{Sequn-SS} with a smaller computational burden, which better takes advantage of the two-fold \textit{orthogonal} data characteristics.  The learnable feature fusion in \texttt{Parall-SS} enables embedding interaction in each block, bringing more flexibility than the \texttt{Sequn-SS}. 
\textbf{(5)} Notably, both spatial-spectral attention structures outperforms  the solely spatial/spectral mechanisms, indicating a better data clue exploitation of joint modeling. 
}

{\textbf{Sequential Spatial-Spectral Attention Analysis}. 
Fig.~\ref{fig: equn_embedding visal} presents the feature embedding of the sequential spatial-spectral attention blocks.  By observation, the spatial and spectral self-attentions are alternately performed, leading to blended modeling across spatial and spectral dimensions. Either of the attention lacks independence, and thereby the behaviours of the spatial and spectral are quite similar in some cases (1st, 5th, 7th column) of Fig.~\ref{fig: equn_embedding visal}. { Notably, we also notice some differences between two attentions -- the spectral attention seems to be good at abstracting the most discriminated features of an image (\emph{e.g.}, strong edges in $3$rd, $5$th, and $7$th columns). By comparison, spatial attention hardly depicts edges but highlights more fine-grained visual details. This motivate us to evaluate spatial/spectral attentions in a more independent manner, \emph{i.e.}, parallel arrangement.}
}

\begin{figure}[t]
\centering
\resizebox{1.04\columnwidth}{!}{
\begin{tabular}{c}\hspace{-6mm}
\begin{adjustbox}{valign=t}
    \begin{tabular}{c}
    \includegraphics[width=\columnwidth]{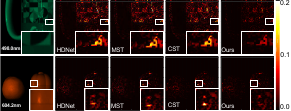}\\
    \vspace{2mm}\hspace{-1.01mm}
    \includegraphics[width=\columnwidth]{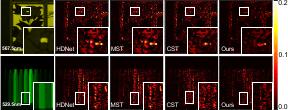}
    \end{tabular}
\end{adjustbox}
\end{tabular}}
\vspace{-4mm}
\caption{Reconstruction difficulty maps ($|\widehat{\mathbf{F}}$$-$$\mathbf{F}|_{\texttt{chl}}$) in \textbf{masked} areas. \textbf{Lower} intensity indicates a better reconstruction capacity -- the proposed method demonstrates the smallest difference as the ground truth on masked areas, regardless of the spatial textures and the spectral channels. See zoom-in patch for a better comparison.}
\label{fig: masked difficulty}
\vspace{-3mm}
\end{figure}

\begin{figure}[ht] 
\centering 
\includegraphics[width=\columnwidth]{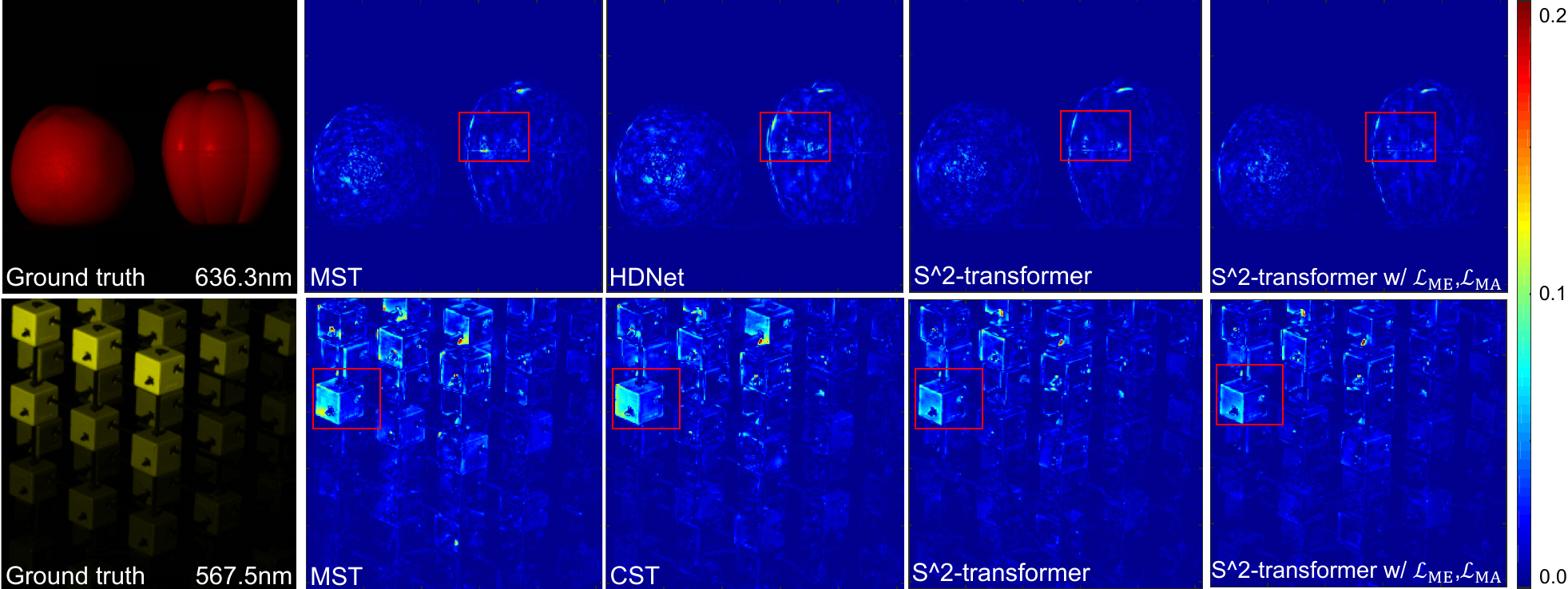}
\vspace{-4mm}
\caption{Reconstruction difficulty maps ($|\widehat{\mathbf{F}}-\mathbf{F}|_{\texttt{chl}}$) on simulation data. Lower intensity indicates a better retrieval.} 
\label{fig: global difficulty}
\vspace{-4mm}
\end{figure}

\begin{figure}[htp]
\centering
\includegraphics[width=0.9\columnwidth]{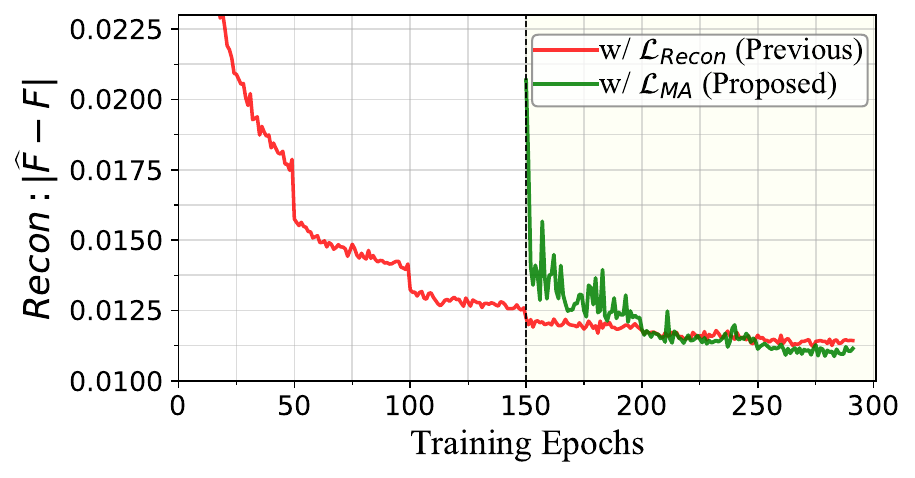} \vspace{-2mm}
\caption{Convergence analysis of the reconstruction. Previous strategy employs the \texttt{Recon} term, \emph{i.e.}, $\mathcal{L}_{\texttt{Recon}}$$=$$|\widehat{\mathbf{F}}-\mathbf{F}|$ (red curve). With the mask-aware learning, the reconstruction converges toward a lower scale (green curve). The $\mathcal{L}_\texttt{MA}$ follows 150-epoch pre-training.}
\label{fig: loss} 
\vspace{-4mm}
\end{figure}

\begin{figure}[t] 
\centering 
\includegraphics[width=0.3\textwidth]{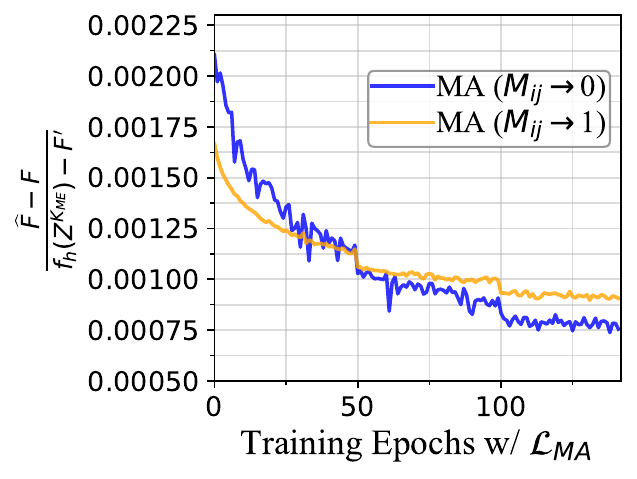}
\vspace{-3mm}
\caption{{
Loss curve of \texttt{MA} in the \textbf{second} training phase. \texttt{MA} gradually converges on both masked and unmasked regions. }
} \label{Fig: loss MA}
\vspace{-10mm}
\end{figure}

\textbf{Parallel Spatial-Spectral Attention Analysis}.
Following the above observations, we put emphasize on the \texttt{Parall-SS} block. First of all, we visualize the fused feature embedding within the attention blocks at different depth. There are total 24 blocks given $K$$=$$4$ and $L$$=$$6$. In Fig.~\ref{fig: fused_feature}, we visualize the feature maps of the embedding with random channel indices and depths. By observation, the representations in deeper blocks are closer to the visual semantics {and the deep blocks  present a similar abstracting ability (\emph{e.g.}, \texttt{blk 22} and \texttt{blk 24} of \texttt{Scene 10} in Fig.~\ref{fig: fused_feature}). Based on these blocks (\emph{i.e.}, 22$\sim$24 blocks), we visualize the exploit the behaviours of spatial and spectral attentions.} In Fig.~\ref{fig: edge_detector}, we present the feature embedding right after the $f_\texttt{Spa-MSA}(\cdot)$ and  $f_\texttt{Spe-MSA}(\cdot)$, respectively. { The evidences from different data indicate that  $f_\texttt{Spe-MSA}(\cdot)$ works like an \textit{edge detector}.
By comparison, the $f_\texttt{Spa-MSA}(\cdot)$ conversely collects the fine-grained visual details. These findings help explain in parallel spatial-spectral attention design, how both of the attention types contribute to the reconstruction, respectively.} This observation coincides with our motivations. (1) Spatial attention allows a fine-grained visual detail modeling by attending the total number of $H$$\times$$W$ tokens.
(2) The $f_\texttt{Spe-MSA}(\cdot)$ describes the 2D token relations by a scalar attention value regardless of the token spatial size, which indicates a heterogeneous spatial expressiveness across the 2D token plane, considering the finite modeling capacity. {It ultimately depends on the most discriminative regions, \emph{i.e.}, edges, to identify different feature channels. } 
Since the inconsistent representation space between the embedding ($C$$=$60) and the hyperspectral signal ($N{_\lambda}$$=$$28$), we directly observe the spectral distribution of the predictions upon \texttt{Parall-SS}. As shown in Fig.~\ref{fig: spectral_corr}, each image possesses a unique distribution. The proposed network retrieves the spectrum patterns properly.  
{Due to the better performance of the \texttt{Parall-SS} compared to \texttt{Sequn-SS}, we finalize the  network structure and conduct experiments in the following using this design.}

\subsection{Mask-aware Learning}
\label{subsec: mask-aware learning}

\textbf{Convergence Analysis.}
We firstly conduct an ablation study for the proposed mask-aware learning in Table~\ref{tab: loss_ablation}. 
We choose the $|\widehat{\mathbf{F}}-\mathbf{F}|$ as the \texttt{Previous} setting and all experiments follow the same optimization configurations (refer to Section~\ref{subsec: setting} for more details). By comparison, a complete mask-aware learning treatment enables 0.17dB/0.0015 improvement over the baseline. Besides, solely employing either $\mathcal{L}_{\texttt{ME} }$ or $\mathcal{L}_{\texttt{MA}}$ hardly benefits the reconstruction. (1) The mask-encoding (\texttt{ME}) term introduced in $\mathcal{L}_{\texttt{ME} }$ penalizes the shallow parts of the network. Its effect weakens as the network deepens. (2) The mask-aware (\texttt{MA}) term infers the masked regions upon the well-approximated $\widehat{\mathbf{F}}$, which is unattainable in early training steps. We further analyze the convergence of the reconstruction by directly observing the \texttt{Recon} term. As shown in Fig.~\ref{fig: loss}, the absolute error between the $\widehat{\mathbf{F}}$ and $\mathbf{F}$ is further minimized when applying \texttt{MA}. 
Since the $\mathcal{L}_{\texttt{MA}}$ prioritizes the masked region with higher penalty. {A more precise reconstruction accordingly contributes to a globally-averaged fidelity.} We perceptually show this below.

\begin{table*}[h] 
\footnotesize
\caption{Integrating mask-aware learning into other  methods. For SRN~\cite{wang2021new}, we adopt the first $k_{\texttt{ME}}$ residual blocks to compute the mask-encoded signal $\mathbf{F}'$, \emph{e.g.}, $||f_{h}(\mathbf{Z}^{k_\texttt{ME}}) - \mathbf{F}'||_1$, where  $f_h(\cdot)$ is implemented by an \texttt{LN-CONV} structure, being consistent with our method. Since MST~\cite{cai2022mask} is a U-shaped symmetric encoder-decoder network structure, we take the middle output embedding of the last bottleneck MSAB block to compute $\mathbf{Z}^{k_\texttt{ME}}$. }
\vspace{-1mm}
\label{tab: ma-SRN-MST}
\centering
\resizebox{.78\textwidth}{!}{
\centering
\begin{tabular}{c|cccc|cc} 
    \hline
    \multirow{2}{*}{Methods} & \multicolumn{4}{c|}{SRN~\cite{wang2021new}} & \multicolumn{2}{c}{MST~\cite{cai2022mask}}  \\
    \cline{2-7}
    & w/o mask aware & $k_{\texttt{ME}}=10$ & $k_{\texttt{ME}}=12$ & $k_{\texttt{ME}}=14$  & w/o mask aware & w/ mask aware \\
    \hline
    PSNR & 35.07  & 35.20 & 35.47  & 35.50 & 35.18 & 35.39 \\
    SSIM & 0.9430 & 0.9434 & 0.9479 & 0.9480 & 0.9476& 0.9524 \\
    \hline
\end{tabular}}
\end{table*}

\begin{table*}[htp] 
\caption{Performance under challenging scenarios. (1) \texttt{Non-smooth Spectra} attenuates the spectral correspondence by sampling far-between wavelengths. (2) \texttt{Poisson Noise} is imposed over all channels. $\lambda$ denotes the variance. }
\vspace{-1mm}
\label{Tab: challenging}
\centering
\resizebox{\textwidth}{!}{
\centering
\begin{tabular}{c|cc|cc|cc|cc|cc} 
\hline
\multirow{2}{*}{Methods} &  \multicolumn{2}{c|}{Non-Smooth Spectra}  & \multicolumn{2}{c|}{Poisson Noise ($\lambda=5$)} & \multicolumn{2}{c|}{Poisson Noise ($\lambda=10$)} & \multicolumn{2}{c|}{Poisson Noise ($\lambda=20$)} &  \multicolumn{2}{c}{Poisson Noise ($\lambda=50$)}  \\ \cline{2-11} & PSNR (dB) & SSIM & PSNR (dB) & SSIM & PSNR (dB) & SSIM & PSNR (dB) & SSIM  & PSNR (dB) & SSIM \\
\hline
MST~\cite{cai2022mask}    & 37.63  & 0.9772 & 35.32 & 0.9490 & 35.09  & 0.9457  & 33.17 & 0.9101  & 32.94 & 0.9089 \\
$S^2$-Transformer (\textbf{ours})    & \textbf{37.98} & \textbf{0.9796} & \textbf{36.50} & \textbf{0.9627} & \textbf{35.91} & \textbf{0.9504} & \textbf{35.34} & \textbf{0.9495} & \textbf{33.36} & \textbf{0.9145}  \\
\hline
\end{tabular}}
\vspace{-3mm}
\end{table*}

In Fig.~\ref{fig: loss}, we provide the loss tendency of the \texttt{Recon} term with or without the mask-aware learning, on the right, we also present the loss curve of the MA term in the second training phase. Specifically, we demonstrate the MA loss value only on the masked  region by the blue curve and the unmasked region by the orange curve, respectively. Both curves converge to a smaller value, which indicates an effective penalization toward the reconstruction of the masked/unmasked areas. Specifically, there is a more obvious degradation tendency of the MA term (blue curve) on the masked regions compared to the unmasked region, which means the MA term emphasizes more on the masked and uncertainty areas as expected (note that we choose the spatial patch where the masked and unmasked regions are of the similar spatial size). In unmasked regions, the MA term smoothly degrades, which empirically reveals that the proposed learning can benefit the reconstruction of unmasked (but noisy) regions, serving as more desirable evidence for than the theoretical induction.

\textbf{Data Uncertainty}.
We firstly compare the empirical difficulties among different methods. Spectral channels are selected at random and the masked regions are emphasized, \emph{i.e.}, $|\widehat{\mathbf{F}}-\mathbf{F}|$$\odot$$(\mathbf{1}-\mathbf{M})$, where $\odot$ denotes a pixel-wise multiplication. 
As shown by the Fig.~\ref{fig: masked difficulty}, { we empirically compare the reconstruction performance on different methods, especially on \textbf{textured, \emph{e.g.}, edges, and masked areas}. The lower intensity demonstrates a smaller pixel-wise absolute error and, therefore, a better reconstruction performance. The proposed method shows more accurate results.  Specifically, we select the same spatial size for the zoom-in and scale the patch with the same ratio. The only difference between two results is whether employing the mask-aware learning. } Identifying the masked areas is non-trivial, since the commonly used L1 or L2 norms hardly distinguish aleatoric uncertainty caused by the internal high-frequency textures and the external hardware encoding (\emph{i.e.}, masking). This leads to a common modeling behaviour. In Fig.~\ref{fig: global difficulty}, we further  visualize the reconstruction difficulties among compared methods and our ablated models. In line with the convergence curve in Fig.~\ref{fig: loss}, the $\mathcal{L}_{MA}$ enhances the reconstruction globally. Better approximation of the masked areas potentially benefits the neighbored pixel estimation owning to the spatial locality property of the model. We finally visualize both simulation and real predictions of the ablated model in Fig.~\ref{fig: mask aware visual result}. By the zoom-in windows, mask-aware learning contributes to a high-fidelity reconstruction.

\textbf{Backbone Discussion}. We further integrate the mask-aware learning into competitive methods of SRN~\cite{wang2021new} and MST~\cite{cai2022mask} in Table~\ref{tab: ma-SRN-MST}. For SRN, we adopt the first $k_{\texttt{ME}}$ (\emph{i.e.}, $1<k_{\texttt{ME}}<16$) residual blocks to compute the masked encoded signal $\textbf{F}'$, where where $\textbf{Z}^{k_\texttt{ME}}$ denotes the output embedding ($1<k_\texttt{ME}<16$) and $f_h(\cdot)$ is implemented by an \texttt{LN-CONV} structure, being consistent with our method. Since MST~\cite{cai2022mask} is a U-shaped symmetric encoder-decoder network structure, we take the output embedding of the last bottleneck MSAB block (in the middle of the whole U-shaped network) to compute $\mathbf{Z}^{k_\texttt{ME}}$. We firstly adopt the mask-encoding loss $\mathcal{L}_\texttt{ME}$ to train SRN for $100$ epochs and train MST for $80$ epochs. Notably, we keep the total training epochs of the mask-aware learning the same as the original settings of SRN and MST. Besides, we set the same $\alpha$ and $\beta$ schedule as the $S^2$-Transformer for both of the methods. As compared in Table~\ref{tab: ma-SRN-MST}, mask-aware learning brings $0.43$dB/$0.0050$ performance boost for SRN and $0.18$dB/$0.0048$ for MST in terms of PSNR/SSIM. Besides the transformer architecture (MST and the proposed $S^2$-Transformer), mask-aware learning also benefits the traditional convolution-based network (SRN).  Nevertheless, mask-aware learning brings clear performance boost regardless of the reconstruction backbone.

{
\subsection{Challenging Scenarios}
\label{subsec: challenge}
\textbf{Non-smooth Spectra}. 
We train and test the reconstruction models upon simulation data with non-smooth spectral profiles. Specifically, we zero out the two spectral channels every two channels per interval  
    \begin{equation}
        \begin{aligned}
            \mathbf{F}(:,:,n_{\lambda}) = 
            \begin{cases}
                \mathbf{0},&  \ n_{\lambda} \% 4 < 2, \\
                \mathbf{F}(:,:,n_{\lambda}),& \texttt{else},
            \end{cases} 
        \end{aligned}
    \end{equation}
    where $n_{\lambda}={0,...,27}$. Since a part of the spectral channels is masked, it is more challenging to model the long-range dependencies across spectra and infer the underlying correlations.     We compare the performance of the proposed method (\emph{i.e.}, $S^2$-Transformer with paralleled spatial-spectral attention) with the MST~\cite{cai2022mask} in Table~\ref{Tab: challenging}.     There is a larger performance gap between the two methods than the benchmark performance in Table~1 and 2. Masked spectra impose higher uncertainty for the reconstruction and break the intrinsic dependencies among adjacent wavelengths, which hinders the spectral long-range dependency modeling of MST~\cite{cai2022mask}. By comparison, the proposed method well disentangles the spatial-spectral data clues by paralleled attention design, which allows a better representation of learning in the spatial and spectral domains, respectively.
}

{
\textbf{Poisson Noise}.
We also consider the hyperspectral reconstruction under a more challenging scenario of extremely noisy conditions. 
We simulate the scenario by injecting the Poisson noise into both training and testing data.     Specifically, for each pixel at spatial location of $(i,j)$, the intensity is 
    \begin{equation}
        P(k_{(i,j)})=\frac{\lambda^{k_{(i,j)}}}{k_{(i,j)}!}e^{-\lambda},
    \end{equation}
    {where $k_{(i,j)}$ denotes the illumination of the pixel. We choose four representative variance  values, \emph{i.e.}, $\lambda=\{5, 10, 20, 50\}$.  The amount of noise added to the spectral channels increases as the lambda value grows. Table~\ref{Tab: challenging} shows that (1) the proposed $S^2$-Transformer consistently outperforms MST~\cite{cai2022mask} under different $\lambda$ values. (2) The performance gaps between  $S^2$-Transformer and MST can be significant under highly noisy scenario of $\lambda=20$, \emph{e.g.}, $2.17$dB/0.0394 in terms of PSNR/SSIM. (3) In the extremely noisy case of $\lambda=50$, both MST and the proposed method see a degraded performance, but the performance gap between them is still clear, \emph{i.e.}, $0.42$dB/$0.0056$ in terms of PSNR/SSIM. In summary, $S^2$-Transformer is less prone to the Poisson noise injection and non-smooth spectra. Different from the MST that only highlights the spectral dependency, $S^2$-Transformer jointly considers the information disentangling by spatial-spectral attention and emphasizes masked region by mask-aware learning.  Consistent with the above observation, $S^2$-Transformer perceptually demonstrates better results under the real-captured HSI (Fig.~\ref{Fig: real_result}), where the real measurement noise is inherently introduced. Overall, the performance of the proposed method can be better explained from the physical scope (optical encoding procedure), yielding more meaningful, trustworthy, and promising results. For example, the proposed $S^2$-Transformer can bring a remarkable performance boost on challenging scenarios. Besides, mask-aware learning shows promising performance on competitive methods of SRN~\cite{wang2021new} and MST~\cite{cai2022mask}.}
}

\begin{table}[tp] 
\caption{Experiment using remotely sensed geological data from AVIRIS and SCI simulation collection process. }
\vspace{-1mm}
\label{Tab: AVIRIS}
\centering
\resizebox{.45\textwidth}{!}{
\centering
\begin{tabular}{l|rrr} 
\hline
Methods & SRN~\cite{wang2021new} & MST~\cite{cai2022mask} & $S^2$-Transformer (Ours)  \\
\hline
PSNR (dB) & 34.95 & 36.66 & \textbf{37.07}\\
SSIM   & 0.9545 & 0.9589 & \textbf{0.9608}\\
\hline
\end{tabular}}
\vspace{-1mm}
\end{table}

\begin{table}[t] 
\footnotesize
\caption{Performance comparison between the proposed $S^2$-Transformer and recent state-of-the-art methods by PSNR (dB) and SSIM. BIRNAT~\cite{cheng2020birnat} employs an end-to-end bidirectional RNN architecture. DAUHST~\cite{cai2022degradation} and PADUT~\cite{li2023pixel} are deep-unfolding methods. }
\vspace{-1mm}
\label{tab: compare_sota}
\centering
\resizebox{0.49\textwidth}{!}{
\centering
\begin{tabular}{lc|cc} 
    \hline
    \multicolumn{2}{l|}{Methods}  & PSNR (dB) & SSIM \\
    \hline
    \multicolumn{2}{l|}{BIRNAT~\cite{cheng2020birnat}} 	& 37.58	& 0.960\\  
    \hline
    \multirow{2}{*}{PADUT~\cite{li2023pixel}} & 3-stage  & 36.95 & 0.962 \\ 
    & 12-stage   & 38.89 & 0.974 \\ 
    \hline
    \multirow{2}{*}{DAUHST~\cite{cai2022degradation}} 
    & 3-stage  & 37.21 & 0.959 \\ 
    & 9-stage  & 38.36 & 0.967 \\
    \hline 
    \cellcolor{cGrey} & \cellcolor{cGrey}3-stage & \cellcolor{cGrey}37.50 & \cellcolor{cGrey}0.962\\
     \multirow{-2}{*}{\cellcolor{cGrey}$S^2$-Trans-Unfolding (Ours) }  & \cellcolor{cGrey}9-stage  & \cellcolor{cGrey}38.74 & \cellcolor{cGrey}0.970\\
    \hline
\end{tabular}}
\vspace{-0.2cm}
\end{table}

\begin{table}[t] 
\footnotesize
\caption{Performance comparison between proposed method and recent methods on masked regions, \emph{i.e.}, scale both the predictions with the mask values $(\mathbf{1}-\mathbf{M})\widehat{\mathbf{F}}$ and ground truth $(\mathbf{1}-\mathbf{M}){\mathbf{F}}$. 
We compare the performance with averaged MSE. The proposed method demonstrates more accurate reconstruction performance on masked regions.}
\vspace{-1mm}
\label{tab: masked performance}
\centering
\resizebox{.43\textwidth}{!}{
\centering
\begin{tabular}{l|c} 
    \hline
    {Methods} &  Averaged MSE ($\times10^{-4}$) \\
    \hline
    {BIRNAT~\cite{cheng2020birnat}} 	& 2.65 \\  
    DAUHST~\cite{cai2022degradation} (3-stage)  
   &  2.60 \\ 
    PADUT~\cite{li2023pixel} (3-stage)   & 2.61\\ 
    $S^2$-Transformer (Ours)  &  2.44 \\ 
    \hline
\end{tabular}}
\end{table}

\subsection{Performance on AVIRIS data}\label{sec: AVIRIS}
{Besides the widely used HSI datasets of CAVE~\cite{yasuma2010generalized} and KAIST~\cite{choi2017high}. This work also collects a remotely sensed geological dataset captured by airborne visible/infrared imaging spectrometer (AVIRIS) instrument.  We simulate the SCI compression process and perform reconstruction using the proposed method. Specifically, we collect the Hyperspectral Infrared Imager data product provided by Jet Propulsion Laboratory\footnote{https://aviris.jpl.nasa.gov/index.html}. We collect $50$ data products captured with the spatial resolution of 60 meters (\texttt{hpc60}). For each data product, we random sample $256\times256$ patches during training for the data augmentation. The original data has $224$ bands ranging from $365$nm$\sim2496$nm. We select $28$ consecutive bands ranging from $404.6$nm$\sim667.6$nm, being consistent with the original spectrum range of simulated HSI data. We randomly select $5$ data products and crop $50$ non-overlapped patches as the testing dataset.  We simulate the compression process with the same coded aperture. The AVIRIS data serves as the ground truth for the quantitative computation. We train the proposed $S^2$-Transformer and MST~\cite{cai2022mask} for $150$ epochs and keep the other settings the same as Section~{\color{red}4.1}.  In Table~\ref{Tab: AVIRIS}, we compare reconstruction performance between different methods. The proposed $S^2$-Transformer outperforms MST by $0.41$dB/$0.0019$ and SRN by $2.12$dB/$0.0063$ in terms of the PSNR/SSIM.  
Notably, the remotely sensed geological data product generally takes up a large memory cost (\emph{e.g.}, requiring $3.4$Gb to store a $2372\times1084\times224$ data). The technology of SCI may be a potential solution for data compression. Our experiment shows that AVIRIS data with a high spatial resolution (\emph{e.g.}, $60$m ) can be reconstructed by SCI methods with high fidelity.
}

\begin{figure}[t] 
\centering 
\includegraphics[width=\columnwidth]{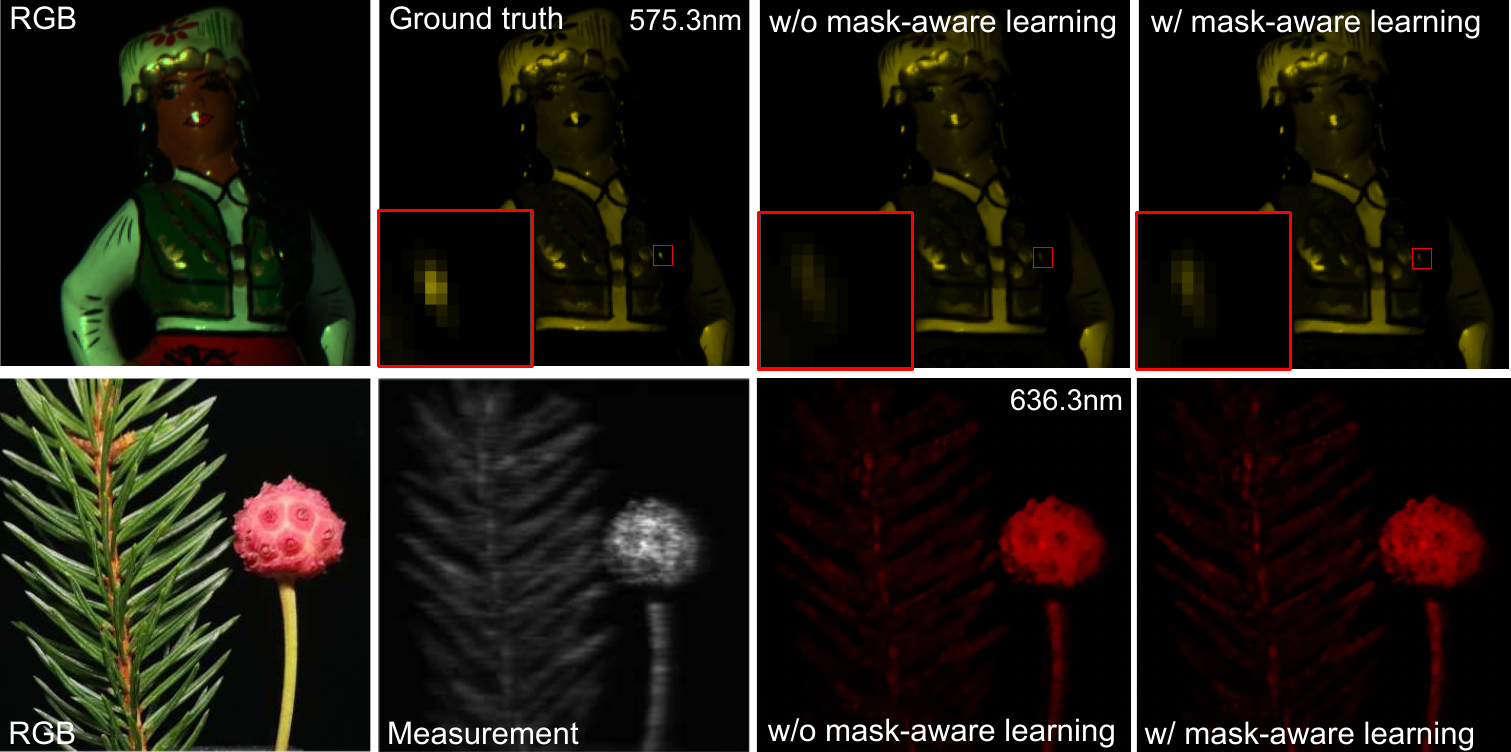}
\vspace{-5mm}
\caption{ Visual comparison of the proposed mask-aware learning. \textit{Top}: simulation data. \textit{Bottom}: real data. Note that the ground truth is unattainable in practice. Mask-aware learning enables more texture retrieval by comparison.}
\label{fig: mask aware visual result}
\vspace{-2mm}
\end{figure}

\begin{figure*}[t]
    \begin{minipage}{0.55\textwidth}
        \begin{table}[H] 
        \renewcommand\thetable{T\arabic{table}}
        \footnotesize
        \caption{Performance comparison between  $S^2$-Transformer and recent methods on specific samples, \emph{e.g.}, Scene 6 and Scene 8 (as shown on the right). The proposed method enables best PSNR and SSIM on these two cases, demonstrating that it can be particularly effective in specific spectral signatures and spatial patterns.}
        \vspace{-2mm}
        \label{tab: gt6 gt8}
        \centering
        \resizebox{\textwidth}{!}{
        \centering
        \begin{tabular}{lc|cc|cc} 
            \hline
            \multicolumn{2}{l|}{\multirow{2}{*}{Method}} & \multicolumn{2}{c|}{Scene 6}  & \multicolumn{2}{c}{Scene 8} \\
            \cline{3-6}
            & &  PSNR (dB) & SSIM & PSNR (dB) & SSIM \\ 
            \hline
            \multicolumn{2}{l|}{BIRNAT~\cite{cheng2020birnat}} & 35.30 & 0.959 & 33.96 & 0.956  \\  
            DAUHST~\cite{cai2022degradation} & (3-stage)  & 36.19 & 0.963 & 34.28 & 0.956  \\ 
            PADUT~\cite{li2023pixel} & (3-stage) & 35.58 & 0.965 & 33.76 & 0.960 \\ 
            \multicolumn{2}{l|}{$S^2$-Transformer (Ours)}  & \textbf{36.44} & \textbf{0.965} & \textbf{34.50} & \textbf{0.963}  \\ 
            \hline
        \end{tabular}}
        \vspace{-0.2cm}
        \end{table}
    \end{minipage}
    \hfill
    \begin{minipage}{0.2\textwidth}
    \vspace{6mm}
        \includegraphics[width=\textwidth]{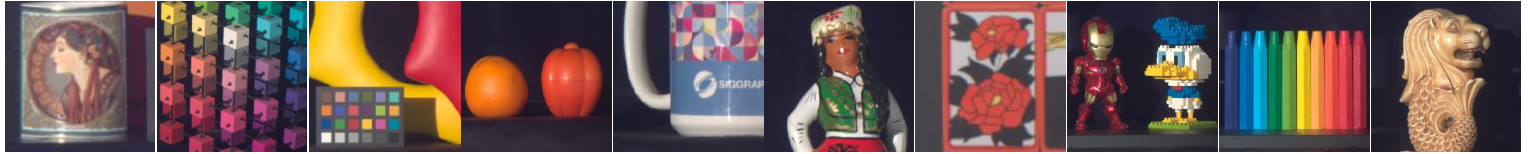}
        \centering
        {Scene 6 Ground Truth}
    \end{minipage}
    \hfill
    \begin{minipage}{0.2\textwidth}
        \vspace{5.4mm}
        \includegraphics[width=\textwidth]{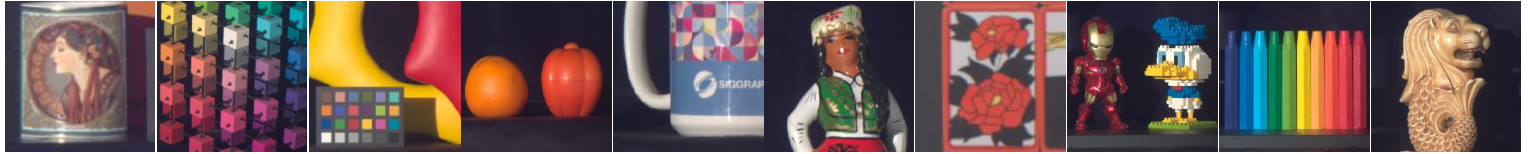}
        \centering\hspace{-4mm}
        {Scene 8 Ground Truth}
    \end{minipage}
\end{figure*}

\subsection{Comparing with Unfolding Methods}

{\textbf{Quantitative Comparison}. We further compare the proposed $S^2$-Transformer with recent state-of-the-art methods, including BIRNAT~\cite{cheng2020birnat}, DAUHST~\cite{cai2022degradation}, and PADUT~\cite{li2023pixel}. BIRNAT employs an end-to-end bidirectional recurrent network (RNN) architecture. Both DAUHST and PADUT are model-based deep unfolding methods that are scalable with different number of stages. 
As shown in Table~\ref{tab: compare_sota}.  The unfolding methods, \emph{e.g.}, DAUHST~\cite{cai2022degradation}, and PADUT~\cite{li2023pixel}, generally achieve state-of-the-art performances with a large number of stages, \emph{e.g.}, $9$-stage and $12$-stage, respectively.  }

{To better compare with above unfolding methods, we implement an $S^2$-Trans-Unfolding method by integrating the proposed transfomer backbone into the same unfolding framework of DAUHST. Table~\ref{tab: compare_sota} shows that the proposed $S^2$-Trans-Unfolding (3-stage) outperforms DAUHST (3-stage) by $0.29$dB/$0.003$ in PSNR/SSIM and PADUT (3-stage) by $0.55$dB in PSNR, respectively. The proposed $S^2$-Trans-Unfolding (9-stage)  improves  DAUHST (9-stage) by $0.38$dB/$0.003$ in terms of PSNR/SSIM and enables comparable performance with PADUT (12-stage).
Notably, both DAUHST~\cite{cai2022degradation} and PADUT~\cite{li2023pixel} have been published during the preparation and submission of this work. While the proposed $S^2$-Transformer has not yet been fully tailored into the unfolding framework, the current results show a great potential for incorporating the proposed $S^2$-Transformer in the unfolding framework in these methods. This is beyond the scope of our work here and we leave it for future work. }

{\textbf{Advantage Analysis}. The model-based methods integrates the deep network into the unfolding framework, which can be easily scaled by enlarging the number of stages. (as we shown in Table~\ref{tab: compare_sota}). 
Despite that above methods enables state-of-the-art performance, the proposed $S^2$-Transformer enjoys additional advantages. \textbf{First}, 
the proposed $S^2$-Transformer enables better reconstruction especially on masked regions compared with state-of-the-art methods, owning to the masked-aware learning. Table~\ref{tab: masked performance} shows that the proposed $S^2$-Transformer enables lower averaged MSE when solely focusing on the masked regions. This indicates that the proposed method can better restore the visual details that are missing caused by the CASSI mask encoding. 
\textbf{Second}, the proposed method can outperform state-of-the-art methods on specific spectral signatures and spatial patterns. In Table~\ref{tab: gt6 gt8}, we compare the performance of different methods on Scene 6 and Scene 8. The proposed $S^2$-Transformer enables highest PSNR and SSIM. Owning to the paralleled spatial-spectral attention design, $S^2$-Transformer can better preserve the sharp edges (also evidenced in figures in Table~\ref{tab: gt6 gt8}) and geometric structures (such as the complex boundaries in Lego pieces of duck in Scene 8). It can also be more sensitive to certain spectrum such as glowing colors (\emph{e.g.}, golden areas  shown in both  Scene 6 and 8). 
In summary, the proposed method not only studies the behaviour of spatial and spectral attentions, but also  
empirically and theocratically explores the impact of the mask-aware learning. Both of the designs can enhance the performance and the interpretability of the end-to-end network methods from the physics-driven perspective. Based on above insights, the proposed method provides a better chance to  bridge the performance gap between existing end-to-end methods and state-of-the-art model-based methods.}

\section{Conclusion}
\label{sec: conclusion}
In this work, 
we revealed two-fold data loss that impede a high-fidelity reconstruction of hyperspectral images by observing the physical encoding procedure of CASSI. For the entangled data loss, we resorted to exploit the hyperspectral image characteristics for a better information compensation upon the 2D measurement. We introduced the $S^2$-Transformer by systematically discussing different self-attention mechanisms. The empirical evidence demonstrated the different functionalities of both spatial and spectral attention under a paralleled arrangement.  For the masked data loss, we identify and take advantage of the optical-induced data uncertainty in a pixel-wise manner by prioritizing the masked pixels adaptively throughout the training. We theoretically and empirically presented the convergence tendencies of the proposed learning strategy upon both mask and unmasked regions. Our proposed method achieves superior performance compared with existing state-of-the-arts quantitatively and perceptually. Besides, this work serves as the first attempt to expedite the interpretability of designing deep reconstruction networks for HSI from the physical scope. We hope our discussions shed light on the future Transformer design for the HSI community.

% Can use something like this to put references on a page
% by themselves when using endfloat and the captionsoff option.
\ifCLASSOPTIONcaptionsoff
  \newpage
\fi

% \appendices
% \section{PROOF OF COROLLARY 1}
% % \appendix[PROOF OF COROLLARY 1]
% \label{appendix: proof}
% \begin{proof}

% \end{proof}
% or
%\appendix  % for no appendix heading
% do not use \section anymore after \appendix, only \section*
% is possibly needed

% use appendices with more than one appendix
% then use \section to start each appendix
% you must declare a \section before using any
% \subsection or using \label (\appendices by itself
% starts a section numbered zero.)
%

% \appendices
% \section{Appendix}
% Appendix one text goes here.

% you can choose not to have a title for an appendix
% if you want by leaving the argument blank
% \section{}
% Appendix two text goes here.

% % use section* for acknowledgment
% \ifCLASSOPTIONcompsoc
%   % The Computer Society usually uses the plural form
%   \section*{Acknowledgments}
% \else
%   % regular IEEE prefers the singular form
%   \section*{Acknowledgment}
% \fi

% The authors would like to thank...

% Can use something like this to put references on a page
% by themselves when using endfloat and the captionsoff option.
\ifCLASSOPTIONcaptionsoff
  \newpage
\fi

% trigger a \newpage just before the given reference
% number - used to balance the columns on the last page
% adjust value as needed - may need to be readjusted if
% the document is modified later
%\IEEEtriggeratref{8}
% The "triggered" command can be changed if desired:
%\IEEEtriggercmd{\enlargethispage{-5in}}

% references section

% can use a bibliography generated by BibTeX as a .bbl file
% BibTeX documentation can be easily obtained at:
% http://mirror.ctan.org/biblio/bibtex/contrib/doc/
% The IEEEtran BibTeX style support page is at:
% http://www.michaelshell.org/tex/ieeetran/bibtex/
\bibliographystyle{IEEEtran}
% argument is your BibTeX string definitions and bibliography database(s)
\bibliography{main}
%
% <OR> manually copy in the resultant .bbl file
% set second argument of \begin to the number of references
% (used to reserve space for the reference number labels box)

% biography section
% 
% If you have an EPS/PDF photo (graphicx package needed) extra braces are
% needed around the contents of the optional argument to biography to prevent
% the LaTeX parser from getting confused when it sees the complicated
% \includegraphics command within an optional argument. (You could create
% your own custom macro containing the \includegraphics command to make things
% simpler here.)
%\begin{IEEEbiography}[{\includegraphics[width=1in,height=1.25in,clip,keepaspectratio]{mshell}}]{Michael Shell}
% or if you just want to reserve a space for a photo:

\begin{IEEEbiography}
[{\includegraphics[width=1in,height=1.25in,clip,keepaspectratio]{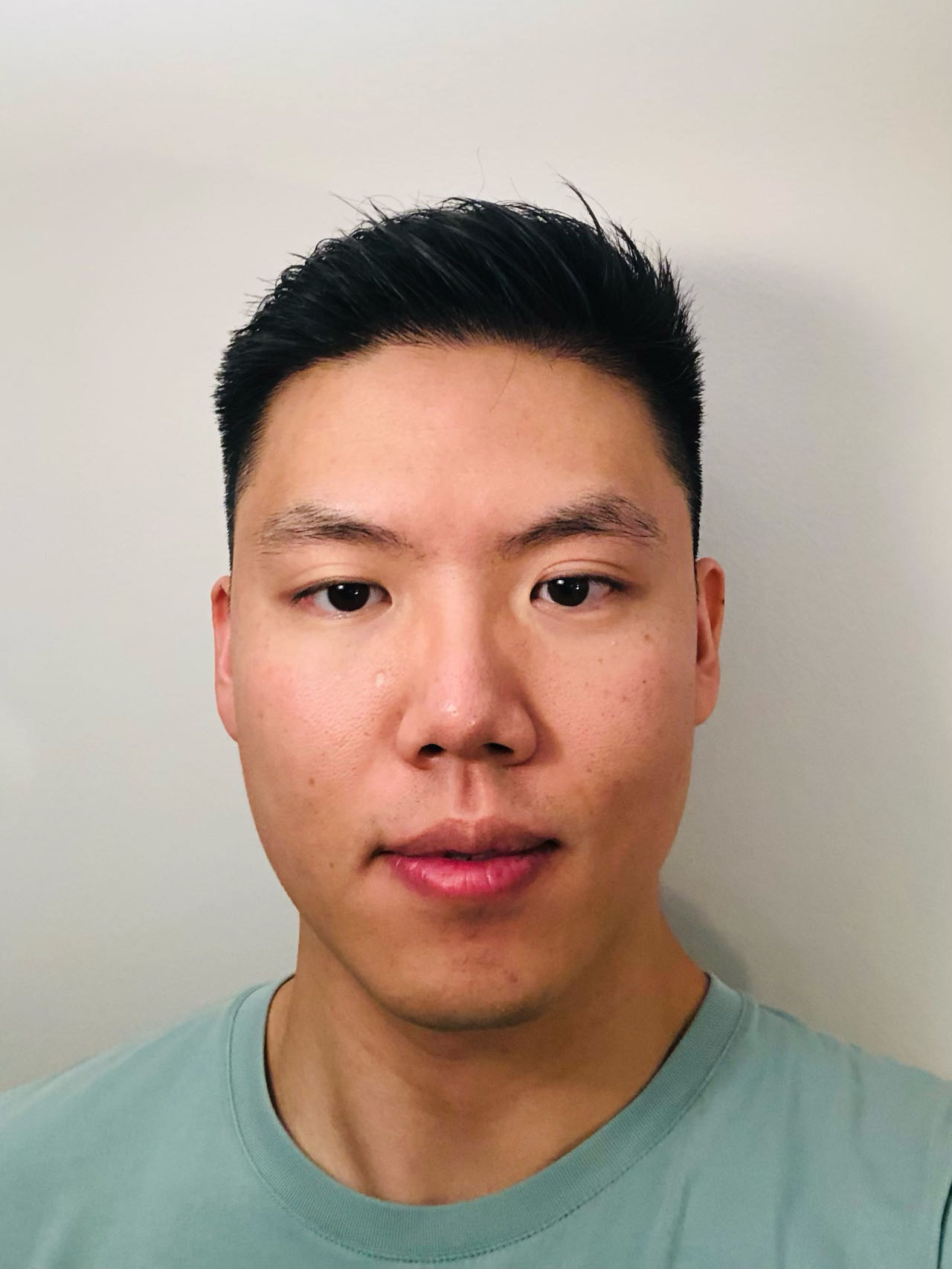}}]{Jiamian Wang}
received B.Eng. degree from electronic information engineering from Tianjin University, Tianjin, China, in 2018 and M.S. degree from University of Southern California, USA, in 2020. He is currently pursuing the Ph.D. degree with the Department of Computing and Information Sciences, Rochester Institute of Technology, USA. His research focuses on fine-grained visual understanding, with an emphasis on modeling visual uncertainty to enhance vision-language models and visual reconstruction tasks.
\end{IEEEbiography}

\vspace{-10mm}

\begin{IEEEbiography}[{\includegraphics[width=1in,height=1.25in,clip,keepaspectratio]{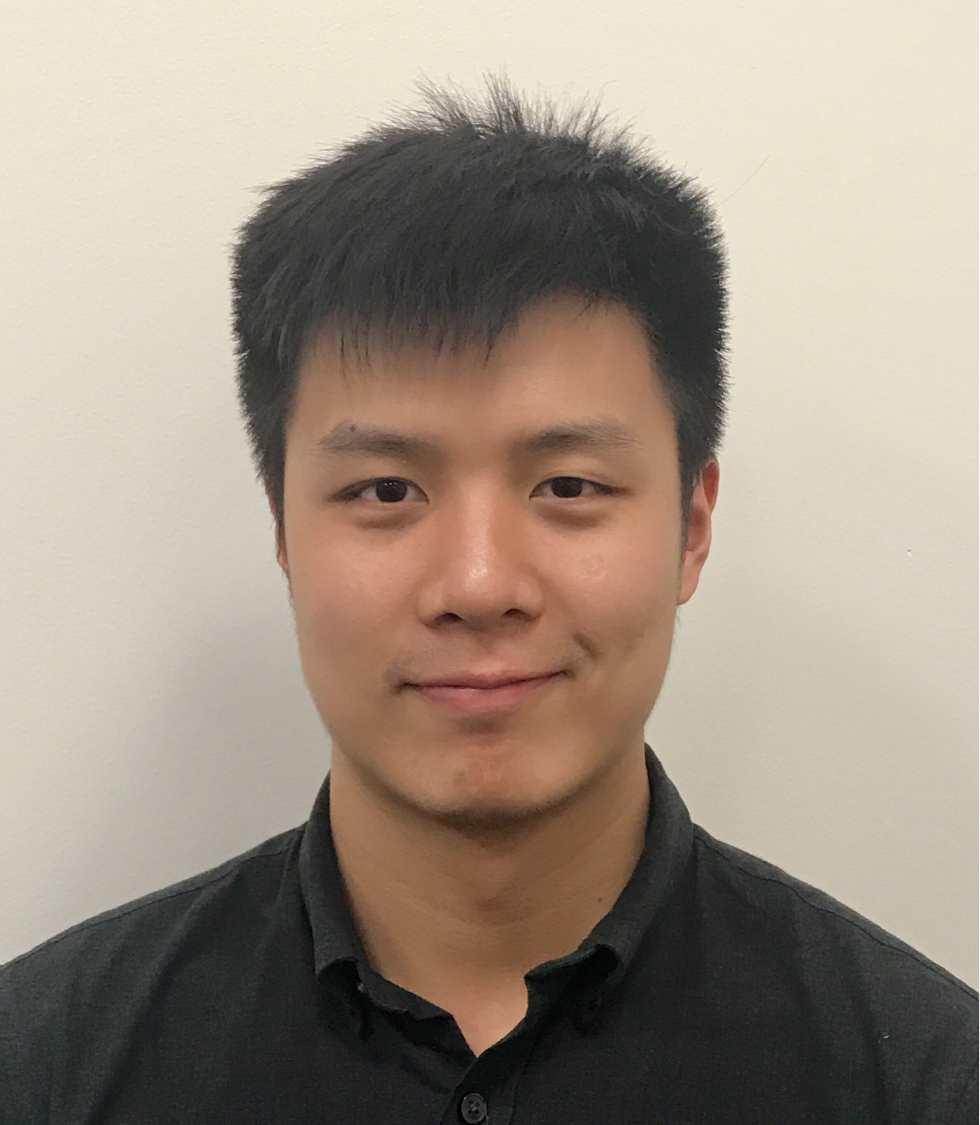}}]{Kunpeng Li} (S'18-M'21) received the B.Eng. degree in Information Engineering from South China University of Technology, China and Ph.D. degree in Computer Engineering, Northeastern University, Boston, MA. He is currently a Research Scientist at Meta's GenAI org, CA, USA. He has also spent time at Google Research, Adobe Research as a research intern. His research interests include learning with limited supervision, image and video generation foundation models, and multimodal learning.  
\end{IEEEbiography}

\vspace{-10mm}

\begin{IEEEbiography}[{\includegraphics[width=1in,height=1.25in,clip,keepaspectratio]{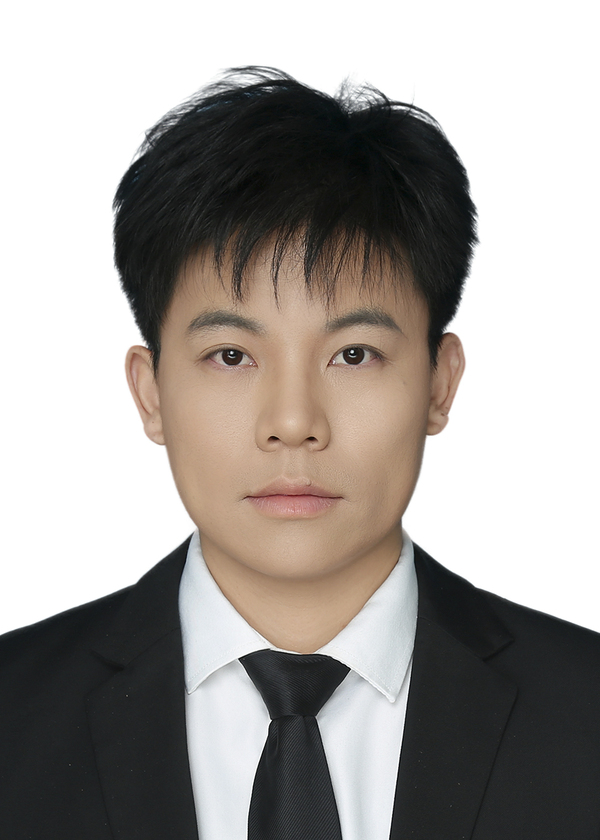}}]{Yulun Zhang} is a tenure-track associate professor at Shanghai Jiao Tong University. He was a postdoctoral researcher at Computer Vision Lab, ETH Zürich, Switzerland. He obtained the Ph.D. degree from the Department of ECE, Northeastern University, USA, in 2021. He also worked as a research fellow in Harvard University. Before that, he received the B.E. degree from the School of Electronic Engineering, Xidian University, China, in 2013 and the M.E. degree from the Department of Automation, Tsinghua University, China, in 2017. He has served as an area chair for CVPR, ICCV, ECCV, NeurIPS, ICML, ICLR, ACM MM, and IJCAI. His research interests include image/video restoration and synthesis, model compression, large language model, multimodal learning, and biomedical image analysis.
\end{IEEEbiography}

\vspace{-10mm}

\begin{IEEEbiography}[{\includegraphics[width=1in,height=1.25in,clip,keepaspectratio]{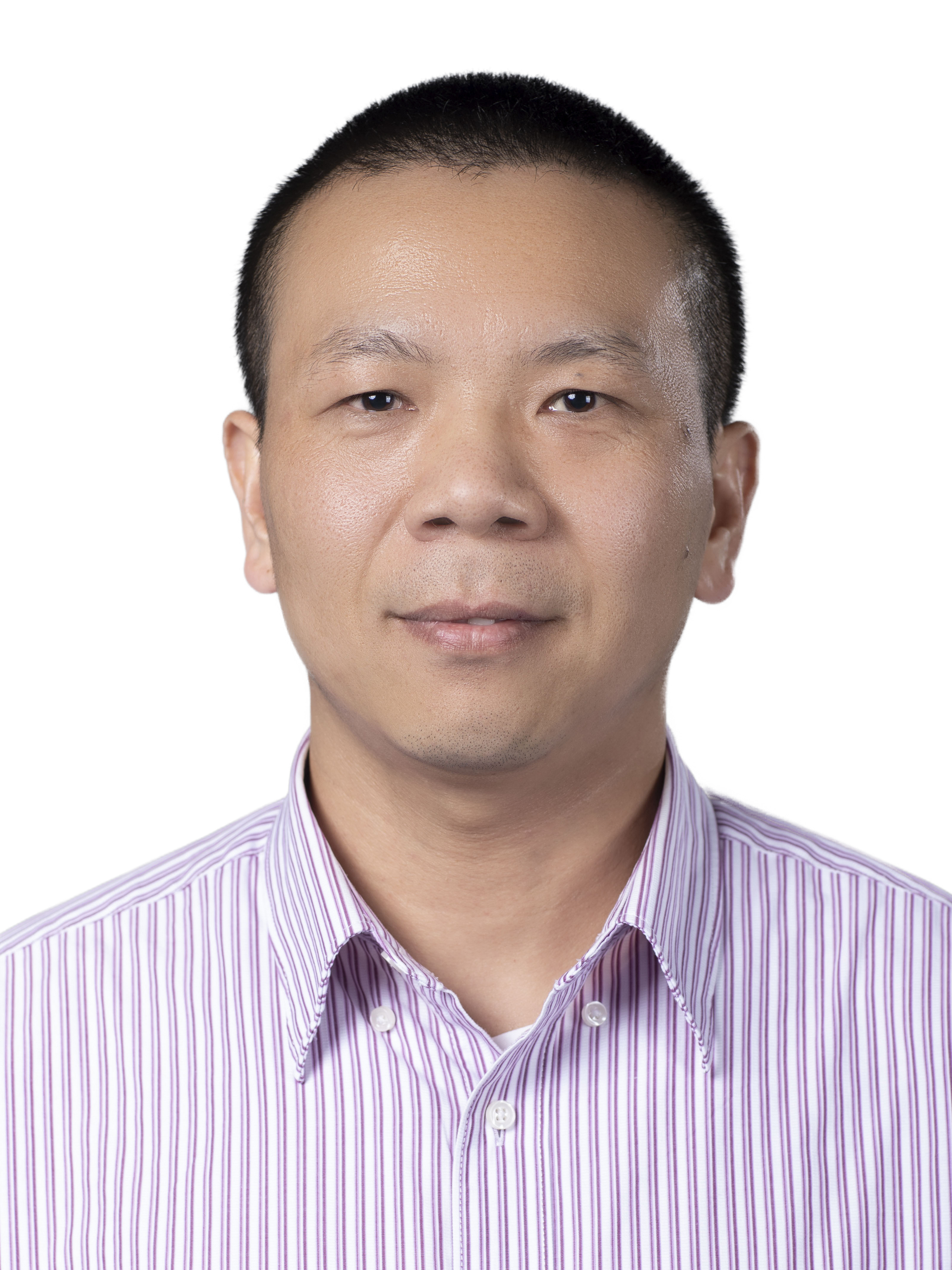}}]{Xin Yuan}
(SM'16) received the B.Eng. and M.Eng. degrees from Xidian University, in 2007 and 2009, respectively, and the Ph.D. from the Hong Kong Polytechnic University, in 2012. He is currently an Associate Professor at Westlake University. He was a video analysis and coding lead researcher at Bell Labs, Murray Hill, NJ, USA from 2015 to 2021. Prior to this, he was a Postdoctoral Associate in the Department of Electrical and Computer Engineering, Duke University from 2012 to 2015. His research interests are in signal processing, computational imaging and machine learning. He has been the receiving editor of Optics and Laser Technology since 2024, the Associate Editor of IEEE Transactions on Image Processing since 2025, Pattern Recognition since 2019, International Journal of Pattern Recognition and Artificial Intelligence since 2020, Chinese Optics Letters since 2021, and Advanced Imaging since 2024. He led the special issue of "Deep Learning for High Dimensional Sensing" in the IEEE Journal of Selected Topics in Signal Processing in 2022. He has been elected as the Optica Fellow in 2025 based on the seminal contributions to snapshot compressive computational imaging.
\end{IEEEbiography}

\vspace{-10mm}

\begin{IEEEbiography}[{\includegraphics[width=1in,height=1.25in,clip,keepaspectratio]{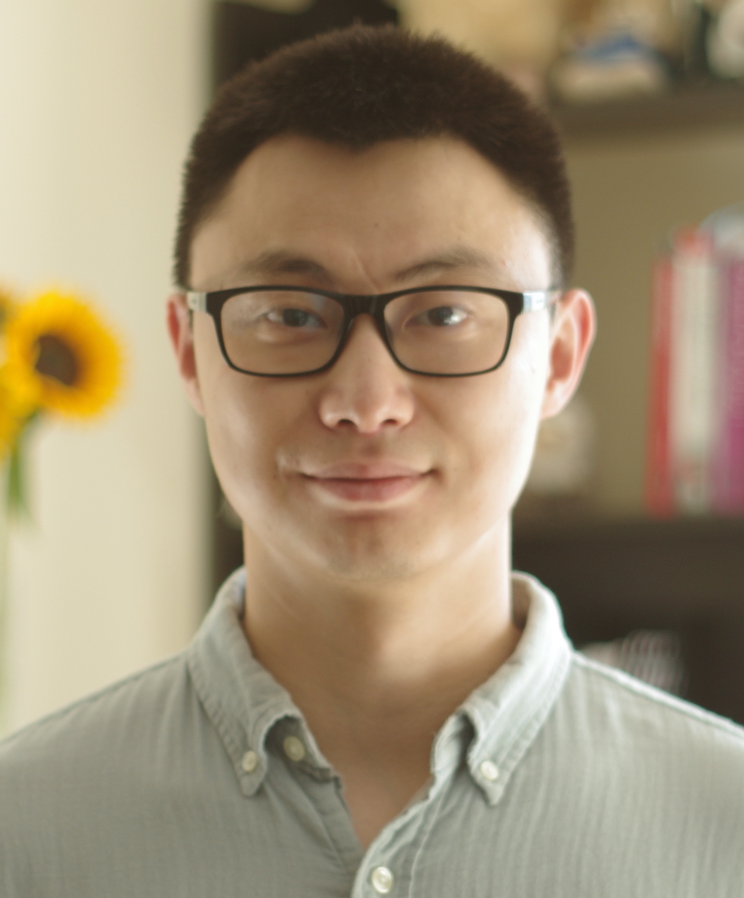}}]{Zhiqiang Tao} received the B.Eng. degree in software engineering from the School of Computer Software, and the M.Eng. degree in computer science from the School of Computer Science and Technology, Tianjin University, Tianjin, China, in 2012 and 2015, respectively. He received the Ph.D. degree from the Department of Electrical and Computer Engineering, Northeastern University, Boston MA, 2020. He has been a Tenure-Track Assistant Professor at the School of Information, Rochester Institute of Technology, since 2022. Before that, he was an Assistant Professor at the Department of Computer Science and Engineering, Santa Clara University, from 2020 to 2022. He has served as AEs of Neurocomputing (2022-) and IEEE TCSVT (2023-). His research interests include trustworthy machine learning, computational imaging, and large vision-language modeling. 
\end{IEEEbiography}

% if you will not have a photo at all:
% \begin{IEEEbiographynophoto}{John Doe}
% Biography text here.
% \end{IEEEbiographynophoto}

% % insert where needed to balance the two columns on the last page with
% % biographies
% %\newpage

% \begin{IEEEbiographynophoto}{Jane Doe}
% Biography text here.
% \end{IEEEbiographynophoto}

% You can push biographies down or up by placing
% a \vfill before or after them. The appropriate
% use of \vfill depends on what kind of text is
% on the last page and whether or not the columns
% are being equalized.

%\vfill

% Can be used to pull up biographies so that the bottom of the last one
% is flush with the other column.
%\enlargethispage{-5in}

% that's all folks
\end{document}